\documentclass[article,11pt]{article}

\usepackage{float,graphicx,verbatim,fullpage,hyperref,amssymb,amsmath,amsthm,enumerate,multicol, xspace,xcolor,mathtools,soul}
\usepackage[margin=1in]{geometry}
\usepackage{cite,enumitem}
\usepackage{subcaption,thm-restate}
\newcommand*{\email}[1]{\texttt{#1}}

\def\final{0}  
\def\iflong{\iffalse}
\ifnum\final=0  
\newcommand{\vnote}[1]{{\color{red}[{\tiny Vivek: \bf #1}]\marginpar{\color{red}*}}}
\newcommand{\knote}[1]{{\color{red}[{\tiny Karthik: \bf #1}]\marginpar{\color{red}*}}}
\newcommand{\krnote}[1]{{\color{red}[{\tiny Krist\'{o}f: \bf #1}]\marginpar{\color{red}*}}}
\newcommand{\tnote}[1]{{\color{red}[{\tiny Tam\'{a}s: \bf #1}]\marginpar{\color{red}*}}}
\newcommand{\todo}[1]{{\color{red}[{\tiny TODO: \bf #1}]\marginpar{\color{red}*}}}
\else 
\newcommand{\vnote}[1]{}
\newcommand{\knote}[1]{}
\newcommand{\krnote}[1]{}
\newcommand{\tnote}[1]{}
\newcommand{\enote}[1]{}
\newcommand{\todo}[1]{}
\fi

\newcommand{\para}[1]{\medskip \noindent {\bf #1}}


\newtheorem{theorem}{Theorem}[section]
\newtheorem{prop}[theorem]{Proposition}
\newtheorem{lemma}[theorem]{Lemma}
\newtheorem{claim}[theorem]{Claim}
\newtheorem{corollary}[theorem]{Corollary}

\theoremstyle{definition}
\newtheorem{definition}[theorem]{Definition}

\def\R{\mathbb{R}}

\def\Z{\mathbb{Z}}
\def\P{\mathcal{P}}
\def\prob{\text{Pr}}
\def\supp{\text{Support}}
\def\face{\text{Face}}
\def\closure{\text{Closure}}
\def\ourgap{1.20016}
\def\ctwo{0.4}
\def\c{c}
\def\VFace{\Delta_{3,n}}
\def\EFace{E_{3,n}}
\def\G{\mathcal{G}'}

\def\alphabound{0.66}
\def\cornercut{fragmenting cut\xspace}
\def\noncornercut{non-fragmenting cut\xspace}
\def\errorterm{\frac{1}{n}}

\DeclarePairedDelimiter\ceil{\lceil}{\rceil}
\DeclarePairedDelimiter\floor{\lfloor}{\rfloor}

\begin{document}
\title{Improving the Integrality Gap for Multiway Cut}
\author{
Krist\'{o}f B\'{e}rczi\thanks{MTA-ELTE Egerv\'{a}ry Research Group, Department of Operations Research, E\"otv\"os Lor\'and University, Budapest, Email: \email{\{berkri,tkiraly\}@cs.elte.hu}.}
\and
Karthekeyan Chandrasekaran\thanks{University of Illinois, Urbana-Champaign,  Email: \email{\{karthe,vmadan2\}@illinois.edu}. Vivek is supported by the NSF grant CCF-1319376.}
\and
Tam\'{a}s Kir\'{a}ly\footnotemark[1]
\and
Vivek Madan\footnotemark[2]
}
\date{}
\maketitle

\begin{abstract}
In the multiway cut problem, we are given an undirected graph with non-negative edge weights and a collection of $k$ terminal nodes, and the goal is to partition the node set of the graph into $k$ non-empty parts each containing exactly one terminal so that the total weight of the edges crossing the partition is minimized. The multiway cut problem for $k\ge 3$ is APX-hard.
For arbitrary $k$, the best-known approximation factor is $1.2965$ due to Sharma and Vondr\'{a}k \cite{SV14} while the best known inapproximability factor is $1.2$ due to Angelidakis, Makarychev and Manurangsi \cite{AMM17}. In this work, we improve on the lower bound to $\ourgap$ by constructing an integrality gap instance for the CKR relaxation.

A technical challenge in improving the gap has been the lack of geometric tools to understand higher-dimensional simplices.
Our instance is a non-trivial $3$-dimensional instance that overcomes this technical challenge. We analyze the gap of the instance by viewing it as a convex combination of $2$-dimensional instances and a uniform 3-dimensional instance.
We believe that this technique could be exploited further to construct instances with larger integrality gap.
One of the 
ingredients of our proof technique 
is a generalization of
a result on \emph{Sperner admissible labelings} due to Mirzakhani and Vondr\'{a}k \cite{MV15} that might be of independent combinatorial interest.
\end{abstract} 
\section{Introduction}
In the multiway cut problem, we are given an undirected graph with non-negative edge weights and a collection of $k$ terminal nodes and the goal is to find a minimum weight subset of edges to delete so that the $k$ input terminals cannot reach each other. For convenience, we will use \emph{$k$-way cut} to denote this problem when we would like to highlight the dependence on $k$ and
\emph{multiway cut} to denote this problem when $k$ grows with the size of the input graph.
The $2$-way cut problem is the classic minimum $\{s,t\}$-cut problem which is solvable in polynomial time.
For $k\ge 3$, Dahlhaus, Johnson, Papadimitriou, Seymour and Yannakakis \cite{DJPSY94} showed that the $k$-way cut problem is APX-hard
and gave a $(2-2/k)$-approximation.
Owing to its applications in partitioning and clustering, $k$-way cut has been an intensely investigated problem in the algorithms literature.
Several novel rounding techniques in the approximation literature were discovered to address the approximability of this problem.


The known approximability as well as inapproximability results are based on a linear programming relaxation, popularly known as the CKR relaxation in honor of the authors---C{\u{a}}linescu, Karloff and Rabani---who introduced it \cite{CKR00}. The CKR relaxation takes a geometric perspective of the problem. For a graph $G=(V,E)$ with edge weights $w:E\rightarrow \R_+$ and terminals $t_1,\ldots, t_k$, the CKR relaxation is given by
\begin{align*}
\min\ &\frac{1}{2}\sum_{e=\{u,v\}\in E}w(e)\|x^u-x^v\|_1\\
x^u&\in \Delta_k\ \forall\ u\in V,\\
x^{t_i}&= e^i\ \forall\ i\in [k],
\end{align*}
where $\Delta_k\coloneqq\{(x_1,\ldots,x_k)\in [0,1]^k:\ \sum_{i=1}^k x_i=1\}$ is the $(k-1)$-dimensional simplex and $e^i\in \{0,1\}^k$ is the extreme point of the simplex along the $i$-th coordinate axis, i.e., $e^i_j=1$ if and only if $j=i$.

C{\u{a}}linescu, Karloff and Rabani designed a rounding scheme for the relaxation which led to a $(3/2-1/k)$-approximation thus improving on the $(2-2/k)$-approximation by Dahlhaus et al. For $3$-way cut, Cheung, Cunningham and Tang \cite{CCT06} as well as Karger, Klein, Stein, Thorup and Young \cite{KKSTY04} designed alternative rounding schemes that led to a $12/11$-approximation factor and also exhibited matching integrality gap instances. We recall that the integrality gap of an instance to the LP is the ratio between the integral optimum value and the LP optimum value. Determining the exact integrality gap of the CKR relaxation for $k\ge 4$ has been an intriguing open question. After the results by Karger et al. and Cunningham et al., a rich variety of rounding techniques were developed to improve the approximation factor of $k$-way cut for $k\ge 4$ \cite{BNS13,SV14,BSW17}. The known approximation factor for multiway cut is $1.2965$ due to Sharma and Vondr\'{a}k \cite{SV14}.

On the hardness of approximation side, Manokaran, Naor, Raghavendra and Schwartz \cite{MNRS08} showed that the hardness of approximation for $k$-way cut is at least the integrality gap of the CKR relaxation  assuming the Unique Games Conjecture (UGC). More precisely, if the integrality gap of the CKR relaxation for $k$-way cut is $\tau_k$, then it is UGC-hard to approximate $k$-way cut within a factor of $\tau_k-\epsilon$ for every constant $\epsilon>0$. As an immediate consequence of this result, we know that the $12/11$-approximation factor for $3$-way cut is tight. 
For $k$-way cut, Freund and Karloff \cite{FK00} constructed an instance showing an integrality gap of $8/(7+(1/(k-1)))$. This was the best known integrality gap until last year when Angelidakis, Makarychev and Manurangsi \cite{AMM17} gave a remarkably simple construction showing an integrality gap of $6/(5+(1/(k-1)))$ for $k$-way cut. In particular, this gives an integrality gap of $1.2$ for multiway cut.

We note that the known upper and lower bounds on the approximation factor for multiway cut match only up to the first decimal digit and thus the approximability of this problem is far from resolved. Indeed Angelidakis, Makarychev and Manurangsi raise the question of whether the lower bound can be improved.
In this work, we improve on the lower bound by constructing an instance with integrality gap \ourgap.
\begin{theorem}\label{theorem:integrality-gap-lower-bound}
For every
constant $\epsilon>0$, there exists an instance of multiway cut such that the integrality gap of the CKR relaxation for that instance is at least $\ourgap-\epsilon$.
\end{theorem}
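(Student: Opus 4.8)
The plan is to exhibit a concrete instance of the CKR relaxation whose LP value is small relative to the integral optimum, and where the ratio tends to $\ourgap$ as $\epsilon\to 0$. Following the geometric viewpoint, I would work directly with the simplex $\Delta_k$ and describe the instance as a weighted graph on a fine grid of points inside $\Delta_k$ (so that edge weights encode a density, and the LP objective becomes an integral of $\ell_1$-length against that density). The key structural idea signalled by the abstract is that the instance is genuinely $3$-dimensional: I would take $k=4$ (points living in the $3$-dimensional simplex $\Delta_4$), and design a density that is a convex combination of ``lower-dimensional'' pieces — densities supported near the $2$-faces of $\Delta_4$, which behave like the known optimal $3$-way cut gap instance — together with a uniform $3$-dimensional piece. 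Writing $\mu = \sum_i \lambda_i \mu_i$ for this convex decomposition, the LP value of the combined instance is the corresponding convex combination of the LP values of the pieces, so the LP cost is easy to bound from above once each piece is understood.

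The crux is the lower bound on the integral optimum: I must show that \emph{every} multiway cut (equivalently, every partition of the grid points into $k$ terminal classes) has cost at least $\ourgap$ times the LP value. Dually, in the continuous picture an integral solution corresponds to a partition of $\Delta_k$ into $k$ regions, and its cost is the $\mu$-weighted surface area of the interfaces. The standard tool here is the ``Sperner/KKM'' type argument: any partition of the simplex into $k$ parts with $t_i$ in part $i$, subject to mild boundary conditions (the part containing $t_i$ must not touch the opposite face), must have a point — or in the fractional/averaged version, a substantial amount of surface — where many parts meet, and this forces a lower bound on the cut weight. This is exactly where the generalization of the Mirzakhani--Vondr\'ak result on Sperner-admissible labelings enters: I would use it to argue that for the uniform $3$-dimensional piece, and for the restrictions of any global partition to neighborhoods of the $2$-faces, the induced labelings are Sperner-admissible, so the known lower bounds for those pieces apply simultaneously. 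Summing the per-piece lower bounds with weights $\lambda_i$ then yields a lower bound on the cost of any integral solution of the combined instance.

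The main obstacle I anticipate is making the ``convex combination of instances'' argument work in the \emph{lower-bound} direction: while the LP value decomposes linearly, the integral optimum does not — a single global partition is being charged against all the pieces at once, and one must show that it cannot do better than $\ourgap$ times the LP value on average across the pieces. Handling this requires (i) carefully choosing the weights $\lambda_i$ and the geometry of the lower-dimensional pieces so that the boundary/Sperner hypotheses are automatically satisfied by the restriction of any global partition, and (ii) controlling the interaction between the uniform piece and the face pieces near the boundary of $\Delta_4$, where the analyses overlap. I would also need to discretize: replace $\Delta_4$ by a grid at scale $1/n$, verify that the LP and integral optima of the grid instance converge to their continuous counterparts up to an $O(1/n)$ error (the $\errorterm$-type terms in the macros suggest this is the intended bookkeeping), and then take $n$ large enough that the gap exceeds $\ourgap-\epsilon$. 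Numerically optimizing the free parameters of the construction to push the ratio up to $\ourgap$ is then a finite (if delicate) computation.
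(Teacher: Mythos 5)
There is a genuine gap in your plan, and it sits exactly where you wave at ``mild boundary conditions (the part containing $t_i$ must not touch the opposite face).'' That condition is the definition of a \emph{non-opposite} cut, and it is \textbf{not} satisfied by an arbitrary integral solution: a legitimate multiway cut of the $4$-terminal instance is free to assign a grid point $x$ with $x_4=0$ to the part of $t_4$. Your proposed crux --- showing that \emph{every} partition of the discretized $\Delta_4$ into four terminal classes costs at least $\ourgap$ times the LP value --- is in fact impossible: for $k=4$ the CKR relaxation is known to admit roundings with ratio well below $1.2$, so no $4$-terminal instance has integrality gap $\ourgap$. The Sperner/Mirzakhani--Vondr\'ak machinery you invoke genuinely requires (near-)admissibility of the labeling, and opposite cuts violate it; nothing in your construction forces a global partition to restrict to an admissible labeling on the faces.

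The paper circumvents this with two reductions that your proposal omits. First, Karger et al.\ reduce constructing an integrality gap instance to lower-bounding the maximum density $\tau_k^*$ of distributions over cuts of the simplex. Second, Angelidakis--Makarychev--Manurangsi show $\tilde{\tau}_{k,n}^*-\tau_K^*=O(kn/(K-k))$, i.e.\ a lower bound against only \emph{non-opposite} cuts on the discretized $\Delta_{k,n}$ lifts to a lower bound on $\tau_K^*$ for $K\gg k$. Only after these reductions is it legitimate to restrict attention to non-opposite cuts of a $4$-terminal, $3$-dimensional weighted grid --- which is what the paper's convex combination of four instances (the AMM $2$-dimensional instance on one facet, uniform weight on the facet's boundary edges, small ``red'' triangles near three corners, and a uniform $3$-dimensional instance) is analyzed against. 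The resulting gap instance witnessing Theorem~\ref{theorem:integrality-gap-lower-bound} therefore has $K=K(\epsilon)$ terminals with $K\to\infty$ as $\epsilon\to 0$ (one takes $n=\Theta(\sqrt{K})$); it is not the $4$-terminal instance itself. Your description of the weighting as a convex combination of lower-dimensional pieces plus a uniform piece is in the right spirit for the construction, but without the two reductions the lower-bound half of your argument cannot be completed.
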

The above result in conjunction with the result of Manokaran et al. immediately implies that multiway cut is UGC-hard to approximate within a factor of $\ourgap-\epsilon$ for every constant $\epsilon>0$.

One of the ingredients of our technique underlying the proof of Theorem \ref{theorem:integrality-gap-lower-bound} is a generalization of a result on \emph{Sperner admissible labelings} due to Mirzakhani and Vondr\'{a}k \cite{MV15} that might be of independent combinatorial interest (see Theorem \ref{thm:nonmon}).

\section{Background and Result}
Before outlining our techniques, we briefly summarize the background literature that we build upon to construct our instance. We rely on two significant results from the literature.
In the context of the $k$-way cut problem, a \emph{cut} is a function $P:\Delta_k\rightarrow [k+1]$ such that $P(e^i)=i$ for all $i\in [k]$, where we use the notation $[k]:=\{1,2,\dots,k\}$. 
The use of $k+1$ labels as opposed to $k$ labels to describe a cut is a bit non-standard, but is useful for reasons that will become clear later on. 
The approximation ratio $\tau_k(\P)$ of a distribution $\P$ over cuts is given by its \emph{maximum density}:
\[
\tau_k(\P)\coloneqq \sup_{x,y\in \Delta_k, x\neq y} \frac{\prob_{P\sim \P}(P(x)\neq P(y))}{(1/2)\|x-y\|_1}.
\]
Karger et al. \cite{KKSTY04} define 
\[
\tau_k^* \coloneqq  \inf_{\P} \tau_k(\P),
\]
and moreover showed that there exists $\P$ that achieves the infimum. Hence, $\tau_k^*=\min_{\P}\tau_k(\P)$. 
With this definition of $\tau_k^*$, Karger et al. \cite{KKSTY04} showed that for every $\epsilon>0$, there is an instance of multiway cut with $k$ terminals for which the integrality gap of the CKR relaxation is at least $\tau_k^*-\epsilon$. 
Thus, Karger et al.'s result reduced the problem of constructing an integrality gap instance for multiway cut to proving a lower bound on $\tau_k^*$.

Next, Angelidakis, Makarychev and Manurangsi \cite{AMM17} reduced the problem of lower bounding $\tau_k^*$ further by showing that it is sufficient to restrict our attention to \emph{non-opposite cuts} as opposed to all cuts. A cut $P$ is a \emph{non-opposite cut} if $P(x)\in \supp(x)\cup \{k+1\}$ for every $x\in \Delta_k$. Let $\Delta_{k,n}\coloneqq \Delta_k\cap ((1/n) \Z)^k$.
For a distribution $\P$ over cuts, let
\begin{align*}
\tau_{k,n}(\P)&\coloneqq \max_{x,y\in \Delta_{k,n}, x\neq y} \frac{\prob_{P\sim \P}(P(x)\neq P(y))}{(1/2)\|x-y\|_1},\text{ and }\\
\tilde{\tau}_{k,n}^*&\coloneqq \min\{\tau_{k,n}(\P):\P\text{ is a distribution over non-opposite cuts}\}.
\end{align*}
Angelidakis, Makarychev and Manurangsi showed that $\tilde{\tau}_{k,n}^*-\tau_K^* = O(kn/(K-k))$ for all $K>k$. Thus, in order to lower bound $\tau_K^*$, it suffices to lower bound $\tilde{\tau}_{k,n}^*$. That is, it suffices to construct an instance that has \emph{large integrality gap against non-opposite cuts}.

As a central contribution, Angelidakis, Makarychev and Manurangsi constructed an instance showing that $\tilde{\tau}_{3,n}^*\ge 1.2-O(1/n)$. Now, by setting $n=\Theta(\sqrt{K})$, we see that $\tau_K^*$ is at least $1.2-O(1/\sqrt{K})$.
Furthermore, they also showed that their lower bound on $\tilde{\tau}_{3,n}^*$ is almost tight, i.e., $\tilde{\tau}_{3,n}^*\le 1.2$. The salient feature of this framework
is that in order to improve the lower bound on $\tau_K^*$, it suffices to improve $\tilde{\tau}_{k,n}^*$
 for some $4\leq k < K$.

The main technical challenge towards improving $\tilde{\tau}_{4,n}^*$ is that one has to deal with the $3$-dimensional simplex $\Delta_4$. Indeed, all known gap instances including that of Angelidakis, Makarychev and Manurangsi are constructed using the $2$-dimensional simplex. In the $2$-dimensional simplex, the properties of non-opposite cuts are easy to visualize and their cut-values are convenient to characterize using simple geometric observations. However, the values of non-opposite cuts in the $3$-dimensional simplex become difficult to characterize. Our main contribution is a simple argument based on properties of lower-dimensional simplices that overcomes this technical challenge. We construct a $3$-dimensional instance that has gap larger than 1.2 against non-opposite cuts.

\begin{theorem}\label{thm:4-way-cut-gap-against-non-opposite-cuts}
$\tilde{\tau}_{4,n}^*\ge \ourgap - O(1/n)$.
\end{theorem}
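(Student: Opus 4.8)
The plan is to construct an explicit weighted instance on the discretized $3$-dimensional simplex $\Delta_{4,n}$ and to lower bound the maximum density $\tau_{4,n}(\P)$ for \emph{every} distribution $\P$ over non-opposite cuts. Concretely, I would place a node at every point of $\Delta_{4,n}$, connect pairs of nodes that differ by an elementary lattice step with carefully chosen weights, and then argue that no non-opposite cut can simultaneously be cheap on all the ``local'' directions. Since $\tilde\tau_{4,n}^*$ is an infimum over $\P$ and the infimum is attained, it suffices to show that for the chosen weights, $\sum_{e}w(e)\,\Pr_{P\sim\P}(P(x_e)\neq P(y_e)) \ge (\ourgap-O(1/n))\cdot \tfrac12\sum_e w(e)\|x_e-y_e\|_1$ for all non-opposite $\P$; by linearity of expectation this reduces to a statement about a single (deterministic) non-opposite cut $P$, and then we average.

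The key steps, in order. First, reduce to deterministic non-opposite cuts: the cut ratio is linear in $\P$, so it is enough to prove the weighted edge-cut bound for each non-opposite cut $P:\Delta_{4,n}\to[5]$ individually. Second, set up the decomposition advertised in the abstract: write the target instance as a convex combination of (a) several $2$-dimensional instances supported on the four facets $\Delta_{3,n}$ of $\Delta_{4,n}$ — each a scaled copy of (a variant of) the Angelidakis--Makarychev--Manurangsi instance, for which we already know the gap is $\ge 1.2-O(1/n)$ against non-opposite cuts — and (b) a ``uniform'' genuinely $3$-dimensional instance built from all short lattice edges inside $\Delta_{4,n}$. Third, bound the contribution of the uniform piece: a non-opposite cut restricted to the $3$-simplex either ``fragments'' the simplex (our \cornercut) or not; in the fragmenting case one uses an isoperimetric/Sperner-type counting argument to show the cut must be expensive, and in the non-fragmenting case the cut essentially lives on the boundary, so its cost is governed by the facet instances. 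The generalization of Mirzakhani--Vondr\'ak on Sperner admissible labelings (Theorem~\ref{thm:nonmon}) is invoked precisely here, to handle labelings that need not be monotone. Fourth, optimize the mixing coefficients between the facet instances and the uniform instance so that the worst case over the two regimes (fragmenting vs.\ non-fragmenting, and which facet is ``charged'') is balanced; this is where the constant \ourgap{} comes out, and one checks the arithmetic gives $\ourgap-O(1/n)$ after accounting for boundary/discretization losses of order $1/n$.

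I expect the main obstacle to be controlling the genuinely $3$-dimensional uniform instance against \emph{arbitrary} non-opposite cuts — in particular the fragmenting case. Unlike the $2$-dimensional situation, where a non-opposite cut is easy to visualize and its density is pinned down by elementary geometry, in $\Delta_4$ the label classes of a non-opposite cut can interleave in complicated ways, and the naive boundary-counting bound is too weak. The resolution is the strengthened Sperner admissibility result: it shows that any non-opposite labeling of the fine triangulation of $\Delta_{4,n}$ that avoids a large cut must have a very restricted structure (essentially confining the ``extra'' label $k+1$ and forcing the remaining labels to behave like a boundary cut), which is exactly what makes the convex-combination bookkeeping close. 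Secondary obstacles are purely technical: choosing the edge weights of the uniform instance so that its integral optimum is easy to evaluate (a single terminal's star, giving the ``$1$'' in the gap ratio after normalization), and making sure all the $O(1/n)$ error terms — from the discretization $\Delta_{4,n}$, from the facet instances, and from edges straddling facets — are genuinely lower-order and do not erode the constant.
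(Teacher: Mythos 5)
Your high-level skeleton matches the paper: reduce to a single weighted instance and a single deterministic non-opposite cut (Proposition \ref{prop:weight-to-gap}), write the instance as a convex combination of $2$-dimensional pieces and a uniform $3$-dimensional piece, split into cases according to whether the restriction of the cut to a facet is a \cornercut, use the Sperner-type extension (Theorem \ref{thm:nonmon} via Lemma \ref{lemma:non-opposite-cuts-in-discretezed-simplex}) to make the uniform instance expensive, and optimize the mixing coefficients. However, there is a genuine gap: the decomposition you describe --- scaled copies of the Angelidakis--Makarychev--Manurangsi instance on the facets plus the uniform instance --- is provably insufficient to beat $1.2$. The cut that labels every non-terminal node with $5$ costs exactly $1.2$ on the AMM facet instance and only $O(1/n^2)$ on the uniform instance, while the cut that uses an optimal $2$-dimensional cut on one facet and labels everything off that facet with $5$ costs $1.2+O(1/n)$ on the AMM piece and only $3/2+O(1/n^2)$ on the uniform piece; Section \ref{sec:insufficiency-of-3-instances} shows that no convex combination of such pieces is simultaneously above $1.2$ against both cuts. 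The paper therefore needs two further instances that your proposal does not contain: $I_2$, supported on the boundary edges $L_{12}\cup L_{13}\cup L_{23}$, which charges fragmenting cuts twice, and $I_3$, supported on three small red triangles of side $c$ near the corners of one facet, which is designed precisely to defeat the all-$5$ cut.

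Moreover, the interaction between the Sperner bound and the facet instance is a quantitative trade-off that your outline does not supply. Lemma \ref{lemma:non-opposite-cuts-in-discretezed-simplex} lower bounds $|\delta(P)|$ by $3\alpha n(n+1)$, where $\alpha$ measures how many facet nodes receive an admissible label, so the uniform instance is cheap exactly when many facet nodes are labeled $5$. To close the argument one must show that in that regime the facet instance becomes strictly more expensive than $1.2$; this is the content of Lemma \ref{lemma:3-way-instance-properties} (cost at least $1.2+0.4\alpha-1/n$ for non-fragmenting cuts with $\alpha n^2$ nodes labeled $4$) and Corollary \ref{coro:red-island-cut-cost} (the analogous statement for fragmenting cuts that avoid the red triangles). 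These are new structural facts about the AMM instance, proved by a nontrivial surgery on the cut, and they do not follow from the known $1.2-O(1/n)$ bound that you invoke. Without $I_2$, $I_3$, and these two strengthenings, the balancing step in your fourth item cannot produce any constant above $1.2$, let alone $\ourgap$.
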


Theorem \ref{theorem:integrality-gap-lower-bound} follows from Theorem \ref{thm:4-way-cut-gap-against-non-opposite-cuts} using the above arguments.

\section{Outline of Ideas}
Let $G=(V,E)$ be the graph with node set $\Delta_{4,n}$ and edge set $E_{4,n}:=\{xy:x,y\in \Delta_{4,n}, \|x-y\|_1=2/n\}$, where the terminals are the four unit vectors. 
In order to lower bound $\tilde{\tau}_{4,n}^*$, we will come up with weights on the edges of $G$ such that every non-opposite cut has cost at least $\alpha=\ourgap$ and moreover the cumulative weight of all edges is $n+O(1)$. 
This suffices to lower bound $\tilde{\tau}_{4,n}^*$ by the following proposition.
\begin{prop}\label{prop:weight-to-gap}
Suppose that there exist weights $w:E_{4,n}\rightarrow \R_{\ge 0}$ on the edges of $G$ such that every non-opposite cut has cost at least $\alpha$ and the cumulative weight of all edges is $n+O(1)$. Then, $\tilde{\tau}_{4,n}^*\ge \alpha-O(1/n)$.
\end{prop}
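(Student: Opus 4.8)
The plan is to simply unfold the definition of $\tilde\tau_{4,n}^*$ and show that the supplied weighting $w$ certifies the claimed bound against \emph{every} distribution over non-opposite cuts. Fix an arbitrary distribution $\P$ over non-opposite cuts. The first step is to restrict the maximum in the definition of $\tau_{4,n}(\P)$ from all pairs $x,y \in \Delta_{4,n}$ to pairs that form an edge of $G$, i.e.\ pairs with $\|x-y\|_1 = 2/n$; this can only decrease the quantity, and on such pairs the denominator $(1/2)\|x-y\|_1$ is exactly $1/n$. Thus
\[
\tau_{4,n}(\P) \;\ge\; n\cdot \max_{xy\in E_{4,n}} \prob_{P\sim\P}\bigl(P(x)\neq P(y)\bigr).
\]

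The second step replaces the maximum by the $w$-weighted average over $E_{4,n}$, using the elementary fact that a maximum of nonnegative quantities is at least any weighted average of them:
\[
\tau_{4,n}(\P) \;\ge\; n\cdot \frac{\sum_{xy\in E_{4,n}} w(xy)\,\prob_{P\sim\P}\bigl(P(x)\neq P(y)\bigr)}{\sum_{e\in E_{4,n}} w(e)}.
\]
The crucial third step is to interchange the sum over edges with the expectation over $P\sim\P$: by linearity of expectation, the numerator equals $\E_{P\sim\P}\bigl[\sum_{xy\in E_{4,n}} w(xy)\,\mathbb{1}[P(x)\neq P(y)]\bigr]$, which is precisely the expected cost (in the CKR objective) of the cut $P$ under the weighting $w$. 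Since $\P$ is supported on non-opposite cuts and every non-opposite cut has cost at least $\alpha$ by hypothesis, this expectation is at least $\alpha$.

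Finally, the denominator $\sum_{e\in E_{4,n}} w(e)$ equals $n+O(1)$ by hypothesis, so $n/(n+O(1)) = 1 - O(1/n)$, and we obtain $\tau_{4,n}(\P) \ge \alpha\bigl(1 - O(1/n)\bigr) = \alpha - O(1/n)$, absorbing the constant $\alpha$ into the error term. As this holds for every distribution $\P$ over non-opposite cuts, taking the minimum yields $\tilde\tau_{4,n}^* \ge \alpha - O(1/n)$. I do not expect any real obstacle: the argument is just ``max $\ge$ weighted average'' plus linearity of expectation. The only points requiring care are that $G$ in the statement is the graph defined immediately above (so edges are \emph{exactly} the $\ell_1$-distance-$2/n$ pairs, making $(1/2)\|x-y\|_1 = 1/n$ an equality rather than an estimate), and that ``cost of a cut $P$'' is understood as $\sum_{xy\in E_{4,n}} w(xy)\,\mathbb{1}[P(x)\neq P(y)]$, which matches the CKR objective on the $0/1$ assignment induced by $P$.
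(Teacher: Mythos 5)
Your proposal is correct and follows essentially the same argument as the paper: restrict the maximum in $\tau_{4,n}(\P)$ to edges of $E_{4,n}$, lower-bound the maximum by the $w$-weighted average, swap the sum with the expectation to invoke the hypothesis that every non-opposite cut has cost at least $\alpha$, and divide by the total weight $n+O(1)$. Your write-up merely makes the linearity-of-expectation step explicit where the paper compresses it into a single inequality.
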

\begin{proof}
For an arbitrary distribution $\P$ over non-opposite cuts, we have 
\begin{align*}
\tau_{k,n}(\P)& = \max_{x,y\in \Delta_{k,n}, x\neq y} \frac{\prob_{P\sim \P}(P(x)\neq P(y))}{(1/2)\|x-y\|_1}\\
 &\geq \max_{xy\in E_{4,n}} \frac{\prob_{P\sim \P}(P(x)\neq P(y))}{(1/2)\|x-y\|_1}\\
 &= \max_{xy\in E_{4,n}} \frac{\prob_{P\sim \P}(P(x)\neq P(y))}{1/n}\\
 & \geq \sum_{xy\in E_{4,n}} \frac{w(xy)\prob_{P\sim \P}(P(x)\neq P(y))}{(1/n)(\sum_{e \in E_{4,n}} w(e))}\\
 & \geq \frac{\alpha}{1+O(1/n)}=\alpha -O(1/n),
\end{align*}
where the last inequality follows from the hypothesis that every non-opposite cut has cost at least $\alpha$ and the cumulative weight of all edges is $n+O(1)$. 
\end{proof}

We obtain our weighted instance from four instances that have large gap against different types of cuts, and then compute the convex combination of these instances that gives the best gap against all non-opposite cuts.

All of our four instances are defined as edge-weights on the graph $G=(V,E)$. We identify $\Delta_{3,n}$ with the facet of $\Delta_{4,n}$ defined by $x_4=0$. Our first three instances are 2-dimensional instances, i.e.\ only edges induced by $\Delta_{3,n}$ have positive weight. The fourth instance has uniform weight on $E_{4,n}$.

We first explain the motivation behind Instances 1,2, and 4, since these are easy to explain. Let
\begin{align*}
L_{ij}&:=\{xy\in E_{4,n}: \supp(x), \supp(y)\subseteq \{i,j\}\}.
\end{align*}
\begin{itemize}
  \item Instance 1 is simply the instance of Angelidakis, Makarychev and Manurangsi \cite{AMM17} on $\Delta_{3,n}$. It has gap $1.2-\errorterm$ against all non-opposite cuts, since non-opposite cuts in $\Delta_{4,n}$ induce non-opposite cuts on $\Delta_{3,n}$. Additionally, we show in Lemma \ref{lemma:3-way-instance-properties} that the gap is strictly larger than $1.2$ by a constant if the following two conditions hold:
      \begin{itemize}
      \item there exist $i,j \in [3]$ such that $L_{ij}$ contains only one edge whose end-nodes have different labels (a cut with this property is called a \emph{\noncornercut}), and
      \item $\Delta_{3,n}$ has a lot of nodes with label 5.
      \end{itemize}
  \item Instance 2 has uniform weight on $L_{12}$, $L_{13}$ and $L_{23}$, and 0 on all other edges. Here, a cut in which each $L_{ij}$ contains at least two edges whose end-nodes have different labels (a \emph{\cornercut}) has large weight. Consequently, this instance has gap at least $2$ against such cuts.
  \item Instance 4 has uniform weight on all edges in $E_{4,n}$. A beautiful result due to Mirzakhani and Vondr\'{a}k \cite{MV15} implies that non-opposite cuts with no node of label 5 have large weight. Consequently, this instance has gap at least $3/2$ against such cuts. We extend their result in Lemma \ref{lemma:non-opposite-cuts-in-discretezed-simplex} to show that the weight remains large if $\Delta_{3,n}$ has few nodes with label 5.
\end{itemize}

At first glance, the arguments above seem to imply that a convex combination of these three instances already gives a gap strictly larger than 1.2 for all non-opposite cuts. 
However, there exist two non-opposite cuts such that at least one of them has cost at most 1.2 in every convex combination of these three instances (see Section \ref{sec:insufficiency-of-3-instances}). 
One of these two cuts is a \cornercut that has almost zero cost in Instance 4 and the best possible cost, namely 1.2, in Instance 1. 
Instance $3$ is constructed specifically to boost the cost against this non-opposite cut. It has positive uniform weight on 3 equilateral triangles, incident to $e^1$, $e^2$ and $e^3$ on the face $\Delta_{3,n}$. We call the edges of these triangles \emph{red edges}. The side length of these triangles is a parameter, denoted by $c$, that is optimized at the end of the proof. Essentially, we show that if a non-opposite cut has small cost both on Instance 1 and Instance 4 (i.e., weight $1.2$ on Instance 1 and $O(1/n^2)$ weight on Instance 4), then it must contain red edges.


Our lower bound of $\ourgap$ is obtained by optimizing the coefficients of the convex combination and the parameter $c$. By Proposition \ref{prop:weight-to-gap} and the results of Angelidakis, Makarychev and Manurangsi, we obtain that $\tau_{K}^*\ge \ourgap-O(1/\sqrt{K})$, i.e., the integrality gap of the CKR relaxation for $k$-way cut is at least $\ourgap-O(1/\sqrt{k})$. We complement our lower bound of $\ourgap$ by also showing that the best possible gap that can be achieved using convex combinations of our four instances is $1.20067$ (see Section \ref{sec:four-instances-limitations}). 


\section{A $3$-dimensional gap instance against non-opposite cuts}
We will focus on the graph $G=(V,E)$ with the node set $V:=\Delta_{4,n}$ being the discretized $3$-dimensional simplex and the edge set $E_{4,n}:=\{xy:x,y\in \Delta_{4,n}, \|x-y\|_1=2/n\}$. The four terminals $s_1,\ldots, s_4$ will be the four extreme points of the simplex, namely $s_i=e^i$ for $i\in [4]$.
In this context, a cut is a function $P:V\rightarrow [5]$ such that $P(s_i)=i$ for all $i\in [4]$. The \emph{cut-set} corresponding to $P$ is defined as
\[\delta(P):=\{xy \in E_{4,n}: P(x) \neq P(y)\}.\]
For a set $S$ of nodes, we will also use $\delta(S)$ to denote the set of edges with exactly one end node in $S$.
Given a weight function $w:E_{4,n} \to \R_+$, the \emph{cost} of a cut $P$ is $\sum_{e \in \delta(P)}w(e)$.
Our goal is to come up with weights on the edges so that the resulting $4$-way cut instance has gap at least $\ourgap$ against non-opposite cuts.

We recall that $L_{ij}$ denotes the boundary edges between terminals $s_i$ and $s_j$, i.e.,
\[
L_{ij}=\{xy\in E_{4,n}: \supp(x), \supp(y)\subseteq \{i,j\}\}.
\]
We will denote the boundary nodes between terminals $s_i$ and $s_j$ as $V_{ij}$, i.e.,
\[
V_{ij}:=\left\{x\in \Delta_{4,n}:\supp(x)\subseteq\{i,j\}\right\}.
\]
Let $\c\in (0,1/2)$ be a constant to be fixed later, such that $cn$ is integral.
For each $k\in [3]$, we define node sets $U_k, R_k$ and $\closure(R_k)$ and edge set $\Gamma_k$ as follows:
\begin{align*}
U_k&:=\{x\in \Delta_{4,n}: x_4=0,\ x_k=1-\c\}, \\
R_k &:= U_k\cup \{x\in V_{ik}\cup V_{jk}: x_k\ge 1-\c\}, \\
\closure(R_k)&:=\{x\in \Delta_{4,n}: x_4=0,\  x_k\ge 1-\c\}, \text{ and}\\
\Gamma_k &:= \left\{xy\in E_{4,n}: x,y\in R_k\right\}.
\end{align*}
We will refer to the nodes in $R_k$ as red\footnote{We use the term ``red'' as a convenient way for the reader to remember these nodes and edges. The exact color is irrelevant. } nodes near terminal $s_k$ and the edges in $\Gamma_k$ as the red edges near terminal $s_k$ (see Figure \ref{fig:red}).
Let $\face(s_1,s_2,s_3)$ denote the subgraph of $G$ induced by the nodes whose support is contained in $\{1,2,3\}$.
We emphasize that red edges and red nodes are present only in $\face(s_1,s_2,s_3)$ and that the total number of red edges is exactly $9cn$.



\begin{figure}
\centering
\begin{subfigure}[t]{.45\textwidth}
  \centering
  \includegraphics[width=.7\linewidth]{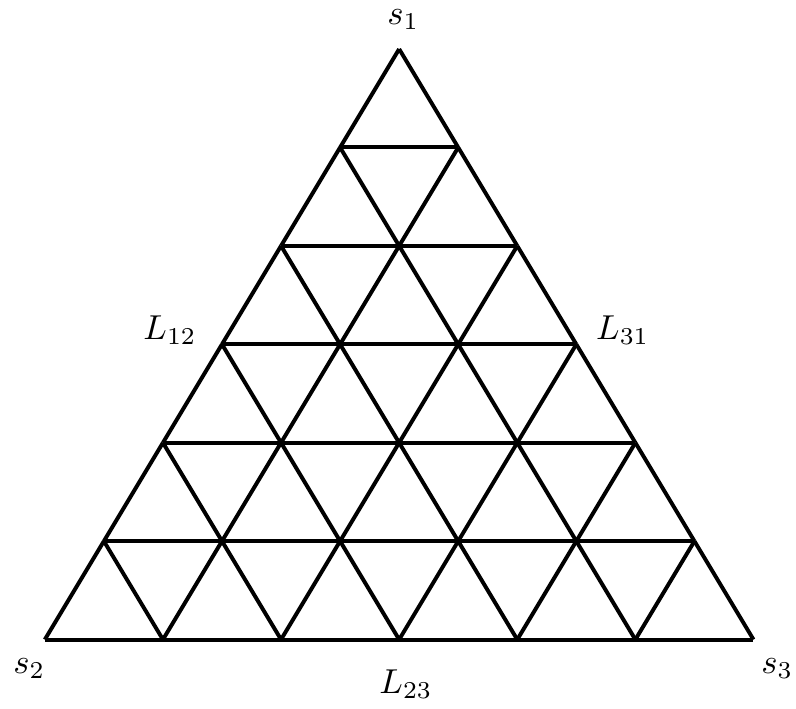}
  \caption{One face of the simplex with edge-sets $L_{12}$, $L_{23}$ and $L_{31}$.}
  \label{fig:triangle}
\end{subfigure}\hfill
\begin{subfigure}[t]{.45\textwidth}
  \centering
  \includegraphics[width=.75\linewidth]{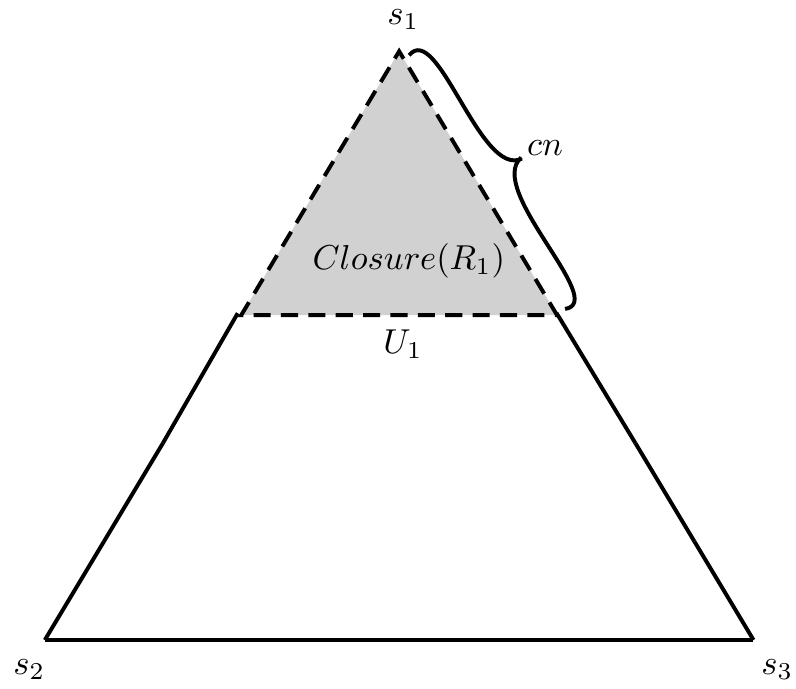}
  \caption{Definition of red nodes and edges near terminal $s_1$. Dashed part corresponds to $(R_1,\Gamma_1)$.}
  \label{fig:red}
\end{subfigure}
\caption{Notation on $\face(s_1,s_2,s_3)$.}
\label{fig:obs}
\end{figure}

\subsection{Gap instance as a convex combination}
Our gap instance is a convex combination of the following four instances.
\begin{enumerate}
\item \textbf{Instance $I_1$.} Our first instance constitutes the $3$-way cut instance constructed by Angelidakis, Makarychev and Manurangsi \cite{AMM17} that has gap $1.2$ against non-opposite cuts. To ensure that the total weight of all the edges in their instance is exactly $n$, we will scale their instance by $6/5$. Let us denote the resulting instance as $J$. In $I_1$, we simply use the instance $J$ on $\face(s_1,s_2,s_3)$ and set the weights of the rest of the edges in $E_{4,n}$ to be zero.

\item \textbf{Instance $I_2$.} In this instance, we set the weights of the edges in $L_{12}, L_{23}, L_{13}$ to be $1/3$ and the weights of the rest of the edges in $E_{4,n}$ to be zero.

\item \textbf{Instance $I_3$.} In this instance, we set the weights of the red edges to be $1/9c$ and the weights of the rest of the edges in $E_{4,n}$ to be zero.

\item \textbf{Instance $I_4$.} In this instance, we set the weight of every edge in $E_{4,n}$ to be $1/n^2$.
\end{enumerate}

We note that the total weight of all edges in each of the above instances is $n+O(1)$.
For multipliers $\lambda_1, \ldots, \lambda_4\ge 0$ to be chosen later that will satisfy $\sum_{i=1}^4 \lambda_i =1$, let the instance $I$ be the convex combination of the above four instances, i.e., $I=\lambda_1 I_1 + \lambda_2 I_2 + \lambda_3 I_3 + \lambda_4 I_4$. By the properties of the four instances, it immediately follows that the total weight of all edges in the instance $I$ is also $n+O(1)$.

\subsection{Gap of the Convex Combination}
The following theorem is the main result of this section.
\begin{theorem}\label{theorem:gap-in-convex-combination}
For every $n\ge 10$ and $c\in (0,1/2)$ such that $cn$ is integer, every non-opposite cut on $I$ has cost at least the minimum of the following two terms:
\begin{enumerate}[label=(\roman*)]
\item $\lambda_2 + (1.2-\errorterm)\lambda_1 + \min_{\alpha\in \left[0,\frac{1}{2}\right]}\left\{\ctwo\alpha \lambda_1+3\left(\frac{1}{2}-\alpha\right)\lambda_4  \right\}$

\item $2\lambda_2 + (1.2-\frac{5}{2n})\lambda_1 + 3\min\left\{\frac{2\lambda_3}{9\c},\min_{\alpha\in \left[0,\frac{\c^2}{2}\right]}\left\{\ctwo\alpha\lambda_1+3\left(\frac{\c^2}{2}-\alpha\right)\lambda_4 \right\} \right\}$
\end{enumerate}
\end{theorem}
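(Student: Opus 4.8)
The plan is to prove the two lower bounds separately, according to whether the non-opposite cut $P$ is a \noncornercut or a \cornercut. These cases are exhaustive and disjoint: for each pair $i,j\in[3]$ with $i\neq j$ the set $L_{ij}$ spans an $s_i$--$s_j$ path in $G$ whose endpoints receive distinct labels, so $|\delta(P)\cap L_{ij}|\geq 1$; hence either some $L_{ij}$ is cut exactly once (then $P$ is a \noncornercut, and I will show $\mathrm{cost}(P,I)$ is at least the value in~(i)) or every $L_{ij}$ is cut at least twice (then $P$ is a \cornercut, and I will show $\mathrm{cost}(P,I)$ is at least the value in~(ii)). Throughout I use $\mathrm{cost}(P,I)=\sum_{i=1}^{4}\lambda_i\,\mathrm{cost}(P,I_i)$ and lower-bound the four summands; the one delicate point is that in the \cornercut case the contributions of $I_1,I_3,I_4$ must be bounded jointly and localized around the corners $e^1,e^2,e^3$.

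\textbf{The \noncornercut case.} Since $L_{12},L_{13},L_{23}$ are vertex-disjoint paths, each cut at least once, and each of their edges has weight $1/3$ in $I_2$, we get $\mathrm{cost}(P,I_2)\geq 1$. Set $\alpha:=|\{v\in\Delta_{3,n}:P(v)=5\}|/n^2$; since $\Delta_{3,n}$ has $n^2/2+O(n)$ nodes, $\alpha\in[0,1/2]$. Restricting $P$ to the facet $\Delta_{3,n}$ and relabelling $5\mapsto 4$ gives a non-opposite $3$-way cut of the Angelidakis--Makarychev--Manurangsi instance; because $P$ is a \noncornercut, Lemma~\ref{lemma:3-way-instance-properties} yields $\mathrm{cost}(P,I_1)\geq(1.2-\errorterm)+\ctwo\alpha$. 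Independently, Lemma~\ref{lemma:non-opposite-cuts-in-discretezed-simplex} (our extension of the Mirzakhani--Vondr\'ak bound, which in turn uses Theorem~\ref{thm:nonmon}) lower-bounds the number of cut edges of $E_{4,n}$ in terms of the number of label-$5$ nodes on $\Delta_{3,n}$, which gives $\mathrm{cost}(P,I_4)\geq 3(1/2-\alpha)$. Using $\mathrm{cost}(P,I_3)\geq 0$ and summing, $\mathrm{cost}(P,I)\geq \lambda_2+(1.2-\errorterm)\lambda_1+\ctwo\alpha\lambda_1+3(1/2-\alpha)\lambda_4$, which is at least the value in~(i) after replacing $\alpha$ by its minimizer over $[0,1/2]$.

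\textbf{The \cornercut case.} Now $\mathrm{cost}(P,I_2)\geq 2$ (each $L_{ij}$ cut at least twice) and $\mathrm{cost}(P,I_1)\geq 1.2-\tfrac{5}{2n}$ by the base bound of Lemma~\ref{lemma:3-way-instance-properties}; it remains to extract $3\min\{\cdots\}$ from $\lambda_1\,\mathrm{cost}(P,I_1)+\lambda_3\,\mathrm{cost}(P,I_3)+\lambda_4\,\mathrm{cost}(P,I_4)$ over and above these. Fix a corner $k\in[3]$. First, $|\delta(P)\cap\Gamma_k|\neq 1$: the red nodes and red edges near $s_k$ contain a Hamiltonian cycle of $R_k$ (the boundary of $\closure(R_k)$), so $(R_k,\Gamma_k\setminus\{e\})$ is connected for every red edge $e$; if $\delta(P)\cap\Gamma_k$ were a single edge $e$, then $P$ would be constant on $R_k$, contradicting $e\in\delta(P)$. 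If $|\delta(P)\cap\Gamma_k|\geq 2$, then corner $k$ already contributes at least $2/(9\c)$ to $\mathrm{cost}(P,I_3)$. Otherwise $\delta(P)\cap\Gamma_k=\emptyset$, so $P$ is constant on the connected set $R_k$, and since $e^k\in R_k$ this constant equals $k$; hence the moat bounding $\closure(R_k)$ is monochromatic with label $k$. Put $\alpha_k:=|\{v\in\closure(R_k):P(v)=5\}|/n^2\in[0,\c^2/2]$. Using that $\closure(R_1),\closure(R_2),\closure(R_3)$ are pairwise disjoint for $\c<1/2$, the localized form of Lemma~\ref{lemma:3-way-instance-properties} charges the $\alpha_k n^2$ label-$5$ nodes inside $\closure(R_k)$ for an extra $\ctwo\alpha_k$ in $\mathrm{cost}(P,I_1)$, while the monochromatic moat lets the localized Mirzakhani--Vondr\'ak bound (Lemma~\ref{lemma:non-opposite-cuts-in-discretezed-simplex}) force at least $3(\c^2/2-\alpha_k)$ extra cut edges near corner $k$, i.e.\ that much extra in $\mathrm{cost}(P,I_4)$. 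In either case corner $k$ contributes at least $\min\{\,2\lambda_3/(9\c),\ \ctwo\alpha_k\lambda_1+3(\c^2/2-\alpha_k)\lambda_4\,\}$, which is at least $\min\{\,2\lambda_3/(9\c),\ \min_{\alpha\in[0,\c^2/2]}\{\ctwo\alpha\lambda_1+3(\c^2/2-\alpha)\lambda_4\}\,\}$. Since the three corners' contributions live on disjoint pieces of $I_1$, $I_3$ and $I_4$, summing them and adding the $I_2$ and base-$I_1$ terms gives the value in~(ii).

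\textbf{Where the difficulty lies.} The \noncornercut case is essentially bookkeeping on top of the two lemmas. The real work is the \cornercut case, specifically: (a) proving the \emph{localized} versions of the two lemmas with matching normalizations -- that label-$5$ nodes clustered near a corner cost a definite amount in the Angelidakis--Makarychev--Manurangsi instance, and that a monochromatic moat forces a definite number of $I_4$-edges to be cut near that corner unless the region is flooded with label-$5$ -- so that the \emph{same} parameter $\alpha_k$ drives both; and (b) arguing that for distinct corners these bounds are charged to disjoint edges of $I_1$, $I_3$ and $I_4$, so that the three corner contributions can be added rather than merely compared, which is where $\c<1/2$ and the ``barrier'' role of the monochromatic moat enter. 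The inner minimization over $\alpha\in[0,\c^2/2]$ in~(ii) is exactly the device that lets us avoid committing to the unknown value of $\alpha_k$, and the innocuous-looking fact that a connected red region cannot be cut exactly once is what upgrades ``some red edge near $s_k$ is cut'' to ``at least two are'', which is needed for the $2\lambda_3/(9\c)$ term.
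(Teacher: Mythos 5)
Your proposal follows the paper's proof in all essentials: the same dichotomy (the restriction $Q$ of $P$ to $\face(s_1,s_2,s_3)$ is a \noncornercut or a \cornercut), the same four per-instance bounds in the first case via Lemma \ref{lemma:3-way-instance-properties} and Lemma \ref{lemma:non-opposite-cuts-in-discretezed-simplex}, and the same three mechanisms in the second case (each cut red cycle loses at least two edges, label-$5$ nodes in an uncut $\closure(R_k)$ are charged to $I_1$, admissibly labeled nodes in an uncut $\closure(R_k)$ are charged to $I_4$). The one organizational difference is that you decompose the \cornercut case per corner and sum three minima directly, whereas the paper introduces $r=|\{k:\delta(P)\cap\Gamma_k\neq\emptyset\}|$, states global bounds in terms of $r$ and a single aggregated $\alpha$, and then needs Claim \ref{claim:gamma-star} (the identities $x+2y\ge 3\min\{x,y\}$ and $2x+y\ge 3\min\{x,y\}$) to reach the form in (ii); your bookkeeping makes that claim unnecessary, which is a genuine small simplification. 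The caveat is your reliance on unproven ``localized'' lemmas. The localized form of Lemma \ref{lemma:3-way-instance-properties} that charges $\ctwo|A_k|/n^2$ for label-$5$ nodes inside an uncut $\closure(R_k)$ is exactly Corollary \ref{coro:red-island-cut-cost}, which the paper proves separately (and which is where the $-\frac{5}{2n}$ in (ii) comes from -- note it is incurred once, not once per corner). By contrast, no localized version of Lemma \ref{lemma:non-opposite-cuts-in-discretezed-simplex} is available or needed: its proof counts non-monochromatic hyperedges globally via Theorem \ref{thm:nonmon}, and one cannot easily attribute particular cut edges to particular corners; but since the sets $\closure(R_k)$ are disjoint for $c<1/2$, the counts of admissibly labeled face nodes simply add, and a single global application of the lemma to the total already yields $\mathrm{cost}(P,I_4)\ge 3\sum_{k\ \text{uncut}}(\c^2/2-\alpha_k)$, which is all your summation requires. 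So the ``disjoint pieces of $I_4$'' barrier argument you flag as the real work is a detour. Finally, watch the normalizations: $|\{v:P(v)=5\}|/n^2$ can slightly exceed $1/2$ and $|\closure(R_k)|/n^2$ can slightly exceed $\c^2/2$, which is why the paper normalizes by $(n+1)(n+2)$ and $(n+1/\c)(n+2/\c)$ respectively so that the minimizations over $\alpha\in[0,1/2]$ and $\alpha\in[0,\c^2/2]$ in (i) and (ii) are legitimate.
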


Before proving Theorem \ref{theorem:gap-in-convex-combination}, we see its consequence.
\begin{corollary}\label{coro:gap-bound}
There exist constants $\c\in (0,1/2)$ and $\lambda_1, \lambda_2, \lambda_3, \lambda_4\ge 0$ with $\sum_{i=1}^4\lambda_i =1$ such that the cost of every non-opposite cut in the resulting convex combination $I$ is at least $\ourgap -O(1/n)$.
\end{corollary}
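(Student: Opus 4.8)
The plan is to feed the two lower bounds (i) and (ii) of Theorem~\ref{theorem:gap-in-convex-combination} into a small optimization over the free parameters $c$ and $\lambda_1,\dots,\lambda_4$, and to exhibit a choice for which both bounds exceed $\ourgap-O(1/n)$.

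The first step is to dispose of the two inner minimizations over $\alpha$. In each case the objective is affine in $\alpha$ over an interval, so the minimum is attained at an endpoint, and a one-line computation gives
\[
\min_{\alpha\in\left[0,\frac12\right]}\left\{\ctwo\,\alpha\lambda_1+3\left(\tfrac12-\alpha\right)\lambda_4\right\}=\min\left\{\tfrac32\lambda_4,\ \tfrac15\lambda_1\right\}
\]
and
\[
\min_{\alpha\in\left[0,\frac{c^2}{2}\right]}\left\{\ctwo\,\alpha\lambda_1+3\left(\tfrac{c^2}{2}-\alpha\right)\lambda_4\right\}=c^2\min\left\{\tfrac32\lambda_4,\ \tfrac15\lambda_1\right\}.
\]
Writing $M:=\min\{\tfrac32\lambda_4,\tfrac15\lambda_1\}$ and dropping the $O(1/n)$ terms, Theorem~\ref{theorem:gap-in-convex-combination} reduces the corollary to finding feasible $c,\lambda$ with both
\[
\lambda_2+1.2\,\lambda_1+M\ \ge\ \ourgap
\qquad\text{and}\qquad
2\lambda_2+1.2\,\lambda_1+3\min\!\left\{\tfrac{2\lambda_3}{9c},\,c^2M\right\}\ \ge\ \ourgap .
\]

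The second step is to solve this max--min. I expect the optimal configuration to balance the two bounds and to carry no ``wasted'' mass: the cap $M=\tfrac15\lambda_1$ is tight (so $\lambda_4=\tfrac{2}{15}\lambda_1$), the quantity $\min\{\tfrac{2\lambda_3}{9c},c^2M\}$ equals $c^2M$ with the two sides equal (so $\lambda_3=\tfrac{9}{10}c^3\lambda_1$), and equality of the two bounds forces $\lambda_2=\tfrac15\lambda_1(1-3c^2)$. Substituting $\sum_i\lambda_i=1$ then reduces everything to the single variable $c$, with common value
\[
t(c)=\frac{1.6-0.6\,c^2}{\tfrac43-0.6\,c^2+0.9\,c^3}.
\]
Maximizing $t$ over $c\in(0,1/2)$ is elementary calculus: $t'(c)=0$ simplifies to $0.54\,c^3-4.32\,c+0.32=0$, which has a unique root $c^\star\approx 0.074$ in the interval, and one checks $t(c^\star)\ge\ourgap$. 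The corresponding multipliers are $\lambda_1\approx 0.752$, $\lambda_2\approx 0.148$, $\lambda_3\approx 0.0003$, $\lambda_4\approx 0.100$.

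Finally I would restore the error terms: for the chosen constants, bound (i) decreases by at most $\lambda_1/n$ and bound (ii) by at most $5\lambda_1/(2n)$ relative to $t(c^\star)$, both $O(1/n)$, so every non-opposite cut on $I$ has cost at least $\ourgap-O(1/n)$. For a fully rigorous write-up one need not argue optimality at all: it suffices to fix $c$ and the $\lambda_i$ to (rational values close to) those above, check feasibility and that the asserted branch of each inner minimum is indeed the active one at that point, and then verify (i) and (ii) of Theorem~\ref{theorem:gap-in-convex-combination} directly. The main thing that needs care is the bookkeeping of the $1/n$ terms together with confirming the branch choices; the rest is routine.
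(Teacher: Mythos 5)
Your proposal is correct and follows essentially the same route as the paper: the paper's proof of Corollary~\ref{coro:gap-bound} simply plugs in the constants $\lambda_1=0.751652$, $\lambda_2=0.147852$, $\lambda_3=0.000275$, $\lambda_4=0.100221$, $c=0.074125$ into Theorem~\ref{theorem:gap-in-convex-combination}, and these are exactly the values your balancing argument ($\lambda_4=\tfrac{2}{15}\lambda_1$, $\lambda_3=\tfrac{9}{10}c^3\lambda_1$, $\lambda_2=\tfrac15\lambda_1(1-3c^2)$, then maximizing $t(c)$) produces. The only bookkeeping item worth making explicit is that $c$ must be rounded to a multiple of $1/n$, which changes $t(c)$ by only $O(1/n)$ since $t$ is differentiable.
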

\begin{proof}
The corollary follows from Theorem \ref{theorem:gap-in-convex-combination} by setting 
$\lambda_1=0.751652$, 
$\lambda_2=0.147852$, 
$\lambda_3=0.000275$, 
$\lambda_4 =0.100221$ and $c=0.074125$ (this is the optimal setting to achieve the largest lower bound based on Theorem \ref{theorem:gap-in-convex-combination}). 
\end{proof}

Corollary \ref{coro:gap-bound} in conjunction with Proposition \ref{prop:weight-to-gap} immediately implies Theorem \ref{thm:4-way-cut-gap-against-non-opposite-cuts}.

The following theorem (shown in Section \ref{sec:four-instances-limitations}) complements Corollary \ref{coro:gap-bound} by giving an upper bound on the best possible gap that is achievable using the convex combination of our four instances. 

\begin{theorem}\label{thm:limitation}
For every constant $\c\in (0,1/2)$ and every $\lambda_1, \lambda_2, \lambda_3, \lambda_4\ge 0$ with $\sum_{i=1}^4\lambda_i =1$, there exists a non-opposite cut whose cost in the resulting convex combination $I$ is at most $1.20067+O(1/n)$. 
\end{theorem}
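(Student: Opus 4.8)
The plan is to exhibit an explicit non-opposite cut that is simultaneously cheap in all four instances, so that no matter how the multipliers $\lambda_1,\dots,\lambda_4$ are chosen, the convex combination $I$ assigns it cost at most $1.20067+O(1/n)$. Since $\sum_i \lambda_i = 1$, it suffices to produce two (or a small family of) non-opposite cuts $P,Q$ such that $\min\{\text{cost}_{I_j}(P),\text{cost}_{I_j}(Q)\}$ is bounded appropriately in each instance $j$; then for any $(\lambda_i)$ at least one of $P,Q$ has cost at most the claimed bound. The pair should be chosen so that one of them is a \cornercut (cheap on $I_1$ at value $1.2$, essentially zero on $I_4$ but expensive on $I_2$ and on $I_3$), while the other is a \noncornercut tailored to be cheap on $I_2,I_3,I_4$. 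This is exactly the obstruction already flagged in Section \ref{sec:insufficiency-of-3-instances} and sharpened here with the extra red-edge instance $I_3$ included.

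The key steps, in order, are as follows. First I would recall the structure of the optimal $3$-way cut instance $J$ of Angelidakis--Makarychev--Manurangsi on $\face(s_1,s_2,s_3)$ and identify, among the cuts achieving cost $1.2$ on $I_1$, one candidate $P$ whose cut-set meets each $L_{ij}$ in at least two differently-labelled edges (a \cornercut) and one candidate $Q$ whose cut-set meets some $L_{ij}$ in exactly one such edge (a \noncornercut), both extended to all of $\Delta_{4,n}$ by the label $5$ on interior nodes so that they remain non-opposite. Second, I would compute $\text{cost}_{I_2}$, $\text{cost}_{I_3}$, $\text{cost}_{I_4}$ of each: for the \cornercut $P$ one checks $\text{cost}_{I_2}(P)$ is roughly $2$, $\text{cost}_{I_4}(P)=O(1/n)$ (only $O(n)$ edges cut, each of weight $1/n^2$), and $\text{cost}_{I_3}(P)$ is controlled by how many red edges near each terminal it cuts; for the \noncornercut $Q$ one checks $\text{cost}_{I_2}(Q)$ is roughly $1$, and one must design $Q$ to avoid red edges (so $\text{cost}_{I_3}(Q)=0$) and to keep $\text{cost}_{I_4}(Q)$ small, i.e.\ $Q$ should cut only $O(n)$ edges overall. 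Third, I would feed these values into the convex-combination bound: writing $\text{cost}_I(P)$ and $\text{cost}_I(Q)$ as affine functions of $(\lambda_1,\dots,\lambda_4)$, the maximum over the probability simplex of $\min\{\text{cost}_I(P),\text{cost}_I(Q)\}$ is attained where the two are equal, and solving the resulting small linear system (mirroring the computation in Corollary \ref{coro:gap-bound} but now as an upper bound) yields the numerical value $1.20067$. The $O(1/n)$ slack absorbs the discretization error from rounding $c$ and the $1/n$-order terms in $\text{cost}_{I_1}$ and $\text{cost}_{I_4}$.

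The main obstacle is the second step: constructing the \noncornercut $Q$ so that it is genuinely cheap \emph{simultaneously} on $I_1$, $I_3$, and $I_4$. Being cheap on $I_1$ forces $Q$ to essentially agree with an optimal AMM cut on $\face(s_1,s_2,s_3)$, which pins down its behaviour near the terminals; one then has to verify that such a cut can be routed to miss the red triangles $R_1,R_2,R_3$ entirely (so $c$ small helps, but the red region still sits right where an AMM cut wants to separate terminals) and that its total cut-set on all of $E_{4,n}$ stays $O(n)$ rather than $\Theta(n^2)$, which requires that $Q$ not wiggle through the three-dimensional interior. Establishing the existence of a single cut meeting all these constraints — and pinning the constant in front of $I_3$ precisely enough to get $1.20067$ rather than a weaker bound — is where the real work lies; the remaining linear-programming optimization over $(\lambda_i)$ and $c$ is routine. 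A secondary nuisance is verifying non-oppositeness of both cuts on every node of $\Delta_{4,n}$, which follows by taking label $5$ on any node whose assigned label is not in its support, but must be checked to be consistent with the claimed cut-set costs.
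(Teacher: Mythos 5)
Your overall strategy --- exhibit a small explicit family of non-opposite cuts, write their costs on $I$ as affine functions of $(\lambda_1,\dots,\lambda_4)$, and bound the minimum over the simplex --- is the same as the paper's. But the specific family you propose cannot be realized, and a two-cut family is structurally insufficient. The fatal step is your \noncornercut $Q$, which you want to have cost $1.2$ on $I_1$, cost $0$ on $I_3$, and only $O(n)$ cut edges in $E_{4,n}$ (hence cost $O(1/n)$ on $I_4$). Such a cut does not exist. By Lemma \ref{lemma:non-opposite-cuts-in-discretezed-simplex}, a non-opposite cut with only $O(n)$ cut edges can have only $O(n)$ nodes of $\face(s_1,s_2,s_3)$ labelled $1$, $2$, or $3$, so its restriction to the face labels $(1/2-O(1/n))(n+1)(n+2)$ nodes with $4$; since that restriction is non-fragmenting, Lemma \ref{lemma:3-way-instance-properties} forces its cost on $I_1$ up to $1.4-O(1/n)$, not $1.2$. (Equivalently: a cut with the cost vector you describe would cost about $1.2\cdot 0.7517+0.1479\approx 1.05$ on the optimal convex combination of Corollary \ref{coro:gap-bound}, contradicting the paper's own lower bound of $1.20016-O(1/n)$.) This tension between being cheap on $I_1$ and cheap on $I_4$ for non-fragmenting cuts is exactly what the lower-bound proof exploits, so no amount of ``routing'' of $Q$ can achieve all four cost targets simultaneously; the obstacle you flag as ``where the real work lies'' is an impossibility, not a technicality.

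The paper instead uses three cuts. $P_1$ extends the non-fragmenting cut $Q_0$ on the face by label $4$ off the face; it achieves $1.2$ on $I_1$ and $1$ on $I_2$ but accepts cost $1.5$ on $I_4$ (the entire face is labelled from $\{1,2,3\}$, so $\Theta(n^2)$ edges leave it). $P_2$ labels every non-terminal $5$; it is fragmenting and pays $6/(9c)$ on $I_3$, which blows up as $c\to 0$. $P_3$ labels the three corner regions $x_i\ge 1-c$ of the face by $i$ and everything else by $5$; it avoids all red edges (cost $0$ on $I_3$) but is fragmenting (cost $2$ on $I_2$) and pays $9c^2/2$ on $I_4$. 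All three are needed: no two of them control the regime where $c$ is small and $\lambda_3,\lambda_4$ are both appreciable, and the constant $1.20067$ arises from optimizing a $c$-dependent convex combination of the three resulting affine bounds. Your two-cut plan, even with corrected cost estimates, cannot give a bound that is uniform in $c$.
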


In light of Corollary \ref{coro:gap-bound} and Theorem \ref{thm:limitation}, if we believe that the integrality gap of the CKR relaxation is more than $1.20067$, then considering convex combination of alternative instances is a reasonable approach towards proving this. 

\medskip
The rest of the section is devoted to proving Theorem \ref{theorem:gap-in-convex-combination}.
We rely on two main ingredients in the proof.
The first ingredient is a statement about non-opposite cuts in the $3$-dimensional discretized simplex. We prove this in Section \ref{sec:non-opposite-cut-size}, where we also give a generalization to higher dimensional simplices, which might be of independent interest.

\begin{restatable}{lemma}{lemmaNonOppCutSize}
\label{lemma:non-opposite-cuts-in-discretezed-simplex}
Let $P$ be a non-opposite cut on $\Delta_{4,n}$ with $\alpha (n+1)(n+2)$ nodes from $\face(s_1,s_2,s_3)$ labeled as $1$, $2$, or $3$ for some $\alpha\in [0,1/2]$. Then, $|\delta(P)|\ge 3\alpha n (n+1)$.
\end{restatable}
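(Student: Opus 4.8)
The plan is to prove the contrapositive-flavored isoperimetric bound directly by a counting argument on the facet $\face(s_1,s_2,s_3)$, using the result of Mirzakhani and Vondr\'ak \cite{MV15} on Sperner admissible labelings as a black box. Restrict attention to the triangular facet $T:=\face(s_1,s_2,s_3)$, which contains $(n+1)(n+2)/2$ nodes; say $m$ of them are labeled $1$, $2$, or $3$, so $m = \alpha(n+1)(n+2)$ with the normalization in the statement (note this already forces $\alpha \le 1/2$ since at most "half" the triangle can carry terminal labels without the admissible labeling forcing a large boundary — this is exactly the phenomenon we exploit). The key observation is that the restriction of a non-opposite cut $P$ to $T$ labels each node $x$ with an element of $\supp(x)\cup\{5\}$; in particular the three corners get labels $1,2,3$ and every boundary node of $T$ gets one of the two adjacent terminal labels or the label $5$. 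After merging label $5$ with one of the terminal labels (or treating $5$ as a fourth "color"), this is precisely a Sperner-type labeling of the triangular grid, and the Mirzakhani--Vondr\'ak machinery lower-bounds the number of bichromatic edges in terms of how many nodes carry genuine terminal labels.

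First I would make the reduction to a planar statement precise: the subgraph of $G$ induced by $T$ is the standard triangular grid $\EFace$ on $\Delta_{3,n}$, and $\delta(P)\cap \EFace$ is the set of edges of this grid whose endpoints receive different $P$-labels. So it suffices to show that any labeling of the triangular grid that is Sperner-admissible (corners get $1,2,3$; each boundary node gets an adjacent corner label or the extra label) and that assigns a terminal label $1,2,3$ to $\alpha(n+1)(n+2)$ nodes must have at least $3\alpha n(n+1)$ bichromatic edges. The cleanest way is to invoke the generalized Mirzakhani--Vondr\'ak statement the paper promises (Theorem \ref{thm:nonmon} / Lemma in Section \ref{sec:non-opposite-cut-size}): in the fully terminal-labeled case ($\alpha = 1/2$, no label-$5$ nodes) their result gives that each of the three "color-class boundaries" contributes $\ge n(n+1)/2$ edges... wait, more carefully, their theorem says a Sperner-admissible labeling of the size-$n$ triangular grid has at least $n$ edges of each of the three bichromatic types along an appropriate counting, totaling the $3/2 \cdot$ (something) bound matching the gap $3/2$ of Instance 4. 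I would then argue that replacing terminal-labeled nodes by label-$5$ nodes only \emph{decreases} the count in a controlled, linear way: each node demoted from a terminal label to label $5$ destroys at most a bounded number of bichromatic edges and the aggregate decrease is at most $3(\tfrac12(n+1)(n+2) - m)/(n+1)\cdot$const; tracking constants gives the clean bound $|\delta(P)\cap\EFace|\ge 3\alpha n(n+1)$, hence $|\delta(P)|\ge 3\alpha n(n+1)$.

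Concretely, the steps in order: (1) set $T=\face(s_1,s_2,s_3)$, observe $|\delta(P)| \ge |\delta(P)\cap E(T)|$, so it suffices to bound the boundary within the grid; (2) record that $P|_T$ is Sperner-admissible with label set $\{1,2,3,5\}$ and that exactly $\alpha(n+1)(n+2)$ nodes carry a label in $\{1,2,3\}$; (3) invoke the Mirzakhani--Vondr\'ak bound for the all-terminal case and then extend it via the lemma proved in Section \ref{sec:non-opposite-cut-size} to allow a deficit of label-$5$ nodes, getting a lower bound linear in $\alpha$; (4) check the constant works out to $3\alpha n(n+1)$ and that the constraint $\alpha \le 1/2$ is exactly what makes the bound non-vacuous (at $\alpha = 1/2$ we recover $3 n(n+1)/2$, i.e.\ the $3/2$-gap of the uniform instance $I_4$, which is the tight case).

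The main obstacle I anticipate is step (3): the original Mirzakhani--Vondr\'ak result is a statement about fully-labeled admissible (Sperner) labelings, and transferring it to the regime where a constant fraction of nodes carries a "wildcard" label $5$ is exactly the generalization the paper flags as being of independent interest. The difficulty is that introducing label $5$ can, a priori, break the topological/parity argument underlying Sperner-type lower bounds; one needs to show the bound degrades only \emph{linearly} in the number of label-$5$ nodes rather than collapsing. I would handle this by a perturbation/charging argument — fixing an arbitrary terminal label for each label-$5$ node to recover a genuine Sperner labeling, applying the known bound there, and then bounding the number of bichromatic edges lost by the re-coloring by the degree (at most $6$ in the triangular grid) times the number of recolored nodes — and then optimizing the bookkeeping so the final constant is exactly $3\alpha n(n+1)$. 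This is precisely where the careful combinatorics of Section \ref{sec:non-opposite-cut-size} is needed, and everything else is routine.
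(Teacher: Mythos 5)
There is a genuine gap, and it is fatal to the approach as outlined: your step (1), reducing to the edges of $\delta(P)$ that lie inside the facet $T=\face(s_1,s_2,s_3)$, cannot work. The triangular grid on $\Delta_{3,n}$ has only $3n(n+1)/2$ edges in total, so the target bound $3\alpha n(n+1)$ at $\alpha=1/2$ would force \emph{every} edge of the facet into the cut, which is absurd (the cut $Q_0$ from the paper, extended by label $5$ off the facet, is non-opposite, has $\alpha=1/2$, and cuts only $O(n)$ facet edges). The tight example given right after the lemma makes the same point: labeling a ball of radius $\alpha n$ around $s_1$ with $1$ and everything else with $5$ produces $\Theta(n^2)$ cut edges, but only $O(n)$ of them lie in the facet --- the quadratic contribution comes entirely from the third dimension. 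The paper's proof is irreducibly three-dimensional: it relabels $5\mapsto 4$ to get a labeling of all of $\Delta_{4,n}$ whose inadmissible labels sit on the single facet $x_4=0$, applies the generalized Mirzakhani--Vondr\'ak bound (Theorem \ref{thm:nonmon} with $k=4$, $\beta=\tfrac12-\alpha$) to the hypergraph $H_{4,n}$ of small \emph{tetrahedra} tiling the simplex, concluding that at least $\alpha n(n+1)$ tetrahedra are non-monochromatic, and then converts hyperedges to graph edges: each tetrahedron induces $6$ edges of $G$, distinct tetrahedra have disjoint edge sets, and a non-monochromatic tetrahedron contributes at least $3$ cut edges. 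That last step is where the constant $3$ comes from --- not from three ``color-class boundaries'' in the planar grid.

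Your step (3) also underestimates the difficulty of the wildcard-label extension. Recoloring each label-$5$ node arbitrarily and charging at most $\deg$ lost bichromatic edges per recolored node gives a loss of order (number of label-$5$ nodes) $=\Theta(n^2)$, which swamps the bound you are trying to prove; the paper instead bounds the number of \emph{inadmissible monochromatic hyperedges} by $\beta\frac{(n+k-3)!}{(n-1)!}$ via a careful injection from such hyperedges into the set $Z$ of inadmissibly labeled nodes (losing a factor $\frac{n}{n+k-2}$, not a constant per node). So both the dimensional reduction and the perturbation argument need to be replaced, not just tightened.
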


The constant $3$ that appears in the conclusion of Lemma \ref{lemma:non-opposite-cuts-in-discretezed-simplex} is the best possible for any fixed $\alpha$ (if $n \to \infty$). To see this, consider the non-opposite cut $P$ 
obtained by labeling $s_i$ to be $i$ for every $i\in [4]$, all nodes at distance at most $\alpha n$ from $s_1$ to be $1$, and all remaining nodes to be $5$. The number of nodes from $\face(s_1,s_2,s_3)$ labeled as $1$, $2$, or $3$ is $\alpha n^2+O(n)$. The number of edges in the cut is $3\alpha n^2+O(n)$.


The second ingredient involves properties of the $3$-way cut instance constructed by Angelidakis, Makarychev and Manurangsi \cite{AMM17}. We need
two properties that are summarized in Lemma \ref{lemma:3-way-instance-properties} and Corollary \ref{coro:red-island-cut-cost}.
We prove these properties in Section \ref{sec:3-way-instance-properties}.
We define a cut $Q:\Delta_{3,n}\rightarrow [4]$ to be a \emph{\cornercut} if $|\delta(Q)\cap L_{ij}|\ge 2$ for every distinct $i,j\in [3]$; otherwise it is a \noncornercut. We recall that $J$ denotes the instance obtained from the 3-way cut instance of Angelidakis, Makarychev and Manurangsi by scaling it up by $6/5$.



The first property is that non-opposite non-fragmenting cuts in $\Delta_{3,n}$ that label a large number of nodes with label $4$ have cost much larger than $1.2$.
\begin{restatable}{lemma}{lemmaThreeWayProps}
\label{lemma:3-way-instance-properties}
Let $Q:\Delta_{3,n}\rightarrow [4]$ be a non-opposite cut with $\alpha n^2$ nodes labeled as $4$. If $Q$ is a \noncornercut and $n\geq 10$, then the cost of $Q$ on $J$ is at least $1.2+\ctwo\alpha-\errorterm$.
\end{restatable}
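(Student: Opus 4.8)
The plan is to recall the explicit structure of the AMM instance $J$ on $\Delta_{3,n}$ and exploit the fact that any non-fragmenting non-opposite cut is forced to be "one-sided'' along at least two of the three boundary edge-sets $L_{ij}$. First I would fix the geometry: the AMM instance puts carefully chosen weights on the edges of $\face(s_1,s_2,s_3)$ so that every non-opposite cut has cost at least $1.2$, and the extremal cuts achieving $1.2$ are exactly the "sliding'' cuts that peel off a near-terminal region and leave everything else labeled $4$. The key observation is that a \noncornercut $Q$ has $|\delta(Q)\cap L_{ij}|\le 1$ for some pair $\{i,j\}$; since $Q$ is non-opposite, along the segment $V_{ij}$ the label can only be $i$, $j$, or $4$, and having at most one cut edge on that segment means the label pattern along $V_{ij}$ is monotone — it is $i$ on an initial stretch, $4$ on a middle stretch, and $j$ on a final stretch (with at most one of the three stretches actually present beyond the endpoints forced by $Q(e^i)=i$, $Q(e^j)=j$). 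This rigidity is what we will leverage.

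Next I would set up the accounting. Write the cost of $Q$ on $J$ as a sum of contributions from the weighted edges of the AMM instance; since $J$ is $6/5$ times the AMM instance, the baseline $1.2$ bound becomes a bound of $1.2$ exactly on $J$. The goal is to show that labeling $\alpha n^2$ nodes with label $4$ in a non-fragmenting way forces an \emph{extra} cost of at least $\ctwo \alpha - \errorterm$ on top of this baseline, where $\ctwo = 0.4$. I expect the argument to go via a charging / local-modification scheme: starting from $Q$, consider the operation of relabeling the label-$4$ region back to one of the terminal labels in a way consistent with the monotone structure on the non-fragmenting boundary pair, obtaining a cut $Q'$ whose label-$4$ region is empty on that part; bound $\text{cost}_J(Q) - \text{cost}_J(Q')$ from below by the weight of the edges separating the label-$4$ region from its neighbors, and bound $\text{cost}_J(Q')\ge 1.2$ by the AMM guarantee applied to $Q'$ (which is still non-opposite). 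The extra cost is then at least the weight of the "frontier'' of the $\alpha n^2$-node label-$4$ region; because that region sits inside $\face(s_1,s_2,s_3)$ and is bounded away from a full boundary edge by the non-fragmenting hypothesis, an isoperimetric estimate on the triangular grid — together with the explicit lower bounds on the AMM edge weights in the relevant zone — yields a frontier weight of at least $\ctwo\alpha - O(1/n)$, and we absorb the $O(1/n)$ into $\errorterm$ using $n\ge 10$ to keep the bookkeeping clean.

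The main obstacle, and the step I would spend the most care on, is the isoperimetric / frontier-weight estimate: I need the AMM weights to be large enough \emph{uniformly over the possible locations} of an $\alpha n^2$-node label-$4$ region, given only the constraint that the region cannot fill up a boundary edge $L_{ij}$. A region of fixed area can be made to have small perimeter if it is allowed to hug a low-weight part of the triangle, so the crux is verifying that the AMM instance has no such low-weight corridor available to a non-fragmenting cut — equivalently, that forcing the region to stay off one full boundary edge pins it into a zone where edge weights are bounded below by a constant, giving the linear-in-$\alpha$ gain with coefficient $\ctwo = 0.4$. This will require plugging in the precise piecewise-defined weight function of the AMM construction and checking the constant $0.4$ is attained in the worst case; I would organize it as (a) identify the worst-case shape and position of the label-$4$ region, (b) compute its frontier weight in $J$, (c) confirm the bound $1.2 + \ctwo\alpha - \errorterm$, handling small-$n$ corrections via the hypothesis $n\ge 10$. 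The \noncornercut hypothesis is used exactly once, but essentially — it is what rules out the cheap "corner-cutting'' regions and makes the linear gain possible.
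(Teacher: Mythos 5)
Your high-level strategy matches the paper's: modify $Q$ into a non-opposite cut with fewer label-$4$ nodes, lower-bound the cost saved by the modification by $0.4\alpha$, and invoke the AMM guarantee that every non-opposite cut costs at least $1.2-1/n$ on $J$. But the substance of the argument --- the part you defer as ``the step I would spend the most care on'' --- is exactly where the difficulty lies, and the route you sketch for it would not go through. First, you cannot bound $Cost_J(\delta(Q))-Cost_J(\delta(Q'))$ from below by the full frontier weight of the label-$4$ region: deleting that frontier wholesale typically destroys non-oppositeness (e.g.\ a label-$4$ component touching $V_{12}$ cannot be absorbed into the $s_3$-region without letting $s_3$ reach $V_{12}$). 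The paper therefore removes only a carefully chosen \emph{subset} $E_r$ of each component's frontier and \emph{adds back} new cut edges $E_r'$ inside the component to restore non-oppositeness; the quantity that must be bounded is $Cost_J(E_r)-Cost_J(E_r')\ge 0.4\beta_r$, not the frontier weight. Second, your isoperimetric estimate rests on the claim that the non-fragmenting hypothesis ``pins the region into a zone where edge weights are bounded below by a constant''; this is false --- all interior edges of $J$ have weight $\rho=3/(5n)=O(1/n)$. The linear-in-$\alpha$ gain instead comes from a line-counting argument: a component of size $\beta n^2$ that cannot reach two of the three boundary segments must meet at least roughly $2\beta n$ of the lines parallel to the sides of the triangle, each contributing at least two frontier edges of weight $3/(5n)$.

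Third, the worst case is not isoperimetric at all. The paper partitions the label-$4$ set $B$ into four classes $B_1,\dots,B_4$ according to which of $V_{12},V_{23}$ their components can reach in $\mathcal{G}-\delta(Q)$ (the non-fragmenting hypothesis, giving $|\delta(Q)\cap L_{13}|=1$ after relabeling, is what guarantees no such component reaches $V_{13}$). For the class $B_4$ of components reaching both $V_{12}$ and $V_{23}$, the frontier can be arbitrarily cheap relative to $|B_4|$, and the bound $Cost_J(E_4)\ge 0.4\beta_4$ is obtained from a separate structural fact (Lemma~\ref{lemma:si_V_jk_cut}): removing the $s_2$-side of such a component's frontier disconnects $s_2$ from $V_{13}$, which by itself already costs at least $0.4-O(1/n)\ge 0.4\beta_4$ since $\beta_4\le 0.66$. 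Your sketch contains no analogue of this case analysis, and your reading of the non-fragmenting condition is also off (with exactly one cut edge on $L_{ij}$ there is no label-$4$ stretch on $V_{ij}$ at all). So the proposal identifies the right skeleton but has a genuine gap at the core of the argument.
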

We show Lemma \ref{lemma:3-way-instance-properties} by modifying $Q$ to obtain a non-opposite cut $Q'$ while reducing its cost by $0.4\alpha$. By the main result of \cite{AMM17}, the cost of every non-opposite cut $Q'$ on $J$ is at least $1.2-\errorterm$.  
Therefore, it follows that the cost of $Q$ on $J$ is at least $1.2-\errorterm+0.4\alpha$. We emphasize that while it might be possible to improve the constant $0.4$ that appears in the conclusion of Lemma \ref{lemma:3-way-instance-properties}, it does not lead to much improvement on the overall integrality gap as illustrated by the results in Section \ref{sec:four-instances-limitations}. 



The second property is that non-opposite cuts which do not remove any of the red edges, but label a large number of nodes in the red region with label $4$ have cost much larger than $1.2$.
\begin{restatable}{corollary}{coroRedIslandCutCost}
\label{coro:red-island-cut-cost}
Let $Q:\Delta_{3,n}\rightarrow [4]$ be a non-opposite cut and $n\geq 10$. For each $i\in [3]$, let
\begin{equation*}
A_i:=
\begin{cases}
\{v\in \closure(R_i): Q(v)=4\} & \text{ if } \delta(Q)\cap \Gamma_i=\emptyset,\\
\emptyset & \text{ otherwise}.
\end{cases}
\end{equation*}
Then, the cost of $Q$ on $J$ is at least $1.2+\ctwo\sum_{i=1}^3|A_i|/n^2-\frac{5}{2n}$.
\end{restatable}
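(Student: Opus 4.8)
\textbf{Proof proposal for Corollary \ref{coro:red-island-cut-cost}.}

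The plan is to derive this corollary from Lemma \ref{lemma:3-way-instance-properties} by a reduction that turns the ``red-region'' nodes labeled $4$ into a configuration where Lemma \ref{lemma:3-way-instance-properties} (or the bare $1.2-\errorterm$ bound of Angelidakis, Makarychev and Manurangsi) can be applied. First I would dispose of the easy case: if $Q$ is already a \cornercut, then $|\delta(Q)\cap L_{ij}|\ge 2$ for every pair, and Instance $I_2$-type reasoning is not what we want here — instead, in this case all $A_i$ are defined via the ``otherwise'' branch only when $\delta(Q)\cap\Gamma_i\neq\emptyset$, so I need to be careful. The genuinely relevant case is when for some (or all) $i$ we have $\delta(Q)\cap\Gamma_i=\emptyset$, i.e.\ $Q$ assigns a single label to the entire red node set $R_i$ together with all edges inside it. Since $R_i$ contains the terminal $s_i=e^i$ (because $e^i\in U_i\subseteq R_i$ when $cn$ is an integer), the forced label on all of $R_i$ is $i$; in particular no node of $\closure(R_i)$ adjacent to $R_i$ can carry label $4$ without creating a cut edge. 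Wait — $A_i$ counts nodes of $\closure(R_i)$ labeled $4$, and $\closure(R_i)\supseteq R_i$, but $\closure(R_i)$ also contains interior nodes of the face with $x_i\ge 1-c$ that are not in $R_i$ (the $x_4=0$ slab versus the two $V$-strips). The point is that $R_i$ is ``connected enough'' that $\delta(Q)\cap\Gamma_i=\emptyset$ forces label $i$ on $R_i$, hence $A_i\subseteq \closure(R_i)\setminus R_i$, and every such node labeled $4$ sits at distance roughly $cn$ from $s_i$ inside the face.

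The key step is then to \emph{modify} $Q$ into a cut $Q'$ on $\Delta_{3,n}$ that is a \noncornercut (or to which the AMM bound applies directly), while controlling the cost change. Concretely: for each $i$ with $\delta(Q)\cap\Gamma_i=\emptyset$, relabel every node of $\closure(R_i)$ to have label $i$. This can only decrease cost on $J$ by a bounded amount per newly-removed-from-$\delta$ edge, but more importantly it turns the $|A_i|$ nodes formerly labeled $4$ into label $i$, and I want to argue that this relabeling, combined with an accounting of how many boundary edges of $\closure(R_i)$ it affects, matches the promised $\ctwo\sum_i|A_i|/n^2$ gain. The cleanest route is to reuse the mechanism of Lemma \ref{lemma:3-way-instance-properties}: that lemma already says a \noncornercut with $\alpha n^2$ nodes labeled $4$ costs at least $1.2+\ctwo\alpha-\errorterm$ on $J$. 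So I would show that the $A_i$ nodes, being nodes of $\closure(R_i)$ labeled $4$ with all of $R_i$ forced to label $i$, can be ``transferred'' to the count of label-$4$ nodes that the Lemma's argument exploits — essentially, deleting the red triangles from the face and observing that the cost of $Q$ on $J$ restricted away from the red edges (which carry no $J$-weight anyway, since $J$ lives on $\face(s_1,s_2,s_3)$ but the red edges are a specific subset) is bounded below using Lemma \ref{lemma:3-way-instance-properties} applied with $\alpha=\sum_i|A_i|/n^2$. The slightly worse additive error $\frac{5}{2n}$ versus $\errorterm$ comes from the extra $O(1/n)$ slack introduced by the relabeling near the three corners (three triangles, each contributing an $O(1/n)$ perimeter correction), together with the $\frac{5}{2n}$ already present if we invoke the \cornercut branch of the AMM analysis.

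The main obstacle I anticipate is verifying that $\delta(Q)\cap\Gamma_i=\emptyset$ genuinely forces a \emph{single} label on all of $R_i$ and that this label is $i$ — this needs $R_i$ (as a subgraph of $G$ via $\Gamma_i$) to be connected and to contain $s_i$. Connectivity of $R_i$: it is $U_k$ (a ``hyperplane slice'' $x_k=1-c$ within the face) together with the two strips $\{x\in V_{ik}\cup V_{jk}: x_k\ge 1-c\}$ reaching up to $e^k$; I would check that consecutive nodes of $U_k$ differ by $2/n$ in $\ell_1$ (they do, moving one unit of mass between the two non-$k$ coordinates) and that the strips attach to $U_k$ and climb monotonically to $e^k$, so the whole thing is connected in $G$, and contains $e^k=s_k$. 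Once that is pinned down, the forced-label claim is immediate and the rest is the bounded-cost-change bookkeeping, which mirrors the proof of Lemma \ref{lemma:3-way-instance-properties} and should not present conceptual difficulty. A secondary subtlety is that $\closure(R_i)$ for different $i$ are disjoint (they require $x_i\ge 1-c>1/2$ for different $i$), so the three contributions $|A_i|$ add without double-counting and the corner corrections are genuinely independent, justifying the single $\frac{5}{2n}$ error term rather than a $3\cdot O(1/n)$ blow-up.
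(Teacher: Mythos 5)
Your proposal gets the second half of the paper's argument right — the relabeling of $\closure(R_i)$ to label $i$ for each $i$ with $\delta(Q)\cap\Gamma_i=\emptyset$ is exactly the paper's construction of $Q'$ (Claim \ref{cl:relabel}), and your observation that $(R_i,\Gamma_i)$ is connected, contains $s_i$, and hence is forced to carry label $i$ is correct and is what makes $\delta(Q')\subseteq\delta(Q)\setminus\delta(A)$ and keeps $Q'$ non-opposite. Applying the bare AMM bound to $Q'$ then yields $Cost_J(\delta(Q))\ge 1.2-\errorterm+Cost_J(\delta(A))$.

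The genuine gap is in how you propose to extract the term $\ctwo\sum_i|A_i|/n^2$. You want to ``reuse the mechanism of Lemma \ref{lemma:3-way-instance-properties} applied with $\alpha=\sum_i|A_i|/n^2$'' to $Q$ itself (or to $Q$ with the red region deleted), but that lemma applies only to \noncornercut{}s, and the corollary must hold — and is in fact only invoked — when $Q$ is a \cornercut (Case 2 of Theorem \ref{theorem:gap-in-convex-combination}). You flag this tension at the start (``I need to be careful'') but never resolve it, and no amount of ``bookkeeping mirroring the proof of Lemma \ref{lemma:3-way-instance-properties}'' closes it, because the entire machinery of that proof (the sets $B_1,\dots,B_4$ and the edge-exchange argument) is anchored on the existence of a pair $i,j$ with $|\delta(Q)\cap L_{ij}|=1$. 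The paper's missing idea is Claim \ref{cl:S}: it lower-bounds $Cost_J(\delta(A))$ \emph{in isolation} by superimposing $\delta(A)$ on an auxiliary cheap reference cut $Q_0$ (the ``Y-cut'' with $Cost_J(\delta(Q_0))\le 1.2+1/(2n)$), noting that the combined labeling $Q_0'$ \emph{is} a non-opposite \noncornercut with exactly $|A|$ nodes labeled $4$, applying Lemma \ref{lemma:3-way-instance-properties} to $Q_0'$, and subtracting; this gives $Cost_J(\delta(A))\ge\ctwo|A|/n^2-3/(2n)$ regardless of whether the original $Q$ fragments the corners. Without some substitute for this step your argument does not produce the quantitative gain in the fragmenting case. (A minor further error: the red edges do carry $J$-weight — $J$ is the AMM instance on the face, and $\Gamma_i$ consists of ordinary edges of that instance — so ``restricting away from the red edges'' does not come for free.)
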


In order to show Corollary \ref{coro:red-island-cut-cost}, we first derive that the cost of the edges $\delta(\cup_{i=1}^3 A_i)$ in the instance $J$ is at least $0.4\sum_{i=1}^3 |A_i|/n^2-\frac{3}{2n}$ using Lemma \ref{lemma:3-way-instance-properties}. Next, we modify $Q$ to obtain a non-opposite cut $Q'$ such that $\delta(Q')=\delta(Q)\setminus \delta(\cup_{i=1}^3A_i)$. By the main result of \cite{AMM17}, the cost of every non-opposite cut $Q'$ on $J$ is at least $1.2-\errorterm$. Therefore, it follows that the cost of $Q$ on $J$ is at least $1.2+0.4\sum_{i=1}^3 |A_i|/n^2$.

We now have the ingredients to prove Theorem \ref{theorem:gap-in-convex-combination}.
\begin{proof}[Proof of Theorem \ref{theorem:gap-in-convex-combination}]
Let $P:\Delta_{4,n}\rightarrow [5]$ be a non-opposite cut. 
Let $Q$ be the cut $P$ restricted to $\face(s_1,s_2,s_3)$, i.e., for every $v\in \Delta_{4,n}$ with $\supp(v)\subseteq [3]$, let
\[
Q(v):=
\begin{cases}
P(v) & \text{ if } P(v)\in \{1,2,3\},\\
4 & \text{ if } P(v)=5.
\end{cases}
\]
We consider two cases.

\paragraph{Case 1: $Q$ is a \noncornercut.} 
Let the number of nodes in $\face(s_1,s_2,s_3)$ that are labeled by $Q$ as $4$ (equivalently, labeled by $P$ as $5$) be $\alpha (n+1)(n+2)$ for some $\alpha\in \left[0,\frac{1}{2}\right]$.
Since $|\{x \in \face(s_1,s_2,s_3): Q(x)=4\}| \geq \alpha n^2$, Lemma \ref{lemma:3-way-instance-properties} implies that the cost of $Q$ on $J$, and hence the cost of $P$ on $I_1$, is at least $1.2+\ctwo\alpha-\errorterm$. Moreover, the cost of $P$ on $I_2$ is at least $1$ since at least one edge in $L_{ij}$ should be in $\delta(P)$ for every pair of distinct $i,j\in [3]$. To estimate the cost on $I_4$, we observe that the number of nodes on $\face(s_1,s_2,s_3)$ labeled by $P$ as $1$, $2$, or $3$ is $(1/2-\alpha)(n+1)(n+2)$.
By Lemma \ref{lemma:non-opposite-cuts-in-discretezed-simplex}, we have that $|\delta(P)|\ge 3(1/2-\alpha)n(n+1)$ and thus, the cost of $P$ on $I_4$ is at least $3(1/2-\alpha)$. Therefore, the cost of $P$ on the convex combination instance $I$ is at least
\[
\lambda_2 + \left(1.2-\errorterm \right)\lambda_1 + \min_{\alpha\in \left[0,\frac{1}{2}\right]}\left\{ \ctwo\alpha\lambda_1+ 3\left(\frac{1}{2}-\alpha\right)\lambda_4 \right\}.
\]

\paragraph{Case 2: $Q$ is a \cornercut.} Then,
the cost of $P$ on $I_2$ is at least $2$ as a \cornercut contains at least $2$ edges from each $L_{ij}$ for distinct $i,j\in [3]$.

We will now compute the cost of $P$ on the other instances.
Let $r:=|\{i\in [3]: \delta(P)\cap \Gamma_i\neq \emptyset\}|$, i.e., $r$ is the number of red triangles that are intersected by the cut $P$. We will derive lower bounds on the cost of the cut in each of the three instances $I_1, I_3$ and $I_4$ based on the value of $r\in \{0,1,2,3\}$.
For each $i\in [3]$, let
\begin{equation*}
A_i:=
\begin{cases}
\{v\in \closure(R_i): P(v)=5\} & \text{ if } \delta(P)\cap \Gamma_i=\emptyset,\\
\emptyset & \text{ otherwise},
\end{cases}
\end{equation*}
and let $\alpha:=|A_1\cup A_2\cup A_3|/((n+1/\c)(n+2/\c))$. Since $c<1/2$, the sets $A_i$ and $A_j$ are disjoint for distinct $i,j\in [3]$.
We note that $\alpha\in[0,(3-r)\c^2/2]$ since $|A_i|\le (\c n+1)(\c n+2)/2$ and $A_i\cap A_j=\emptyset$.

In order to lower bound the cost of $P$ on $I_1$, we will use Corollary \ref{coro:red-island-cut-cost}. We recall that $Q$ is the cut $P$ restricted to $\face(s_1,s_2,s_3)$, so
the cost of $P$ on $I_1$ is the same as the cost of $Q$ on $J$. Moreover, by Corollary \ref{coro:red-island-cut-cost}, the cost of $Q$ on $J$ is at least $1.2+\ctwo\alpha-\frac{5}{2n}$, because $\alpha \leq \sum_{i=1}^3|A_i|/n^2$. Hence, the cost of $P$ on $I_1$ is at least $1.2+\ctwo\alpha-\frac{5}{2n}$.

The cost of $P$ on $I_3$ is at least $2r/9\c$ by the following claim.
\begin{claim}\label{claim:nonempty-red-two-red}
Let $i\in [3]$. If $\delta(P)\cap \Gamma_i\neq \emptyset$, then $|\delta(P)\cap \Gamma_i|\ge 2$.
\end{claim}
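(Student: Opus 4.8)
The plan is to prove Claim~\ref{claim:nonempty-red-two-red} by a parity/connectivity argument on the boundary of the red triangle $\Gamma_i$. Fix $i\in[3]$ and suppose $\delta(P)\cap\Gamma_i\neq\emptyset$; I want to show the cut uses at least two red edges near $s_i$. The key observation is that the red region $R_i$ (more precisely the subgraph $(R_i,\Gamma_i)$, see Figure~\ref{fig:red}) contains the terminal $s_i=e^i$, which has label $i$, and also contains nodes on the two lines $V_{ij}$ and $V_{jk}$ (for $\{j,k\}=[3]\setminus\{i\}$) that lie ``deep'' towards $s_j$ and $s_k$ — at the cut of the red strip, $x_i = 1-c$, so the node with $x_i=1-c$, $x_j=c$ on $V_{ij}$ is in $R_i$, and since $P$ is non-opposite, its label lies in $\supp\cup\{5\}=\{i,j,5\}$, and in particular it is not equal to $i$ only if it is $j$ or $5$. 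So I first argue that there must be at least one node in $R_i$ whose label differs from $i$: if every node of $R_i$ had label $i$, then in particular $U_k$ (in the paper's notation $R_k$ involves $U_k$; here with $k=i$) would be monochromatic, forcing $\delta(P)\cap\Gamma_i=\emptyset$, a contradiction.

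Now the heart of the argument: $(R_i,\Gamma_i)$ is a connected (in fact triangulated planar) graph containing a node of label $i$ (namely $s_i$) and a node of label $\neq i$. Hence there is at least one red edge in $\delta(P)$. To get a \emph{second} one, I would use that the red region is two-dimensional and that $\Gamma_i$ together with its boundary structure cannot be disconnected by removing a single edge: removing one edge from a triangulated $2$-dimensional patch leaves it connected (every edge of such a triangulation lies on a cycle, since it is on the boundary of a little triangle whose other two edges form a path between its endpoints). Therefore $R_i$ cannot be split by $\delta(P)\cap\Gamma_i$ into the ``label-$i$ side'' and the ``non-label-$i$ side'' using only one edge; at least two red edges of $\Gamma_i$ must be cut. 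Concretely: let $S=\{v\in R_i: P(v)=i\}$; then $S\neq\emptyset$ (it contains $s_i$) and $R_i\setminus S\neq\emptyset$ (by the previous paragraph), and $\delta(P)\cap\Gamma_i$ contains the edge boundary of $S$ within the graph $(R_i,\Gamma_i)$. Since this graph is $2$-edge-connected between any nonempty proper vertex subset and its complement — equivalently has no bridges — this edge boundary has size at least $2$.

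The step I expect to be the main obstacle is pinning down the precise combinatorial statement that ``the graph $(R_i,\Gamma_i)$ has no bridge'', i.e.\ that every edge in $\Gamma_i$ lies on a cycle inside $\Gamma_i$. This is geometrically obvious from Figure~\ref{fig:red} — the red region is a triangular strip of the triangular lattice — but writing it carefully requires knowing the exact shape of $R_i$ near its corners (the vertices $s_i$, and the two endpoints of $U_i$ on the lines $V_{ij}$ and $V_{ik}$) and checking that even the extreme edges, such as those incident to $s_i$ or lying along the boundary segment $U_i$, are on a little triangle with all three edges in $\Gamma_i$. I would handle this by noting that $R_i$ is exactly the set of lattice points of $\face(s_1,s_2,s_3)$ with $x_i\ge 1-c$ (together with the nodes of $V_{ij}\cup V_{ik}$ with that property, which are already lattice points of the face), so $R_i$ is the intersection of the triangular lattice with a smaller homothetic triangle, every internal edge of which bounds two unit triangles and every boundary edge of which bounds one — in all cases at least one unit triangle with all edges in $\Gamma_i$, giving the desired cycle through that edge. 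An alternative, possibly cleaner, route is to observe that $\closure(R_i)$ is itself a scaled copy of the whole face $\Delta_{3,cn}$ with $s_i$ playing the role of one terminal, and $R_i\supseteq U_i$ is its ``far side'', so the same elementary observation that a single cut-edge cannot separate a corner of a triangulated triangle from its opposite side — which is implicitly the kind of fact underlying \noncornercut reasoning for $L_{ij}$ — applies here verbatim.
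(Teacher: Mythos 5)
Your high-level strategy is sound: setting $S:=\{v\in R_i: P(v)=i\}$, the hypothesis $\delta(P)\cap\Gamma_i\neq\emptyset$ forces $S$ to be a nonempty proper subset of $R_i$ (it contains $s_i$, and at least one endpoint of a cut red edge is not labeled $i$), the edge boundary of $S$ inside $(R_i,\Gamma_i)$ is contained in $\delta(P)\cap\Gamma_i$, and that boundary has size at least $2$ provided the graph is connected and bridgeless. The gap is in your justification of bridgelessness, which rests on a misreading of the definitions: $R_i$ is \emph{not} the set of all lattice points of the face with $x_i\ge 1-c$ --- that set is $\closure(R_i)$. By definition, $R_i=U_i\cup\{x\in V_{ij}\cup V_{ik}: x_i\ge 1-c\}$ consists only of the \emph{perimeter} of the small triangle (the segment $U_i$ where $x_i=1-c$ together with the two sides along the boundary of the face), and $\Gamma_i$ contains only edges with both endpoints in $R_i$. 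So $(R_i,\Gamma_i)$ is not a triangulated two-dimensional patch, and your key assertion --- that every edge of $\Gamma_i$ bounds a unit triangle all of whose edges lie in $\Gamma_i$ --- is false: for an edge of $U_i$ away from the corners, both candidate third vertices of a unit triangle lie strictly inside or strictly outside $\closure(R_i)$ and hence are not in $R_i$ at all. The same misreading affects your alternative route via a scaled copy of $\Delta_{3,cn}$.

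The correct structure makes the claim easier, not harder: $(R_i,\Gamma_i)$ is a cycle (the perimeter of the small triangle), which is $2$-edge-connected, so your edge-boundary argument closes immediately once the triangulation claim is replaced by this observation. This is essentially the paper's one-line proof: if $P(x)\neq P(y)$ for some $xy\in\Gamma_i$, then the path $\Gamma_i-xy$ joins two differently labeled nodes and must therefore contain a second consecutive pair of nodes with different labels.
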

\begin{proof}
The subgraph $(R_i,\Gamma_i)$ is a cycle. If $P(x) \neq P(y)$ for some $xy \in \Gamma_i$, then the path $\Gamma_i-xy$ must also contain two consecutive nodes labeled differently by $P$.
\end{proof}

Next we compute the cost of $P$ on $I_4$.
If $r=3$, then the cost of $P$ on $I_4$ is at least $0$. Suppose $r\in \{0,1,2\}$.
For a red triangle $i\in [3]$ with $\delta(P)\cap \Gamma_i=\emptyset$, we have at least $(\c n+1)(\c n+2)/2-|A_i|$ nodes from $\closure(R_i)$ that are labeled as $1$, $2$, or $3$. Moreover, the nodes in $\closure(R_i)$ and $\closure(R_j)$ are disjoint for distinct $i,j\in[3]$. Hence, the number of nodes in $\face(s_1,s_2,s_3)$ that are labeled as $1$, $2$, or $3$ is at least $(3-r)(\c n+1)(\c n+2)/2-\alpha (n+1/\c)(n+2/\c)=((3-r)\c^2/2-\alpha)(n+1/\c)(n+2/\c)$, which is at least $((3-r)\c^2/2-\alpha)(n+1)(n+2)$, since $\c \leq 1$. Therefore, by Lemma \ref{lemma:non-opposite-cuts-in-discretezed-simplex}, we have $|\delta(P)|\ge 3((3-r)\c^2/2-\alpha)n^2$ and thus, the cost of $P$ on $I_4$ is at least $3((3-r)\c^2/2-\alpha)$.

Thus, the cost of $P$ on the convex combination instance $I$ is at least $2\lambda_2+(1.2-\frac{5}{2n})\lambda_1 + \gamma(r,\alpha)$ for some $\alpha\in [0,(3-r)c^2/2]$, where
\[
\gamma (r,\alpha) :=
\begin{cases}
\frac{6\lambda_3}{9\c}, &\text{ if }r=3, \\
\ctwo\alpha \lambda_1 + \frac{2r}{9\c}\lambda_3 + 3\left(\frac{(3-r)\c^2}{2}-\alpha\right)\lambda_4, &\text{ if }r\in \{0,1,2\}.
\end{cases}
\]
In particular, the cost of $P$ on the convex combination instance $I$ is at least $2\lambda_2+(1.2-5/(2n))\lambda_1 + \gamma^*$, where
\[
\gamma^*:=\min_{r\in\{0,1,2,3\}}\min_{\alpha\in \left[0,\frac{(3-r)\c^2}{2}\right]} \gamma(r,\alpha).
\]
Now, Claim \ref{claim:gamma-star} completes the proof of the theorem.
\end{proof}
\begin{claim}\label{claim:gamma-star}
\[
\gamma^*
\ge 3\min\left\{\frac{2\lambda_3}{9\c},\min_{\alpha\in \left[0,\frac{\c^2}{2}\right]}\left\{\ctwo\alpha\lambda_1+3\left(\frac{\c^2}{2}-\alpha\right)\lambda_4 \right\} \right\}.
\]
\end{claim}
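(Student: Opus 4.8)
The plan is to bound $\gamma(r,\alpha)$ from below separately for each of the four values $r \in \{0,1,2,3\}$, reducing everything to the single-triangle quantity
\[
g := \min\left\{\frac{2\lambda_3}{9\c},\ \min_{\alpha\in\left[0,\frac{\c^2}{2}\right]}\left\{\ctwo\alpha\lambda_1+3\left(\frac{\c^2}{2}-\alpha\right)\lambda_4\right\}\right\},
\]
so that the claim reads $\gamma^* \ge 3g$. The key observation is that $g$ should be read as ``the cheapest way to pay for one red triangle'': either the cut removes at least two red edges of that triangle (costing $\ge 2\lambda_3/(9\c)$ on $I_3$ by Claim \ref{claim:nonempty-red-two-red}), or it leaves that triangle's red edges intact, in which case the nodes of $\closure(R_i)$ contribute the $\ctwo\alpha\lambda_1 + 3(\c^2/2-\alpha)\lambda_4$ trade-off on instances $I_1$ and $I_4$. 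Each of the $3$ red triangles independently incurs at least $g$, which is morally why the factor $3$ appears.

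First I would handle $r=3$: here $\gamma(3,\alpha) = 6\lambda_3/(9\c) = 3\cdot(2\lambda_3/(9\c)) \ge 3g$ trivially, since $g \le 2\lambda_3/(9\c)$ by definition. Next, for $r \in \{0,1,2\}$ I would split the $\ctwo\alpha\lambda_1$ and $3(\frac{(3-r)\c^2}{2}-\alpha)\lambda_4$ terms across the $3-r$ triangles whose red edges are untouched and recombine: writing $\alpha = \sum_{i}\alpha_i$ with $\alpha_i \in [0,\c^2/2]$ ranging over the $3-r$ such triangles, the quantity $\ctwo\alpha\lambda_1 + 3(\frac{(3-r)\c^2}{2}-\alpha)\lambda_4 = \sum_i\left(\ctwo\alpha_i\lambda_1 + 3(\frac{\c^2}{2}-\alpha_i)\lambda_4\right)$, and each summand is at least $\min_{\beta\in[0,\c^2/2]}\{\ctwo\beta\lambda_1 + 3(\frac{\c^2}{2}-\beta)\lambda_4\} \ge g$; meanwhile the $r$ triangles whose red edges are cut contribute $\frac{2r}{9\c}\lambda_3 = r\cdot(2\lambda_3/(9\c)) \ge rg$. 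Adding up, $\gamma(r,\alpha) \ge (3-r)g + rg = 3g$, as desired. Taking the minimum over $r$ and $\alpha$ then gives $\gamma^* \ge 3g$.

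The only mild subtlety — not really an obstacle — is the feasibility bookkeeping: I must check that when the cut leaves $3-r$ red triangles uncut, the mass $\alpha$ of label-$5$ nodes in those triangles' closures can indeed be split as $\alpha = \sum \alpha_i$ with each $\alpha_i$ respecting the per-triangle cap $\c^2/2$, which holds because $\closure(R_i)$ has at most $(\c n+1)(\c n+2)/2 \le (\c^2/2)(n+1/\c)(n+2/\c)$ nodes and the $\closure(R_i)$ are pairwise disjoint (using $\c < 1/2$), exactly the bound already invoked in the proof of Theorem \ref{theorem:gap-in-convex-combination}. With that in hand the argument is purely a convexity/decomposition computation, and no term needs to be handled per-triangle beyond the linear splitting above.
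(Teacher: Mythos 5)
Your proof is correct and follows essentially the same approach as the paper: both arguments amount to decomposing $\gamma(r,\alpha)$ into three per-triangle contributions, each bounded below by $g$ (the paper's $r=0,1,2$ cases with the identities $x+2y\ge 3\min\{x,y\}$ and $2x+y\ge 3\min\{x,y\}$ are exactly what your splitting $\alpha=\sum_i\alpha_i$ with $\alpha_i\in[0,\c^2/2]$ produces). Your presentation merely unifies the paper's case analysis into a single linear-decomposition step, and your ``feasibility bookkeeping'' remark is not even needed since the claim is a purely algebraic statement about the minimum over $\alpha\in[0,(3-r)\c^2/2]$.
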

\begin{proof}
Let $\gamma(r):=\min_{\alpha\in [0,(3-r)c^2/2]}\gamma(r,\alpha)$.
If $r=3$, then the claim is clear. We consider the three remaining cases.
\begin{enumerate}[label=(\Roman*)]
\item Say $r=0$. Then,
\begin{align*}
\gamma(0)
&= \min_{\alpha\in \left[0,\frac{3\c^2}{2}\right]}\left\{ \ctwo \alpha \lambda_1 + 3\left(\frac{3\c^2}{2}-\alpha\right)\lambda_4\right\}
= 3\min_{\alpha\in \left[0,\frac{\c^2}{2}\right]}\left\{ \ctwo \alpha \lambda_1 + 3\left(\frac{\c^2}{2}-\alpha\right)\lambda_4\right\}.\\
\end{align*}
\item Say $r=1$. Then,
\begin{align*}
\gamma(1)
&= \min_{\alpha\in \left[0,\c^2\right]}\left\{ \ctwo \alpha \lambda_1 + \frac{2}{9\c}\lambda_3+3\left(\c^2-\alpha\right)\lambda_4\right\}\\
&= \frac{2}{9\c}\lambda_3+\min_{\alpha\in \left[0,\frac{\c^2}{2}\right]}\left\{ 2\cdot\ctwo \alpha \lambda_1 + 3\left(\c^2-2\alpha\right)\lambda_4\right\}\\
&= \frac{2}{9\c}\lambda_3+2\min_{\alpha\in \left[0,\frac{\c^2}{2}\right]}\left\{ \ctwo \alpha \lambda_1 + 3\left(\frac{\c^2}{2}-\alpha\right)\lambda_4\right\}\\
&\ge 3\min\left\{\frac{2\lambda_3}{9\c},\min_{\alpha\in \left[0,\frac{\c^2}{2}\right]}\left\{\ctwo\alpha\lambda_1+3\left(\frac{\c^2}{2}-\alpha\right)\lambda_4 \right\} \right\},
\end{align*}
where the last inequality is from the identity $x+2y\ge 3\min\{x,y\}$ for all $x,y\in \R$.
\item Say $r=2$. Then,
\begin{align*}
\gamma(2)
&= \min_{\alpha\in \left[0,\frac{\c^2}{2}\right]}\left\{ \ctwo \alpha \lambda_1 + \frac{4}{9\c}\lambda_3+3\left(\frac{\c^2}{2}-\alpha\right)\lambda_4\right\}\\
&= \frac{4}{9\c}\lambda_3+\min_{\alpha\in \left[0,\frac{\c^2}{2}\right]}\left\{ \ctwo \alpha \lambda_1 + 3\left(\frac{\c^2}{2}-\alpha\right)\lambda_4\right\}\\
&\ge 3\min\left\{\frac{2\lambda_3}{9\c},\min_{\alpha\in \left[0,\frac{\c^2}{2}\right]}\left\{\ctwo\alpha\lambda_1+3\left(\frac{\c^2}{2}-\alpha\right)\lambda_4 \right\} \right\},
\end{align*}
where the last inequality is from the identity $2x+y\ge 3\min\{x,y\}$ for all $x,y\in \R$.

\end{enumerate}
\end{proof}

\section{Size of non-opposite cuts in $\Delta_{k,n}$} \label{sec:non-opposite-cut-size}
In this section, we prove Lemma \ref{lemma:non-opposite-cuts-in-discretezed-simplex}. In fact, we prove a general result for $\Delta_{k,n}$, that may be useful for obtaining improved bounds by considering higher dimensional simplices. Our result is an extension of a theorem of Mirzakhani and Vondr\'ak \cite{MV15} on Sperner-admissible labelings.

A labeling $\ell: \Delta_{k,n} \to [k]$ is \emph{Sperner-admissible} if $\ell(x) \in \supp(x)$ for every $x \in \Delta_{k,n}$. We say that $x\in \Delta_{k,n}$ has an \emph{inadmissible label} if $\ell(x) \notin \supp(x)$. Let $H_{k,n}$ denote the hypergraph whose node set is $\Delta_{k,n}$ and whose hyperedge set is
\[\mathcal{E}:=\left\{\left\{\frac{n-1}{n}x+\frac{1}{n}e_1,\frac{n-1}{n}x+\frac{1}{n}e_2,\dots,
\frac{n-1}{n}x+\frac{1}{n}e_k\right\}: x \in \Delta_{k,n-1}\right\}.\]
Each hyperedge $e \in \mathcal{E}$ has $k$ nodes, and if $x,y \in e$, then there exist distinct $i,j\in[n]$ such that $x-y=\frac{1}{n}e_i-\frac{1}{n}e_j$.  We remark that $H_{k,n}$ has $\binom{n+k-1}{k-1}$ nodes and $\binom{n+k-2}{k-1}$ hyperedges. Geometrically, the hyperedges correspond to simplices that are translates of each other and share at most one node.
Given a labeling $\ell$, a hyperedge of $H_{k,n}$ is \emph{monochromatic} if all of its nodes have the same label. Mirzakhani and Vondr\'ak showed the following result that a Sperner-admissible labeling of $H_{k,n}$ does not have too many monochromatic hyperedges, which also implies that the number of non-monochromatic hyperedges is large.

\begin{theorem}[Proposition 2.1 in \cite{MV15}] \label{thm:MV}
Let $\ell$ be a Sperner-admissible labeling of $\Delta_{k,n}$. Then, the number of monochromatic hyperedges in $H_{k,n}$ is at most $\binom{n+k-3}{k-1}$, and therefore the number of non-monochromatic hyperedges is at least $\binom{n+k-3}{k-2}$.
\end{theorem}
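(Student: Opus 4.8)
The plan is to prove the first assertion — at most $\binom{n+k-3}{k-1}$ monochromatic hyperedges — by constructing an injection from the monochromatic hyperedges of $H_{k,n}$ into a set of size $\binom{n+k-3}{k-1}$; the second assertion then follows at once, since $H_{k,n}$ has $\binom{n+k-2}{k-1}$ hyperedges and $\binom{n+k-2}{k-1}-\binom{n+k-3}{k-1}=\binom{n+k-3}{k-2}$ by Pascal's identity. Throughout I write $h_x$ for the hyperedge indexed by $x\in\Delta_{k,n-1}$, so $h_x=\{v^x_j:j\in[k]\}$ with $v^x_j:=\frac{n-1}{n}x+\frac{1}{n}e_j$, and I note that $\supp(v^x_j)=\supp(x)\cup\{j\}$.

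First I would record what Sperner-admissibility forces on a monochromatic hyperedge. Suppose $h_x$ is monochromatic with common label $c$. Admissibility gives $c\in\supp(v^x_j)=\supp(x)\cup\{j\}$ for every $j\in[k]$; picking any $j\neq c$ (possible since $k\ge2$) forces $c\in\supp(x)$, i.e.\ $x_c>0$ and hence $x_c\ge\frac{1}{n-1}$. In particular the common label of a monochromatic hyperedge is determined by the hyperedge, so a monochromatic $h_x$ carries a well-defined pair $(x,c)$ with $c\in\supp(x)$.

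The main step is the injection. Let $T:=\{z\in(\tfrac{1}{n-1}\Z_{\ge0})^k:\sum_{i}z_i=\tfrac{n-2}{n-1}\}$; the map $z\mapsto (n-1)z$ identifies $T$ with the set of $k$-tuples of nonnegative integers summing to $n-2$, so $|T|=\binom{n+k-3}{k-1}$. Define $\phi$ on monochromatic hyperedges by $\phi(h_x):=x-\frac{1}{n-1}e_c$, where $c$ is the common label of $h_x$; by the previous paragraph $\phi(h_x)\in T$. To see that $\phi$ is injective, suppose $\phi(h_x)=\phi(h_{x'})=:z$ for monochromatic hyperedges with labels $c$ and $c'$. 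If $c=c'$ then $x=x'$ and the hyperedges coincide. If $c\neq c'$, consider the node $u:=\frac{n-1}{n}z+\frac{1}{n}e_c+\frac{1}{n}e_{c'}$. From $\frac{n-1}{n}x=\frac{n-1}{n}z+\frac{1}{n}e_c$ we get $u=\frac{n-1}{n}x+\frac{1}{n}e_{c'}=v^x_{c'}\in h_x$, so $u$ has label $c$; symmetrically $u=\frac{n-1}{n}x'+\frac{1}{n}e_{c}=v^{x'}_{c}\in h_{x'}$, so $u$ has label $c'$ — contradicting $c\neq c'$. Hence $\phi$ is injective and the number of monochromatic hyperedges is at most $|T|=\binom{n+k-3}{k-1}$, which completes the proof.

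The one genuinely delicate point is noticing the shared node $u\in h_x\cap h_{x'}$ that drives the injectivity argument; once it is written down, everything reduces to routine bookkeeping in the lattice $\frac{1}{n}\Z^k$. I would also remark that the boundary cases cause no trouble: the argument needs only $k\ge2$ (to choose an index $j\neq c$ in the forced-support step), and the binomial identities invoked are valid throughout the relevant range of $n$.
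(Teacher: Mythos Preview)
Your proof is correct. Note, however, that the paper does not supply its own proof of this statement: it is quoted verbatim as Proposition~2.1 of Mirzakhani and Vondr\'ak \cite{MV15} and used as a black box. Your injection $h_x\mapsto x-\tfrac{1}{n-1}e_c$ (with $c$ the common label) into the set of nonnegative $(n-1)$-lattice points of total mass $\tfrac{n-2}{n-1}$ is exactly the argument in \cite{MV15}; the shared-vertex contradiction you use for injectivity is the same device they employ. So there is no divergence to discuss --- you have reproduced the original proof.
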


Our main result of this section is an extension of the above result to the case when there are some inadmissible labels on a single face of $\Delta_{k,n}$. We show that a labeling in which all inadmissible labels are on a single face still has a large number of non-monochromatic hyperedges. We will denote the nodes $x\in \Delta_{k,n}$ with $\supp(x)\subseteq [k-1]$ as $\face(s_1,\ldots,s_{k-1})$.

\begin{theorem} \label{thm:nonmon}
Let $\ell$ be a labeling of $\Delta_{k,n}$ such that
all inadmissible labels are on $\face(s_1,\ldots, s_{k-1})$ and the number of nodes with inadmissible labels is $\beta \frac{(n+k-2)!}{n!}$ for some $\beta$.
Then, the number of non-monochromatic hyperedges of $H_{k,n}$ is at least
\[
\left(\frac{1}{(k-2)!}-\beta\right)\frac{(n+k-3)!}{(n-1)!}.
\]
\end{theorem}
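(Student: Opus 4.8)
The plan is to reduce to the Sperner-admissible case (Theorem \ref{thm:MV}) by a counting argument that controls how many hyperedges can be ``destroyed'' (i.e. turned monochromatic) by the presence of inadmissible labels on the face $\face(s_1,\dots,s_{k-1})$. First I would observe that the face $\face(s_1,\dots,s_{k-1})$ is itself a copy of $\Delta_{k-1,n}$, and that every hyperedge of $H_{k,n}$ that lies entirely inside this face is a hyperedge of the corresponding hypergraph $H_{k-1,n}$ on the face; all other hyperedges have exactly one node off the face (the node obtained by adding $\tfrac1n e_k$ to a point $x\in\Delta_{k,n-1}$ with $x_k=0$) — in fact, a hyperedge either lies in the face or meets it in a single node or avoids it entirely, and the ones meeting it in a single node are the interesting ones since that off-face node always has an admissible (support-respecting) label.

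Next I would split the labeling $\ell$ into its restriction $\ell_{\text{face}}$ on $\face(s_1,\dots,s_{k-1})\cong\Delta_{k-1,n}$ and the rest. The key structural point: take any hyperedge $e$ of $H_{k,n}$ that is \emph{not} contained in the face. Then $e$ has a node $v$ with $k\in\supp(v)$, so the only way $e$ can fail to be non-monochromatic is if all of $e$ shares the label of $v$; but one can instead argue via the admissible node. The cleaner route is: I would apply Theorem \ref{thm:MV} to a \emph{modified} labeling $\ell'$ obtained from $\ell$ by re-labeling, for each node $x$ on the face with an inadmissible label, $\ell'(x)$ to be some fixed element of $\supp(x)$ — this makes $\ell'$ Sperner-admissible. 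Theorem \ref{thm:MV} then gives many non-monochromatic hyperedges of $\ell'$. The difference between the non-monochromatic hyperedge count of $\ell$ and of $\ell'$ is at most the number of hyperedges incident to a node whose label was changed, and each node lies in at most $k-1$ hyperedges of $H_{k,n}$ — wait, more carefully, a node of $\Delta_{k,n}$ lies in at most $k$ hyperedges, but a node on the face $\face(s_1,\dots,s_{k-1})$ with $x_k=0$ has a restricted set of hyperedges through it, at most $k-1$ of them lying in the face plus possibly ones leaving it. I would count this bound precisely and subtract.

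Concretely, with $\beta\frac{(n+k-2)!}{n!}$ inadmissible nodes, each incident to at most $(k-2)$ hyperedges that could be ``lost'' relative to the admissible count — here the bound on incident hyperedges must be chosen so the final arithmetic produces the stated $\left(\tfrac1{(k-2)!}-\beta\right)\tfrac{(n+k-3)!}{(n-1)!}$ — I would conclude: (number of non-mono.\ hyperedges of $\ell$) $\ge$ (number of non-mono.\ hyperedges of $\ell'$) $-$ (number of hyperedges incident to a relabeled node) $\ge \binom{n+k-3}{k-2} - \beta\cdot\frac{(n+k-2)!}{n!}\cdot(k-2)$, and then verify this equals the claimed quantity by expanding $\binom{n+k-3}{k-2}=\frac{1}{(k-2)!}\cdot\frac{(n+k-3)!}{(n-1)!}$ and checking $(k-2)\cdot\frac{(n+k-2)!}{n!}=(k-2)\cdot\frac{(n+k-3)!}{(n-1)!}\cdot\frac{n+k-2}{n}$; so the per-node incidence bound I actually want is roughly $\frac{n+k-2}{n}$ times something — I would recheck the exact combinatorial identity to pin down whether the right incidence bound is $k-2$, or whether only hyperedges \emph{internal to the face} matter (since off-face hyperedges through an inadmissible node $x$ contain the off-face node which has label in $\supp\supseteq\{k\}\ne\ell(x)$'s fix, keeping them non-monochromatic regardless). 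I expect the main obstacle to be exactly this bookkeeping: showing that re-labeling an inadmissible node to an admissible one only affects hyperedges \emph{within} the face (because any hyperedge leaving the face already contains a node with a genuinely different support, hence tends to stay non-monochromatic), so that the effective ``loss'' per node is bounded by the within-face degree, which is what makes the constant work out to $\frac1{(k-2)!}-\beta$ rather than something weaker. Once that localization is established, the rest is the routine factorial identity check.
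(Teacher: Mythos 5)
Your outer skeleton matches the paper's: relabel the inadmissible nodes to get a Sperner-admissible labeling $\ell'$, invoke Theorem \ref{thm:MV}, and control the number of hyperedges that are monochromatic under $\ell$ but contain an inadmissible node. However, the entire content of the theorem lives in that last step, and your proposal does not supply it; worse, the route you sketch cannot supply it. The loss you must bound is $\beta\frac{(n+k-3)!}{(n-1)!}=\frac{n}{n+k-2}\,|Z|$, where $Z$ is the set of inadmissible nodes --- i.e.\ \emph{strictly fewer than one} lost hyperedge per inadmissible node on average. So no per-node incidence bound ($k-2$, $k-1$, or any $d\ge 1$) can close the arithmetic, as you yourself half-notice. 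Your fallback hope that only ``within-face'' hyperedges matter is also unavailable: no hyperedge of $H_{k,n}$ is contained in the face at all (each hyperedge meets $\face(s_1,\dots,s_{k-1})$ in either $0$ or $k-1$ nodes, never $k$ and never exactly $1$), and the single off-face node of a face-meeting hyperedge has $k$ in its support, so labeling it $k$ is admissible --- hence such a hyperedge can perfectly well be monochromatic with label $k$. Your premise that these hyperedges ``tend to stay non-monochromatic'' is exactly what fails.

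The paper's actual argument for the key claim is: a monochromatic hyperedge containing an inadmissible node must have \emph{all} $k-1$ of its face-nodes labeled $k$, hence all in $Z$; mapping each such hyperedge to its node of $Z$ with largest first coordinate is injective (the hyperedge is reconstructible from that node), which already gives the naive bound $|Z|$. The extra factor $\frac{n}{n+k-2}$ comes from a second counting argument: one defines the subset $Z''\subseteq Z$ of points that are coordinate-wise maximal along certain lines $Z_x^i$, shows the image of the injection avoids $Z''$, and proves $(k-2)\,|Z\setminus Z''|\le n\,|Z''|$ by charging each $x\in Z\setminus Z''$ to $k-2$ distinct points of $Z''$ while each $y\in Z''$ absorbs at most $n$ charges (using $\sum_i y_i\le 1$). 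This yields $|Z\setminus Z''|\le\frac{n}{n+k-2}|Z|$ and hence the sharp constant. Without some argument of this kind your proof gives at best $\bigl(\frac{1}{(k-2)!}-\beta\frac{n+k-2}{n}\bigr)\frac{(n+k-3)!}{(n-1)!}$ (and with your stated $(k-2)$ incidence factor, something far weaker still), which is not the claimed bound and would degrade the downstream constant in Lemma \ref{lemma:non-opposite-cuts-in-discretezed-simplex}.
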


\begin{proof}
Let $Z:= \{x \in \face(s_1,\ldots,s_{k-1}): \ell(x) = k\}$, i.e.\ $Z$ is the set of nodes in $\face(s_1,\ldots,s_{k-1})$ having an inadmissible label. Let us call a hyperedge of $H_{k,n}$ \emph{inadmissible} if the label of one of its nodes is inadmissible.



\begin{claim}
There are at most $\beta\frac{(n+k-3)!}{(n-1)!}$ inadmissible monochromatic hyperedges.
\end{claim}


\begin{proof}
Let $\mathcal{E}'$ be the set of inadmissible monochromatic hyperedges.
Each hyperedge $e\in \mathcal{E}'$ has exactly $k-1$ nodes from $\face(s_1,\ldots,s_{k-1})$ and they all have the same label as $e$ is monochromatic. Thus, each $e\in \mathcal{E}'$ contains $k-1$ nodes from $Z$. We define an injective map $\varphi:\mathcal{E}' \to Z$ by letting $\varphi(e)$ to be the node $x\in e\cap Z$ with the largest $1$st coordinate.
Notice that if $x=\varphi(e)$, then the other nodes of $e$ are $x-(1/n)e_1+(1/n)e_i$ ($i=2,\dots,k$), and all but the last one are in $Z$. In particular, $x_1$ is positive.

Let $Z' \subseteq Z$ be the image of $\varphi$. For $x \in Z$ and $i \in \{2,\dots,k-1\}$, let
\[
Z_x^i:=\{y \in Z: y_j=x_j\ \forall j \in [k-1]\setminus \{1,i\}\}.
\]
Since $y_k=0$ and $\|y\|_1=1$ for every $y \in Z$, the nodes of $Z_x^i$ are on a line containing $x$. It also follows that
$Z_x^i \cap Z_x^j=\{x\}$ if $i \neq j$. Let
\[
Z'':=\{x \in Z: \exists i \in \{2,\dots, k-1\}\ \text{such that}\ x_i \geq y_i\ \forall y \in Z_x^i \}.
\]
We observe that if $x\in Z'$, then for each $i\in \{2,\ldots, k-1\}$, the node $y=x-(1/n)e_1+(1/n)e_i$ is in $Z$ and hence, $y\in Z_x^i$ with $y_i>x_i$.
In particular, this implies that $Z'\cap Z''=\emptyset$.
We now compute an upper bound on the size of $Z\setminus Z''$, which gives an upper bound on the size of $Z'$ and hence also on the size of $\mathcal{E}'$, as $|Z'|=|\mathcal{E}'|$. For each node $x \in Z \setminus Z''$ and for every $i \in \{2,\dots,k-1\}$, let $z_x^i$ be the node in $Z'' \cap Z_x^i$ with the largest $i$th coordinate. Clearly $z_x^i \neq z_x^j$ if $i \neq j$, because $Z_x^i \cap Z_x^j=\{x\}$.

For given $y\in Z''$ and $i\in \{2,\ldots, k-1\}$, we want to bound the size of $S:=\{x\in Z\setminus Z'':z_x^i=y\}$. Consider $a\in S$. Then, $z_a^i=y$ implies that the node in $Z''\cap Z_a^i$ with the largest $i$-th coordinate is $y$. That is, $y_j=a_j$ for all $j\in [k-1]\setminus \{1,i\}$ and moreover $y_i\ge a_i$.
If $y_i=a_i$, then $y=a$, so $a$ is in $Z''$ which contradicts $a\in S$. Thus, $y_i>a_i$ for any $a \in S$, i.e.\ the nodes in $S$ are on the line $Z_y^i$ and their $i$-th coordinate is strictly smaller than $y_i$. This implies that $|S| \leq ny_i$. Consequently, the size of the set $\{x \in Z \setminus Z'': y=z_x^i \text{ for some } i\in \{2,\ldots, k-1\}\}$ is at most $n$, since $\sum_{i=2}^{k-2} y_i \leq \|y\|_1=1$.

For each $x\in Z\setminus Z''$, we defined $k-2$ distinct nodes $z_x^2,\ldots, z_x^{k-1}\in Z''$. Moreover, for each $y\in Z''$, we have at most $n$ distinct nodes $x$ in $Z\setminus Z''$ for which there exists $i\in \{2,\ldots, k-1\}$ such that $y=z_x^i$. Hence, $(k-2)|Z\setminus Z''|\le n|Z''|$, and therefore $|Z\setminus Z''|\le (n/(n+k-2))|Z|$.
This gives
\[|\mathcal{E}'|=|Z'| \leq |Z \setminus Z''|\leq \frac{n}{n+k-2}|Z| \leq \beta \frac{(n+k-2)!}{n!}\frac{n}{n+k-2}
=\beta\frac{(n+k-3)!}{(n-1)!},\]
as required.
\end{proof}

Let $\ell'$ be a Sperner-admissible labeling obtained from $\ell$ by changing the label of each node in $Z$ to an arbitrary admissible label. By Theorem \ref{thm:MV}, the number of monochromatic hyperedges for $\ell'$ is at most $\binom{n+k-3}{k-1}$. By combining this with the claim, we get that the number of monochromatic hyperedges for $\ell$ is at most $\binom{n+k-3}{k-1}+\beta\frac{(n+k-3)!}{(n-1)!}$. Since $H_{k,n}$ has $\binom{n+k-2}{k-1}$ hyperedges, the number of non-monochromatic hyperedges is at least
\begin{align*}
\binom{n+k-2}{k-1}-\binom{n+k-3}{k-1}-\beta\frac{(n+k-3)!}{(n-1)!}
&=\binom{n+k-3}{k-2}-\beta\frac{(n+k-3)!}{(n-1)!}\\
&= \left(\frac{1}{(k-2)!}-\beta \right)\frac{(n+k-3)!}{(n-1)!}.
\end{align*}
\end{proof}
We note that Theorem \ref{thm:nonmon} is tight for the extreme cases where $\beta=0$ and $\beta=1/(k-2)!$.  

We now derive Lemma \ref{lemma:non-opposite-cuts-in-discretezed-simplex} from Theorem \ref{thm:nonmon}. We restate Lemma \ref{lemma:non-opposite-cuts-in-discretezed-simplex} for convenience.
\lemmaNonOppCutSize*
\begin{proof}[Proof of Lemma~\ref{lemma:non-opposite-cuts-in-discretezed-simplex}]

Let $\ell$ be the labeling of $\Delta_{4,n}$ obtained from $P$ by setting $\ell(x)=4$ if $P(x)=5$, and $\ell(x)=P(x)$ otherwise. This is a labeling with $(\frac{1}{2}-\alpha) (n+1)(n+2)$ nodes having an inadmissible label, all on $\face(s_1,s_2,s_3)$. We apply Theorem \ref{thm:nonmon} with parameters $k=4$, $\beta=\frac{1}{2}-\alpha$, and the labeling $\ell$.
By the theorem, the number of non-monochromatic hyperedges in $H_{4,n}=(\Delta_{4,n},\mathcal{E})$ under labeling $\ell$ is at least $\alpha n(n+1)$.

We observe that for each hyperedge $e=\{u_1,u_2,u_3,u_4\}\in \mathcal{E}$, the subgraph $G[e]$ induced by the nodes in $e$ contains $6$ edges. Also, for any two hyperedges $e_1$ and $e_2$, the edges in the induced subgraphs $G[e_1]$ and $G[e_2]$ are disjoint as $e_1$ and $e_2$ can share at most one node. Moreover, for each non-monochromatic hyperedge $e\in \mathcal{E}$, at least $3$ edges of $G[e]$ are in $\delta(P)$. Thus, the number of edges of $G$ that are in $\delta(P)$ is at least $3\alpha n (n+1)$.

\end{proof} 
\section{Properties of the $3$-way cut instance in \cite{AMM17}}\label{sec:3-way-instance-properties}
In this section, we prove Lemma \ref{lemma:3-way-instance-properties} and Corollary \ref{coro:red-island-cut-cost} which are properties of the gap instance in \cite{AMM17}.

\subsection{The gap instance in \cite{AMM17}}
In this section, we summarize the relevant background about the gap instance against non-opposite $3$-way cuts designed by Angelidakis, Makarychev and Manurangsi \cite{AMM17}. For our purposes, we scale the costs of their instance by a factor of $6/5$ as it will be convenient to work with them. We describe this scaled instance now.

Let $\mathcal{G} = (\VFace,\EFace)$ where $\EFace:=\{xy:x,y\in \Delta_{3,n}, \|x-y\|_1=2/n\}$.
Their instance is obtained by dividing $\Delta_{3,n}$ into a middle hexagon $H:=\{x\in \Delta_{3,n}:x_i\le 2/3\ \forall\ i\in [3]\}$ and three corner triangles $T_1, T_2, T_3$, where $T_i:=\{x\in \Delta_{3,n}: x_i>2/3\}$. To define the edge costs, we let $\rho:=3/(5n)$.
The cost of the edges in $\mathcal{G}[H]$ is $\rho$. The cost of the non-boundary edges in $\mathcal{G}[T_i]$ that are not parallel to the opposite side of $e^i$ is also $\rho$. The cost of the non-boundary edges in $\mathcal{G}[T_i]$ that are parallel to the opposite side of $e_i$ are zero. The cost of the boundary edges in $L_{ij}$ are as follows: the edge closest to $e^i$ has cost $(n/3)\rho$, the second closest edge to $e^i$ has cost $(n/3-1)\rho$, and so on. See Figure \ref{fig:AMM-instance} for an example. We will denote the resulting graph with edge-costs as $J$.
The cost of a subset $F$ of edges on the instance $J$ is $Cost_J(F):=\sum_{e \in F} w(e)$.

\begin{figure}[htb]
\centering
\includegraphics[width=0.4\textwidth]{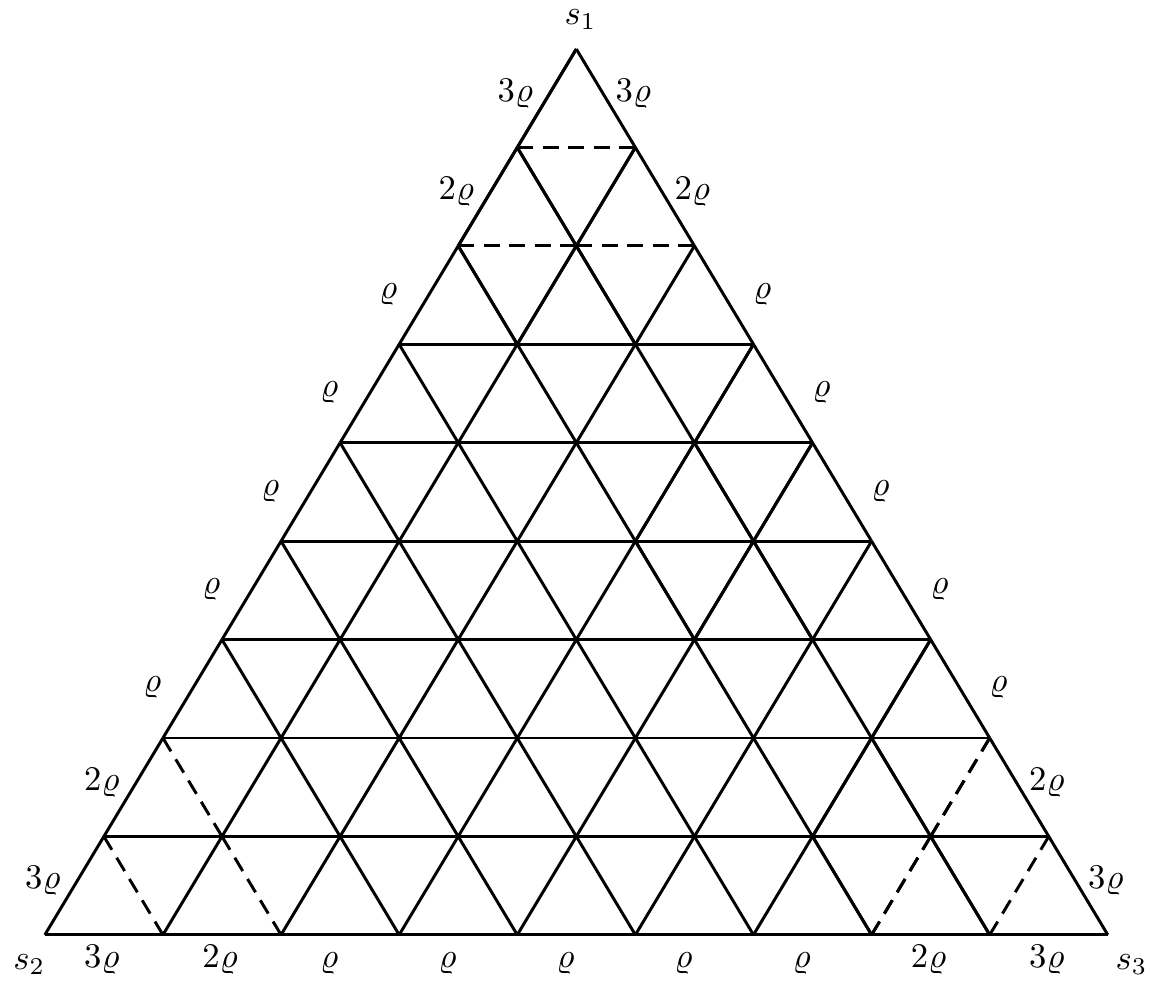}
\caption{The instance in \cite{AMM17} for $n=9$.}
\label{fig:AMM-instance}
\end{figure}

For a subset of edges $F \subset \EFace$, let $\mathcal{G}-F$ denote the graph $(\VFace,\EFace\setminus F)$.
We need the following two results about their instance. The first result shows that the cost of non-opposite cuts on their instance is at least $1.2$.
\begin{lemma}\label{lemma:non_opposite_cut_cost}\cite{AMM17}
For every non-opposite cut $Q:\VFace\rightarrow [4]$, the cost of $Q$ on instance $J$ is at least $1.2-\errorterm$.
\end{lemma}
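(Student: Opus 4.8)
The final statement to prove is Lemma~\ref{lemma:non_opposite_cut_cost}, which asserts that every non-opposite cut on the scaled AMM instance $J$ has cost at least $1.2 - 1/n$. Since the paper explicitly attributes this to \cite{AMM17}, the plan is to reconstruct the AMM argument rather than invent something new. The approach is a weighting/charging argument: one exhibits a carefully chosen set of ``test pairs'' of nodes (or, dually, a fractional packing of paths) such that any non-opposite cut must pay enough on them.

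\textbf{Proof plan.} First I would set up the geometry: recall that $\Delta_{3,n}$ is split into the central hexagon $H$ and the three corner triangles $T_i$, with the boundary segments $L_{ij}$ carrying a linearly-decreasing cost profile (the edge nearest $e^i$ costs $(n/3)\rho$, the next $(n/3-1)\rho$, etc.), where $\rho = 3/(5n)$. The total edge weight works out to $n + O(1)$ after the $6/5$ scaling, so ``cost at least $1.2$'' is the natural normalized statement. Next I would classify a non-opposite cut $Q:\Delta_{3,n}\to[4]$ according to how it behaves near each corner $e^i$: either $Q$ separates $e^i$ from the rest of its corner triangle ``early'' (close to $e^i$) or ``late'', and either label $4$ (the new label) appears substantially or it does not. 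The heart of the AMM construction is that the cost profile on $L_{ij}$ is tuned so that cutting $L_{ij}$ near $e^i$ is expensive, while the uniform cost $\rho$ inside $H$ and the zero-cost ``parallel'' edges inside each $T_i$ force any cut that instead tries to go through the interior to accumulate length. The key quantitative step is to lower bound, for each corner, the contribution $\mathrm{Cost}_J(\delta(Q)\cap (L_{ij}\cup L_{ik}\cup \mathcal{G}[T_i]))$ as a function of where the cut crosses, and then sum over the three corners plus the hexagon interior; the three corner contributions are independent because the $T_i$ are disjoint and share only boundary structure.

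\textbf{Key steps in order.} (1) Reduce to the case where $Q$ uses only labels $\{1,2,3\}$ on each $L_{ij}$ in a monotone way, or else directly bound using label-$4$ regions via the hexagon edges. (2) For a single corner $T_i$, parametrize the ``depth'' $d_i\in\{0,1,\dots,n/3\}$ at which the region labeled $i$ stops along the two incident sides; show the boundary edges cut on $L_{ij}\cup L_{ik}$ cost at least roughly $\rho\sum_{t>d_i}$ of the profile, i.e.\ a quadratic-in-$(n/3-d_i)$ term, while if $d_i$ is large the cut must instead traverse $\Theta(d_i)$ interior $\rho$-edges of $T_i$ going around — giving a linear term. (3) Optimize this single-corner trade-off to get a per-corner lower bound of $0.4 - O(1/n)$ (the magic constant). (4) Add the three corner bounds and a $0$ (or small positive) contribution from $H$ to get $1.2 - O(1/n)$. (5) Handle the non-opposite/label-$4$ cases: a non-opposite cut can label a node $4$ only, which if done in bulk forces cutting many $\rho$-edges in $H$, again paying; check this case never beats $1.2$.

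\textbf{Main obstacle.} The delicate part is Step (2)–(3): correctly accounting for the interaction between the linear boundary-cost profile on $L_{ij}$ and the cost of routing a cut through the triangle interior, especially because the ``parallel-to-opposite-side'' edges inside $T_i$ are free, so a cut can slide along them without cost and the charging must be to the correct non-parallel $\rho$-edges. One must be careful that the per-corner analyses do not double-count edges (they don't, since the $T_i$ are disjoint and the $L_{ij}$ are partitioned among corners in the right way) and that the non-opposite restriction is actually used — it is what rules out the cheap ``opposite'' cuts that would otherwise give a smaller gap. Since this lemma is quoted verbatim from \cite{AMM17}, in the write-up I would either reproduce their argument in full or simply cite it; given the paper's framing, citing \cite{AMM17} for this particular inequality and spending the detailed effort on the new Lemma~\ref{lemma:3-way-instance-properties} and Corollary~\ref{coro:red-island-cut-cost} is the sensible allocation of space.
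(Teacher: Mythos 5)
The paper does not prove this lemma at all: it is imported verbatim from \cite{AMM17} (it is their main lower bound, restated after the $6/5$ scaling), and the label carries the citation precisely to signal that the proof lives elsewhere. So the only ``paper's proof'' to compare against is the citation itself, and your closing remark --- that the right move is to cite \cite{AMM17} and spend the effort on Lemma \ref{lemma:3-way-instance-properties} and Corollary \ref{coro:red-island-cut-cost} --- is exactly what the authors do.

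Judged as a standalone reconstruction, however, your sketch is a plan rather than a proof, and the gap sits exactly where the lemma's entire content lies. Steps (2)--(3), the per-corner trade-off between the linearly decreasing boundary costs on $L_{ij}$ and the uniform-$\rho$ interior edges, are asserted (``optimize \ldots to get $0.4-O(1/n)$'') but not carried out, and the claim that the three corner contributions add up without interaction needs an argument: a single cut can separate several terminals from their opposite sides using overlapping edge sets, so one cannot simply sum three independent $0.4$ bounds. The actual argument in \cite{AMM17} is a case analysis on whether some terminal $s_i$ is disconnected from the entire opposite side $V_{jk}$, with the $0.4-O(1/n)$ bound in that case coming from a fractional path-packing (this is precisely what Lemma \ref{lemma:si_V_jk_cut} extracts), and a separate accounting in the remaining case; your step (5) intuition that bulk label-$4$ regions must cut many hexagon edges is also not right as stated --- the cut labeling everything except the terminals with $4$ pays its $1.2$ entirely on the expensive boundary edges incident to the terminals, not in $H$. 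None of this matters for the paper, which cites the result, but if you were to actually write the proof you would need to supply these missing pieces.
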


The second result shows that if we remove a set of edges to ensure that a terminal $s_i$ cannot reach any node in the opposite side $V_{jk}$, then the cost of such a subset of edges is at least $0.4$.
\begin{lemma}\label{lemma:si_V_jk_cut}
For $\{i,j,k\}=[3]$ and for every subset $F$ of edges in $\EFace$
such that $s_i$ cannot reach $V_{jk}$ in $\mathcal{G}-F$, the cost of $F$ on instance $J$ is at least $0.4-(\errorterm)/3$.
\end{lemma}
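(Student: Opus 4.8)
The plan is to reduce the statement to a one-dimensional cut-counting argument along the edges $L_{ij}$ and $L_{ik}$. Fix $\{i,j,k\}=[3]$ and suppose $s_i=e^i$ cannot reach any node of $V_{jk}$ in $\mathcal{G}-F$. Let $S$ be the connected component of $s_i$ in $\mathcal{G}-F$; then $S\cap V_{jk}=\emptyset$ and $F\supseteq\delta(S)$, so it suffices to lower bound $Cost_J(\delta(S))$ for any node set $S$ with $e^i\in S$ and $S\cap V_{jk}=\emptyset$. The key observation is that $\delta(S)$ must separate $e^i$ from the $x_i=0$ face inside $\face(s_1,s_2,s_3)$, and in particular $\delta(S)$ must cross \emph{every} ``layer'' $\{x\in\Delta_{3,n}: x_i = t/n\}$ for $t$ ranging over suitable values. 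First I would restrict attention to the two boundary paths $L_{ij}$ and $L_{ik}$: along each of these paths $\delta(S)$ contains at least one edge, since the path runs from $e^i\in S$ to a node in $V_{jk}\notin S$. But one edge from each of $L_{ij}$ and $L_{ik}$ is not yet enough to reach $0.4$; the heavy edges are the ones closest to $e^i$, of cost $(n/3)\rho = 1/5$ each, so two of those would already give $2/5=0.4$. The point is therefore to argue that $\delta(S)$ must either use an edge of $L_{ij}$ and an edge of $L_{ik}$ that are both close to $e^i$, or else compensate with interior edges.

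The cleaner route I would actually take is a layered/flow argument. For $t\in\{1,\ldots,\lceil n/3\rceil\}$, consider the ``annulus'' of nodes $x$ with $x_i\in\{(n-t)/n,(n-t+1)/n\}$ wait — more precisely, consider the set of edges $C_t$ of $\EFace$ joining a node with $x_i = (n-t+1)/n$ to one with $x_i=(n-t)/n$ hmm, let me state it as: the corner triangle $T_i$ together with its boundary is swept by the cuts $\{x_i \ge (n-t)/n\}$ for $t=1,2,\dots$. Since $e^i\in S$ and $S$ misses the opposite side, for each such threshold $\delta(S)$ must contain at least one edge leaving the region $\{x_i\ge (n-t)/n\}$; and these edge sets are pairwise disjoint across distinct $t$ as long as we stay inside $T_i$ (i.e., for $t\le n/3$). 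Among these crossing edges, the ones on the boundary $L_{ij}\cup L_{ik}$ at level $t$ have cost $(n/3-t+1)\rho$ each, and the interior crossing edges at that level have cost $\rho$ or $0$. So I would split into cases according to whether $\delta(S)$ stays ``near the boundary'' (in which case it picks up one heavy boundary edge per level and a short telescoping sum $\sum_{t=1}^{m}(n/3-t+1)\rho$ already exceeds $0.4$ for moderate $m$), or whether it ``cuts across'' the triangle at some level, in which case it must contain $\Omega(n)$ interior edges of cost $\rho$ spanning a full chord of $T_i$, again summing to a constant. Optimizing the worst case over where the cut transitions from boundary-hugging to chord-cutting is exactly the computation that yields the constant $0.4$ (up to the $O(1/n)$ slack).

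Concretely, the main steps are: (1) replace $F$ by $\delta(S)$ for $S=$ the component of $e^i$; (2) for each level $t$, define the crossing edge set and note $|\delta(S)\cap C_t|\ge 1$ and the $C_t$ are disjoint; (3) within each $C_t$ identify the two boundary edges (cost $(n/3-t+1)\rho$) versus interior edges (cost $\le\rho$), and observe that a crossing that is not one of the two boundary edges forces, at that level, a whole monotone lattice path of interior edges across $T_i$ of total cost $\approx c\rho n$ for an absolute constant; (4) conclude that in every case $Cost_J(\delta(S))\ge \rho\cdot(\text{combinatorial quantity}) = 0.4 - O(1/n)$, by evaluating the telescoping sum $\rho\sum_{t=1}^{n/3}(n/3-t+1) = \rho\cdot\frac{1}{2}(n/3)(n/3+1) = \frac{3}{5n}\cdot\frac{n^2}{18}(1+o(1)) = \frac{n}{30}$ — wait, that overshoots, which is consistent: hugging the boundary all the way is far more than $0.4$, so the binding case is the chord-cut near the corner, and the constant $0.4$ comes out of that optimization with the $-( 1/n)/3$ correction absorbing the discretization.

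The step I expect to be the main obstacle is (3): making precise and rigorous the claim that a component $S$ with $e^i\in S$, $S\cap V_{jk}=\emptyset$, whose boundary avoids using the near-corner boundary edges of $L_{ij}$ and $L_{ik}$, must instead contain a costly chord of interior edges — i.e., ruling out that $\delta(S)$ is ``cheap'' by weaving between the zero-cost edges (those parallel to the side opposite $e^i$) inside $T_i$. This requires carefully using the specific cost pattern in \cite{AMM17} (that only edges parallel to the opposite side are free, so any cut that actually separates $e^i$ from the opposite side within $T_i$ must cross non-free edges at each of the $\approx n/3$ parallel layers, or escape through a boundary edge whose cost grows toward the corner). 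Once this structural dichotomy is pinned down, the rest is the elementary optimization over where the cut crosses, which I would not grind through here; it yields exactly $0.4-(1/n)/3$.
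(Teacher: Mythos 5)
First, a point of reference: the paper does not prove Lemma \ref{lemma:si_V_jk_cut} at all. It is imported (after the $6/5$ scaling) from Case~1 of the proof of Lemma~3 in \cite{AMM17}, so your proposal has to stand on its own rather than be compared to an in-paper argument.

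Your step (1) is fine, but step (2) rests on a false claim, and it is the claim the whole accounting leans on. It is not true that $\delta(S)$ must contain an edge leaving every region $A_t:=\{x\in\Delta_{3,n}:x_i\ge 1-t/n\}$: two edge cuts separating the same pair need not intersect. The tight example is itself the counterexample: $S=\{e^i\}$ separates $s_i$ from $V_{jk}$, its boundary $\delta(S)$ consists of exactly the two boundary edges incident to $e^i$ (total cost $2\cdot\frac{n}{3}\rho=0.4$), and $\delta(S)$ meets $\delta(A_t)$ only for $t=0$. So the ``one crossing per level, levels pairwise disjoint'' bookkeeping never gets started. Moreover, even granting a crossing at every level, one interior edge of cost $\rho$ per level over the $\approx n/3$ levels of $T_i$ sums only to $0.2$, so the entire burden falls on the dichotomy in step (3) --- that a level not crossed on the boundary forces a full chord of $\Omega(n)$ interior edges --- which is precisely the step you concede you cannot yet make rigorous, and which as stated is not the right structural fact (a single $\delta(A_t)$ can be met in one or two interior edges; what forces many interior edges is the interaction \emph{across} levels). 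This is a genuine gap.

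If you want a self-contained proof, the clean route is max-flow/min-cut rather than level counting: exhibit a flow from $s_i$ to $V_{jk}$ of value $0.4-\errorterm/3$ with capacities equal to the edge costs of $J$. Push $1/5$ out of $e^i$ along each of $L_{ij}$ and $L_{ik}$; since the boundary capacity drops by exactly $\rho$ at each step away from the corner, divert $\rho$ at each of the first $n/3$ boundary nodes into the interior along the monotone path that keeps $x_j$ (respectively $x_k$) fixed and decreases $x_i$ until it reaches $V_{jk}$. Each such path uses only edges of cost $\rho$ (they all change the coordinate relevant to whichever region they pass through, so they are never the zero-cost parallel edges), the paths within one family are vertex-disjoint, and the two families use different edge directions, hence are edge-disjoint. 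The flow is feasible and has value $2/5$ up to the integrality of $n/3$, so every cut separating $s_i$ from $V_{jk}$ costs at least $0.4-\errorterm/3$. This also explains why the constant is exactly twice the cost of a corner boundary edge.
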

Although Lemma \ref{lemma:si_V_jk_cut} is not explicitly stated in \cite{AMM17}, its proof appears under Case $1$ in the Proof of Lemma $3$ of \cite{AMM17}. The factor $0.4$ that we have here is because we scaled their costs by a factor of $6/5$.


We next define non-oppositeness as a property of the cut-set as it will be convenient to work with this property for cut-sets rather than for cuts.
\begin{definition}\label{defn:non_opposite}
A set $F \subseteq \EFace$ of edges is a non-opposite cut-set if there is no path from $s_1$ to $V_{23}$ in $\mathcal{G}-F$, no path from $s_2$ to $V_{13}$ in $\mathcal{G}-F$, and no path from $s_3$ to $V_{12}$ in $\mathcal{G}-F$.
\end{definition}

We summarize the connection between non-opposite cut-sets and non-opposite cuts.
\begin{prop}\label{prop:delta_Q_non_opposite}
\mbox{}

\begin{enumerate}[label=(\roman*)]
\item If $Q:\Delta_{3,n}\rightarrow [4]$ is a non-opposite cut, then $\delta(Q)$ is a non-opposite cut-set. \label{item:delta_Q_nonopp1}
\item For every non-opposite cut-set $F\subseteq \EFace$, the cost of $F$ on instance $J$ is at least $1.2-\errorterm$. \label{item:delta_Q_nonopp2}
\end{enumerate}
\end{prop}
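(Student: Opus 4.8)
The plan is to handle the two parts separately. Part~\ref{item:delta_Q_nonopp1} is immediate from the definitions, and part~\ref{item:delta_Q_nonopp2} follows from Lemma~\ref{lemma:non_opposite_cut_cost} once we manufacture a genuine non-opposite cut out of an arbitrary non-opposite cut-set.

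For part~\ref{item:delta_Q_nonopp1} I would argue by contradiction. Suppose $\delta(Q)$ fails the definition; by symmetry assume there is a path $s_1=v_0,v_1,\dots,v_m$ in $\mathcal{G}-\delta(Q)$ with $v_m\in V_{23}$. No edge of this path lies in $\delta(Q)$, so all the $v_i$ carry the same $Q$-label, which is $Q(s_1)=1$. But $\supp(v_m)\subseteq\{2,3\}$, so $1\notin\supp(v_m)$, and non-oppositeness of $Q$ forces $Q(v_m)\in\supp(v_m)\cup\{4\}\subseteq\{2,3,4\}$, a contradiction. The $s_2$-to-$V_{13}$ and $s_3$-to-$V_{12}$ conditions are verified the same way.

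For part~\ref{item:delta_Q_nonopp2}, given a non-opposite cut-set $F$, I would build a non-opposite cut $Q$ with $\delta(Q)\subseteq F$; then $Cost_J(F)\ge Cost_J(\delta(Q))\ge 1.2-\errorterm$ by Lemma~\ref{lemma:non_opposite_cut_cost} (using that the edge weights of $J$ are non-negative), which is exactly the claim. To construct $Q$, look at the connected components of $\mathcal{G}-F$. First note that $s_1,s_2,s_3$ lie in three distinct components: each $s_i$ belongs to $V_{jk}$ for the complementary pair $\{j,k\}$, so a path joining any two terminals would contradict one of the three defining conditions of a non-opposite cut-set. Hence every component contains at most one terminal, and we may set $Q(v):=i$ if $v$ lies in the component of $s_i$ and $Q(v):=4$ otherwise; this is well-defined and satisfies $Q(s_i)=i$ for $i\in[3]$. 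Within a single component all vertices receive the same $Q$-label, so any edge $uv$ with $Q(u)\neq Q(v)$ crosses between two components and therefore lies in $F$; thus $\delta(Q)\subseteq F$. Finally $Q$ is non-opposite: if $Q(v)=i\in[3]$ then there is an $s_i$–$v$ path in $\mathcal{G}-F$, so $v\notin V_{jk}$ (otherwise $F$ would not be a non-opposite cut-set), i.e.\ $i\in\supp(v)$, and hence $Q(v)\in\supp(v)\cup\{4\}$.

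I do not expect a real obstacle here, since everything is essentially definitional. The one place that needs care is the last verification in part~\ref{item:delta_Q_nonopp2}: showing that a vertex labeled $i$ cannot have support avoiding $i$. This is precisely where all three clauses in Definition~\ref{defn:non_opposite} get used, and it is what makes the component-labeling a \emph{non-opposite} cut rather than merely a cut.
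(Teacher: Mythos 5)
Your proposal is correct and follows essentially the same route as the paper: part (i) by the same path-contradiction argument, and part (ii) by defining the cut via reachability from the terminals in $\mathcal{G}-F$ and invoking Lemma~\ref{lemma:non_opposite_cut_cost}. Your write-up is somewhat more careful than the paper's (explicitly checking that the terminals lie in distinct components and that the resulting labeling is non-opposite), but the underlying argument is identical.
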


\begin{proof}
\begin{enumerate}[label=(\roman*)]
\item Suppose not. Without loss of generality, suppose there exists a path from $s_1$ to $V_{23}$ in $\mathcal{G}-\delta(Q)$. Then, by the definition of $\delta(Q)$, all nodes of the path have the same label, so there exists a node $u\in V_{23}$ that is labeled as $1$, contradicting the fact that $Q$ is a non-opposite cut.
\item Consider a labeling $L:\Delta_{3,n}\rightarrow [4]$ where $L=i$ if the node $v$ is reachable from terminal $s_i$ in $\mathcal{G}-F$ and $L(v)=4$ if the node $v$ is reachable from none of the three terminals in $\mathcal{G}-F$. Since $F$ is a non-opposite cut-set, it follows that $\ell$ is a non-opposite cut. Moreover, $\delta(L)\subseteq F$.
Therefore, the claim follows by Lemma \ref{lemma:non_opposite_cut_cost}.
\end{enumerate}
\end{proof}

\subsection{Proof of Lemma \ref{lemma:3-way-instance-properties}}
We now restate and prove Lemma \ref{lemma:3-way-instance-properties}, i.e., a non-fragmenting non-opposite cut in $\Delta_{3,n}$ that has lot of nodes labeled as $4$ has large cost.
\lemmaThreeWayProps*

\begin{proof}
We first show that the labeling $Q$ may be assumed to indicate reachability in the graph $\mathcal{G}-\delta(Q)$.

\begin{claim}\label{claim:reachability-assumption}
For every non-opposite \noncornercut $Q:\Delta_{3,n}\rightarrow [4]$, there exists a labeling $Q':\Delta_{3,n}\rightarrow [4]$ such that
\begin{enumerate}[label=(\roman*)]
\item a node $v\in \VFace$ is reachable from $s_i$ in $\mathcal{G}-\delta(Q')$  iff $Q'(v) = i$,
\item $Cost_J(\delta(Q')) \leq Cost_J(\delta(Q))$,
\item the number of nodes in $\VFace$ that are labeled as $4$ by $Q$ is at most the number of nodes in $\VFace$ that are labeled as $4$ by $Q'$, and
\item $Q'$ is a non-opposite \noncornercut.
\end{enumerate}
\end{claim}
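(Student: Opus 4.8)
The plan is to let $Q'$ be the \emph{reachability labeling} associated with the cut $Q$: set $Q'(v):=i$ if $v$ is reachable from the terminal $s_i$ in the graph $\mathcal{G}-\delta(Q)$, and set $Q'(v):=4$ if $v$ is reachable from none of $s_1,s_2,s_3$ in $\mathcal{G}-\delta(Q)$. This is well defined, since no node can be reachable from two distinct terminals: concatenating the two witnessing paths yields a walk in $\mathcal{G}-\delta(Q)$ from $s_i$ to $s_j$, every edge of which joins two nodes with the same $Q$-label, forcing $i=Q(s_i)=Q(s_j)=j$. The same observation — that any path in $\mathcal{G}-\delta(Q)$ has all of its nodes carrying a common $Q$-label — will be used repeatedly below.

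First I would establish the containment $\delta(Q')\subseteq\delta(Q)$: if an edge $uv$ satisfies $Q(u)=Q(v)$, then $u$ and $v$ lie in a single component of $\mathcal{G}-\delta(Q)$, hence have identical reachability status and therefore $Q'(u)=Q'(v)$. Since the costs of $J$ are nonnegative, this yields property~(ii). Property~(iii) follows because any $v$ with $Q(v)=4$ is reachable from no terminal in $\mathcal{G}-\delta(Q)$ — a path from $s_i$ to $v$ would force $Q(v)=i\neq 4$ — so $Q'(v)=4$ as well, and the set of $4$-labeled nodes can only grow. For property~(iv) I need $Q'$ to be a non-opposite \noncornercut: if $Q'(x)=i\in[3]$ then $x$ is reachable from $s_i$ in $\mathcal{G}-\delta(Q)$ along a path of constant $Q$-label, so $Q(x)=i$, and non-oppositeness of $Q$ then gives $i\in\supp(x)$; and since $\delta(Q')\subseteq\delta(Q)$ we have $|\delta(Q')\cap L_{ij}|\le|\delta(Q)\cap L_{ij}|$ for all distinct $i,j\in[3]$, so $Q'$ is again a \noncornercut.

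The only place that requires a moment's care is property~(i), i.e.\ that $Q'$ records reachability with respect to its \emph{own} cut graph $\mathcal{G}-\delta(Q')$, not merely with respect to the smaller graph $\mathcal{G}-\delta(Q)$. For the ``only if'' direction I would argue directly from monochromaticity: a path from $s_i$ to $v$ inside $\mathcal{G}-\delta(Q')$ has constant $Q'$-label, hence $Q'(v)=i$; and if $v$ is reachable from no terminal in $\mathcal{G}-\delta(Q')$ then $Q'(v)$ cannot be any $i\in[3]$ by the ``if'' direction below, so $Q'(v)=4$. For the ``if'' direction, $Q'(v)=i$ means by definition that $v$ is reachable from $s_i$ already in $\mathcal{G}-\delta(Q)$, and since $\delta(Q')\subseteq\delta(Q)$ this graph is a subgraph of $\mathcal{G}-\delta(Q')$, so $v$ is reachable from $s_i$ there too. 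The subtlety to watch for is the mismatch of directions — restoring edges enlarges the graph, so a priori it could create spurious reachability — which is precisely why I phrase the ``only if'' direction through the monochromaticity of paths rather than by comparing $\mathcal{G}-\delta(Q)$ and $\mathcal{G}-\delta(Q')$ edge by edge. Assembling these observations verifies all four properties.
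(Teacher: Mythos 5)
Your proposal is correct and follows essentially the same route as the paper: both define $Q'$ as the reachability labeling of $\mathcal{G}-\delta(Q)$ and derive (ii)--(iv) from the containment $\delta(Q')\subseteq\delta(Q)$ together with the fact that each $S_i$ is monochromatic under $Q$. Your explicit monochromaticity argument for the ``only if'' direction of (i) is a careful treatment of a point the paper dispatches with ``by definition.''
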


\begin{proof}
\begin{figure}[htb]
\centering
\includegraphics{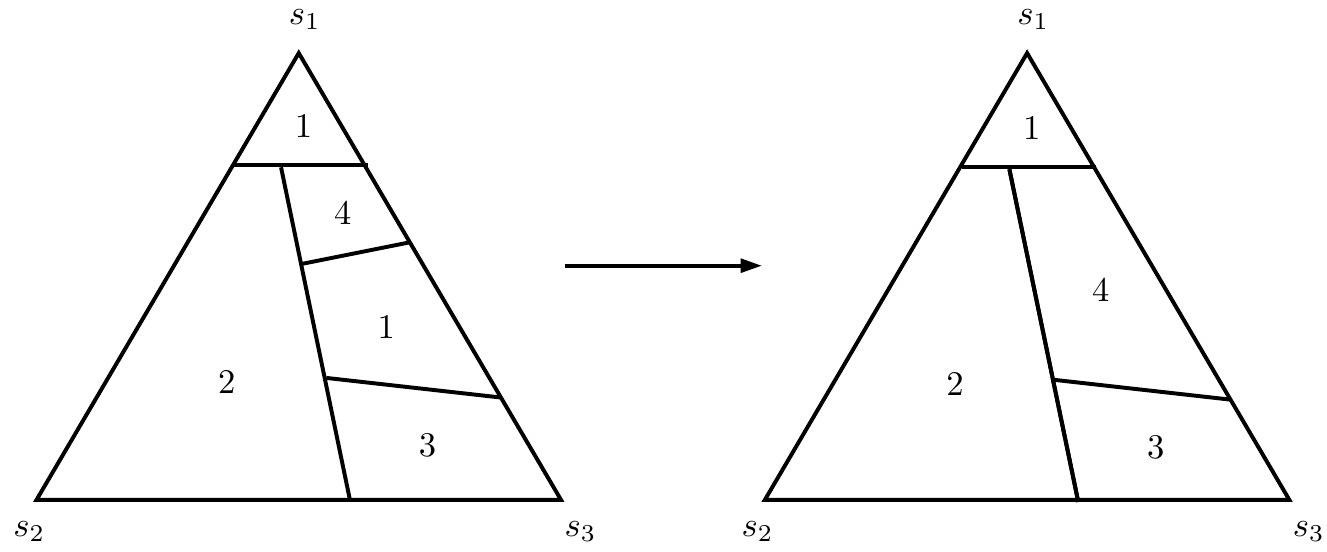}
\caption{An example of a cut $Q$ and the cut $Q'$ obtained in the proof of Claim \ref{claim:reachability-assumption}.}
\label{fig:Q-to-Q-prime}
\end{figure}


For $i \in [3]$, let $S_i$  be the set of nodes that can be reached from $s_i$ in $\mathcal{G}-\delta(Q)$. Consider a labeling $Q'$ defined by
\[
Q'(v) :=
\begin{cases}
i &\text{ if }v \in S_i, \text{ and}\\
4 &\text{ if }v \in \VFace \setminus (S_1 \cup S_2 \cup S_3).
\end{cases}
\]
See Figure \ref{fig:Q-to-Q-prime} for an example of a cut $Q$ and the cut $Q'$ obtained as above. We prove the required properties for the labeling $Q'$ below.

\begin{enumerate}[label=(\roman*)]
\item By definition, $Q'(v) = i$ iff $v$ is reachable from $s_i$ in $\mathcal{G} \setminus \delta(Q')$.
\item Since $\delta(Q')=(\delta(S_1) \cup \delta(S_2) \cup \delta(S_3)) \cap \delta(Q)$, we have that $\delta(Q') \subseteq \delta(Q)$. Hence, $Cost_J(\delta(Q'))\leq Cost_J(\delta(Q))$.
\item Let $i\in [3]$. Since all nodes of $S_i$ are labeled as $i$ by $Q$, the nodes labeled as $i$ by $Q'$ is a subset of the set of nodes labeled as $i$ by $Q$. This implies that $Q'$ is also a non-opposite cut and that the number of nodes in $\VFace$ that are labeled as $4$ by $Q$ is at most the number of nodes in $\VFace$ that are labeled as $4$ by $Q'$.
\item Since $Q$ is a \noncornercut, there exist distinct $i, j \in [3]$ such that $|\delta(Q) \cap L_{ij}|=1$.  Since $\delta(Q') \subseteq \delta(Q)$, we have that $|\delta(Q') \cap L_{ij}| \leq 1$. On the other hand, $Q'$ labels $s_i$ by $i$ and $s_j$ by $j$ and hence, $|\delta(Q') \cap L_{ij}| \geq 1$. Combining the two, we have that $|\delta(Q') \cap L_{ij}| =1$ and hence $Q'$ is a \noncornercut.
\end{enumerate}
\end{proof}

Let $\mathcal{G}':=\mathcal{G}-\delta(Q)$. By Claim \ref{claim:reachability-assumption}, we may henceforth assume that
\begin{equation}
\text{For every node $v\in V$, $v$ is reachable from $s_i$ in $\G$ iff $Q(v)=i$}.
\end{equation}
In order to show a lower bound on the cost of $Q$, we will modify $Q$ to obtain a non-opposite cut while reducing its cost by at least $\ctwo\alpha$. \cite{AMM17} showed that the cost of every non-opposite cut on $J$ is at least $1.2-\errorterm$. Therefore, the cost of $Q$ on $J$ must be at least $1.2-\errorterm + \ctwo\alpha$.

Since $Q$ is a \noncornercut, there exist distinct $i,j \in [3]$ such that $|\delta(Q) \cap L_{ij}| = 1$. Without loss of generality, suppose that $i = 1$ and $j = 3$.
For $i \in [3]$, let $S_i:=\{v \in \VFace \mid Q(v) = i\}$, i.e.\  $S_i$  is the set of nodes that can be reached from $s_i$ in $\mathcal{G}'$.
Let $B := \{v \in \VFace \mid Q(v) = 4\}$ be the set of nodes labeled as $4$ by $Q$. Then, $|B| = \alpha n^2$. We note that $S_1$, $S_2$, and $S_3$ are components of $\G$, and
the set $B$ is the union of the remaining components.

We recall that $V_{ij}$ is the set of end nodes of edges in $L_{ij}$.
We say that a node $v \in \VFace$ can reach $V_{ij}$ in $\G$ if there exists a path from $v$ to some node $w \in V_{ij}$ in $\G$.
We observe that all nodes in $V_{13}$ are reachable from either $s_1$ or $s_3$ in $\G$. In particular, this means that no node of $B$ can reach $V_{13}$ in $\G$. We partition the node set $B$ based on reachability as follows (see Figure \ref{fig:partition-of-B}):
\begin{align*}
B_1 &:= \{v \in B \mid \text{  $v$ cannot reach $V_{12}$ and $V_{23}$ in $\G$}\}, \\
B_2 &:= \{ v \in B \mid \text{ $v$ can reach $V_{12}$ but not $V_{23}$ in $\G$}\}, \\
B_3 &:= \{ v \in B \mid  \text{ $v$ can reach $V_{23}$ but not $V_{12}$ in $\G$}\}, \text{ and}\\
B_4 &:= \{v \in B \mid \text{$v$ can reach $V_{12}$ and $V_{23}$ in $\G$}\}.
\end{align*}

\begin{figure}[htb]
\centering
\includegraphics{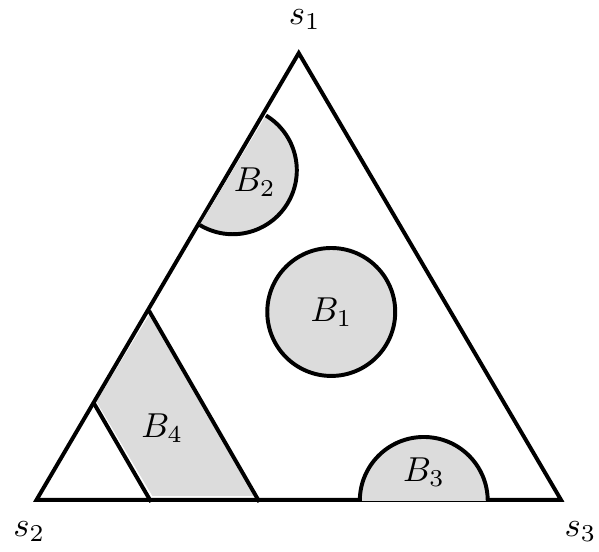}
\caption{Partition of $B$ into $B_1, B_2, B_3, B_4$.}
\label{fig:partition-of-B}
\end{figure}

For $r \in [4]$, let $\beta_r := |B_r|/n^2$. We next summarize the properties of the sets defined above.
\begin{prop}\label{prop:B_i_properties}
The sets $B_1, B_2, B_3, B_4$ defined above satisfy the following properties:
\begin{enumerate}[label=(\roman*)]
\item For every distinct $r,p\in [4]$, we have $B_r\cap B_p=\emptyset$. \label{item:Bprop1}
\item For every $r\in [4]$, we have $\delta(B_r) \subseteq \delta(Q)$, i.e.\ $B_r$ is the union of some components of $\G$. \label{item:Bprop2}
\item For every $r\in [4]$ and every edge $e \in \delta(B_r)$, one end node of $e$ is in $B_r$ and the other one is in $S_1\cup S_2\cup S_3$. \label{item:Bprop3}
\item For every distinct $r,p\in [4]$, we have $\delta(B_r) \cap \delta(B_p) = \emptyset$.\label{item:Bprop4}

\item $B = \cup_{r=1}^4 B_r, \sum_{r=1}^4 \beta_r = \alpha,$ and $\beta_r \leq \alphabound$ for every $r \in [4]$. \label{item:Bprop5}
\end{enumerate}
\end{prop}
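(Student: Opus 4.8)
The plan is to verify the five assertions in turn, with a single engine behind all of them: after the reachability normalization of $Q$ recorded via Claim~\ref{claim:reachability-assumption}, the set $B$ is a union of connected components of $\G$, and for each $r$ the set $B_r$ is again a union of connected components of $\G$ --- because the predicates ``$v$ can reach $V_{12}$ in $\G$'' and ``$v$ can reach $V_{23}$ in $\G$'' are constant along components of $\G$. Everything else is bookkeeping against this fact together with the partition $\VFace = S_1 \cup S_2 \cup S_3 \cup B$.

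First, assertion (i) and the first two parts of (v) are immediate from the definitions: the four predicates defining $B_1,\dots,B_4$ are pairwise mutually exclusive and each node of $B$ satisfies exactly one of them, so $B=\bigcup_{r=1}^4 B_r$ is a disjoint union, whence $\sum_{r=1}^4\beta_r=|B|/n^2=\alpha$. For the bound $\beta_r\le\alphabound$ I would argue crudely: $|B_r|\le|B|\le|\Delta_{3,n}|=(n+1)(n+2)/2$, and $(n+1)(n+2)/2\le 0.66\,n^2$ holds exactly for $n\ge 10$ (it rearranges to $0.32n^2-3n-2\ge 0$, whose larger root is $10$), so $\beta_r\le 0.66$; this is precisely where the hypothesis $n\ge 10$ of Lemma~\ref{lemma:3-way-instance-properties} enters.

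Next, for (ii): since $B_r$ is a union of components of $\G$, no edge of $\G$ has exactly one endpoint in $B_r$, i.e.\ every edge of $\delta(B_r)$ was deleted when forming $\G$, so $\delta(B_r)\subseteq\delta(Q)$. For (iii): take $e=uv\in\delta(B_r)$ with $u\in B_r$ and $v\notin B_r$; by $\VFace=S_1\cup S_2\cup S_3\cup B$ either $v\in S_1\cup S_2\cup S_3$ or $v\in B\setminus B_r$, and in the latter case $Q(u)=Q(v)=4$ forces $e\notin\delta(Q)$, so $u$ and $v$ lie in a common component of $\G$ and hence in a common $B_p$, contradicting $u\in B_r$, $v\notin B_r$; therefore $v\in S_1\cup S_2\cup S_3$. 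Finally (iv) follows by combining (iii) with (i): if $e=uv\in\delta(B_r)\cap\delta(B_p)$ with $r\ne p$, then (iii) forces one endpoint of $e$ into $B_r$ and the other into $S_1\cup S_2\cup S_3$, and simultaneously one endpoint into $B_p$ and the other into $S_1\cup S_2\cup S_3$; since $B_r\cap B_p=\emptyset$ and $B\cap(S_1\cup S_2\cup S_3)=\emptyset$, no assignment of $u,v$ meets both requirements, a contradiction.

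I do not expect a genuine obstacle: the only point that needs care is making precise the statement ``reachability is constant on components of $\G$, and hence $B$ and each $B_r$ are unions of components of $\G$'', which underlies (ii) and (iii); this in turn relies essentially on the normalization from Claim~\ref{claim:reachability-assumption}, so that labels $1,2,3$ coincide with reachability from $s_1,s_2,s_3$ in $\G$ and label $4$ with being unreachable from all three terminals.
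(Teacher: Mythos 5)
Your proposal is correct and follows essentially the same route as the paper: (i) and (v) by direct inspection (with the same computation $(n+1)(n+2)/2\le 0.66\,n^2$ for $n\ge 10$), and (ii)--(iv) via the observation that, after the normalization of Claim~\ref{claim:reachability-assumption}, $B$ and each $B_r$ are unions of components of $\G$ because the defining reachability predicates are constant on components. The paper phrases (ii) and (iii) as edge-level contradiction arguments rather than invoking the component structure up front, but the content is identical.
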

\begin{proof}
\begin{enumerate}[label=(\roman*)]
\item The disjointness property follows from the definition of the sets.

\item Suppose $\delta(B_r)$ is not a subset of $\delta(Q)$ for some $r\in [4]$. Without loss of generality, let $r=1$ (the proof is similar for the other cases). Then, there exists an edge $uv\in \EFace\setminus \delta(Q)$ with $u\in B_1,v\in B\setminus B_1$. Since $v$ is in $B\setminus B_1$, it follows that the node $v$ can reach either $V_{12}$ or $V_{13}$ in $\G$. Moreover, since the edge $uv$ is in $\mathcal{G}'$, it follows that the node $u$ can also reach either $V_{12}$ or $V_{13}$ in $\G$, and hence $u\not\in B_1$. This contradicts the assumption that $u\in B_1$.

\item Let $uv\in \delta(B_r)$ with $u\in B_r$ and $v\not\in B_r$. Since $Q(u)=4$, the node $u$ is not reachable from any of the terminals in $\G$. Suppose that the node $v$ is also not reachable from any of the terminals in $\G$. Then, by the reachability assumption, it follows that $Q(v)=4$. Hence, the edge $uv$ has both end-nodes labeled as $4$ by $Q$ and therefore $uv\not\in \delta(Q)$. Thus, we have an edge $uv\in \delta(B_i)\setminus \delta(Q)$ contradicting part \ref{item:Bprop2}.

\item Follows from parts \ref{item:Bprop1} and \ref{item:Bprop3}.

\item By definition, we have that $B=\cup_{r=1}^4 B_r$. Since the sets $B_1,B_2,B_3,B_4$ are pair-wise disjoint, they induce a partition of $B$ and hence $|B|=\sum_{r=1}^4 |B_r|$. Consequently, $\sum_{r=1}^4 \beta_r n^2 =\alpha n^2$ and thus, $\sum_{r=1}^4 \beta_r = \alpha$. Next, we note that $|\VFace|=(n+1)(n+2)/2$. Since $B_r\subseteq B\subseteq \VFace$, we have that $\beta_r=|B_r|/n^2\le |\VFace|/n^2\le (1+1/n)(1+2/n)/2 \le 0.66$ since $n\ge 10$.
\end{enumerate}
\end{proof}


By Proposition \ref{prop:delta_Q_non_opposite} \ref{item:delta_Q_nonopp1}, the cut-set $\delta(Q)$ is a non-opposite cut-set.
The following claim shows a way to modify $\delta(Q)$ to obtain a non-opposite cut-set with strictly smaller cost if $\beta_r>0$.
\begin{claim}\label{claim:changing_Q} For every $r \in [4]$, there exists $E_r \subseteq \delta(B_r), E_r' \subseteq \mathcal{G}[B_r]$ such that
\begin{enumerate}
\item $E_r \subseteq \delta(S_i)$ for some $i \in [3]$,
\item $(\delta(Q) \setminus E_r) \cup E_r'$ is a non-opposite cut-set and
\item $Cost_J(E_r) - Cost_J(E_r') \geq \ctwo \beta_r$.
\end{enumerate}
\end{claim}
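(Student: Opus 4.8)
The plan is to handle the four types of blocks $B_r$ separately, in each case finding a small set $E_r$ of cut edges on the boundary of $B_r$ that can be ``turned off'' (i.e., removed from the cut-set) at the cost of adding back a cheaper set $E_r'$ of internal edges of $B_r$, while keeping the cut-set non-opposite. The governing intuition is that the edge costs in $J$ increase towards the terminals, so the boundary of a block $B_r$ that sits far from its neighbouring side is cheap relative to the ``width'' of $B_r$, and this width is exactly what controls $\beta_r$. For the components of type $B_1$ (which cannot reach $V_{12}$ or $V_{23}$), the block is surrounded entirely by nodes of $S_1\cup S_2\cup S_3$, and I would simply take $E_1 := \delta(B_1)$ and $E_1' := \emptyset$; here non-oppositeness of $(\delta(Q)\setminus E_1)\cup E_1'$ is immediate because merging $B_1$ into an adjacent $S_i$ creates no new path from a terminal to its opposite side (by definition $B_1$ reaches neither $V_{12}$ nor $V_{23}$, and it already cannot reach $V_{13}$), and then the cost inequality $Cost_J(\delta(B_1))\ge 0.4\beta_1$ should follow from Lemma \ref{lemma:si_V_jk_cut} applied to the subinstance isolating $B_1$ from the relevant opposite side, together with the bound $\beta_1\le \alphabound<1$.

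For $B_2$ (reaches $V_{12}$ but not $V_{23}$), the natural move is to re-label $B_2$ with $1$, merging it into $S_1$ along the portion of $\delta(B_2)$ adjacent to $S_1$; I would set $E_2 := \delta(B_2)\cap\delta(S_1)$ and let $E_2'$ be the remaining boundary edges of $B_2$ pushed inward appropriately, or more precisely the edges newly needed to separate the enlarged $S_1$ from $V_{23}$. The point is that since $B_2$ cannot reach $V_{23}$ in $\G$, adding it to $S_1$ keeps $s_1$ separated from $V_{23}$; and since $B_2$ cannot reach $V_{13}$, no path from $s_2$ or $s_3$ to $V_{13}$ is created. The symmetric argument handles $B_3$ with the roles of $V_{12}$ and $V_{23}$ swapped, merging into $S_3$. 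The block $B_4$ (reaches both $V_{12}$ and $V_{23}$) is the delicate case: merging it into a single $S_i$ might create a forbidden path, so here I expect to keep $E_4'$ nonempty — cutting $B_4$ off from, say, $V_{23}$ with a set of internal edges $E_4'\subseteq\mathcal{G}[B_4]$ that is parallel to (hence cheaper than, or zero-cost like) the opposite side, while removing from $\delta(Q)$ the outer boundary $E_4\subseteq\delta(S_i)$. For each $r$, the cost bound $Cost_J(E_r)-Cost_J(E_r')\ge \ctwo\beta_r$ is an isoperimetric-type estimate: $E_r$ is a cut isolating a region of ``area'' $\beta_r n^2$ inside a corner triangle where the relevant edges have cost $\Theta(1/n)$ per edge, so its total cost is $\Omega(\beta_r)$, whereas $E_r'$ can be chosen among the zero-cost (parallel-to-opposite-side) edges of the corner triangles.

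The main obstacle I anticipate is the $B_4$ case and, more broadly, making the three re-labelings simultaneously consistent: the claim as stated is for a single fixed $r$, so the four constructions need not be combined, but within the $B_4$ analysis one must verify that a non-opposite cut-set genuinely results — that routing $B_4$'s connection to $V_{23}$ through $E_4'$ does not leave some node of $B_4$ still reaching both opposite sides after the edge swap — and that the swapped-in internal edges really are the cheap (zero-cost, parallel) ones, which requires understanding the geometry of $B_4$ relative to the corner-triangle structure $T_1,T_2,T_3$ of the AMM instance. The quantitative constant $\ctwo=0.4$ should come out of Lemma \ref{lemma:si_V_jk_cut} essentially verbatim, with the error term $(\errorterm)/3$ absorbed since we only need the inequality up to the lower-order terms already present in the statement of Lemma \ref{lemma:3-way-instance-properties}; the hypothesis $n\ge 10$ and $\beta_r\le\alphabound$ from Proposition \ref{prop:B_i_properties}\ref{item:Bprop5} are what keep these error terms under control.
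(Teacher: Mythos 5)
Your overall strategy (trade boundary edges of $B_r$ for cheaper internal edges while preserving non-oppositeness) matches the paper's, but the concrete constructions you propose break down in the two cases that carry the real content. First, for $B_1$ you take $E_1:=\delta(B_1)$, which violates condition 1 of the claim: $\delta(B_1)$ is partitioned into the three pieces $\delta(B_1)\cap\delta(S_i)$, $i\in[3]$, and in general all three are nonempty, so $E_1$ is not contained in a single $\delta(S_i)$. (This condition is not cosmetic; it is exactly what the later argument uses, via property $(\star)$, to show that the combined set $F$ is a non-opposite cut-set.) The paper instead takes $E_1$ to be the most expensive of the three pieces and $E_1'=\emptyset$, and proves $Cost_J(\delta(B_1))\ge \frac{12}{5}\beta_1$ by a line-counting argument (each node of $B_1$ lies on at least two lines $V_{ij}^t$, and each such line meeting $B_1$ contributes at least two edges of cost $3/(5n)$), so the maximum piece costs at least $\frac{4}{5}\beta_1$. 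Your appeal to Lemma \ref{lemma:si_V_jk_cut} here is vacuous: that lemma lower-bounds the cost of edge sets disconnecting a \emph{terminal} from its opposite side, and $\delta(B_1)$ does no such thing, so it gives no bound proportional to (or even relevant to) $\beta_1$.

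Second, for $B_2$ you propose $E_2:=\delta(B_2)\cap\delta(S_1)$, but this set can be arbitrarily cheap (even empty), in which case $Cost_J(E_2)\ge 0.4\beta_2$ fails outright. The heart of the paper's proof is precisely the case where both $\delta(B_2)\cap\delta(S_1)$ and $\delta(B_2)\cap\delta(S_2)$ cost less than $0.4\beta_2$: there one must take $E_2=X_3\setminus Z$ with $X_3=\delta(B_2)\cap\delta(S_3)$ and $Z=X_3\cap\delta(B_2\cap V_{12})$, compensate with the genuinely internal set $E_2'=\delta_{\mathcal G}(B_2\setminus V_{12},\,B_2\cap V_{12})\subseteq\mathcal{G}[B_2]$ to keep $s_3$ cut off from $V_{12}$, and then run a one-to-one charging argument mapping $E_2'$ into $(Y_2\cup Y_3)\setminus Z$ together with the bound $Cost_J(Y_1)\ge\frac{6}{5}\beta_2$. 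Your sketch does not identify this case, and your suggestion that $E_2'$ consist of ``remaining boundary edges'' is inconsistent with the requirement $E_2'\subseteq\mathcal{G}[B_2]$. Finally, you have the roles of the cases inverted: it is $r=4$, not $r=2,3$, where the paper takes $E_4'=\emptyset$, and $r=4$ is the one place Lemma \ref{lemma:si_V_jk_cut} is actually invoked (because $E_4=\delta(B_4)\cap\delta(S_2)$ by itself disconnects $s_2$ from $V_{13}$, giving cost at least $0.4-\frac{1}{3n}\ge 0.4\beta_4$ since $\beta_4\le\alphabound$).
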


\begin{proof}
We consider the cases $r=1,2,4$ individually as the proofs are different for each of them. The case of $r=3$ is similar to the case of $r=2$. We begin with a few notations that will be used in the proof. For distinct $i,j\in [3]$, and for $t\in \{0,1,\ldots, 2n/3\}$, let $V_{ij}^t:=\{u\in \VFace:u_k=1-t/n\text{ for }\{k\}=[3]\setminus \{i,j\}\}$. Thus, $V_{ij}^t$ denotes the set of nodes that are on the line parallel to $V_{ij}$ and at distance $t/n$ from it. We will call the sets $V_{ij}^t$ as lines for convenience.
Let $L_{ij}^t$ denote the edges of $\EFace$ whose end-nodes are in $V_{ij}^t$. Thus, the edges in $L_{ij}^t$ are parallel to $L_{ij}$ (see Figure \ref{fig:L_ij_t}).

\begin{figure}[htb]
\centering
\includegraphics{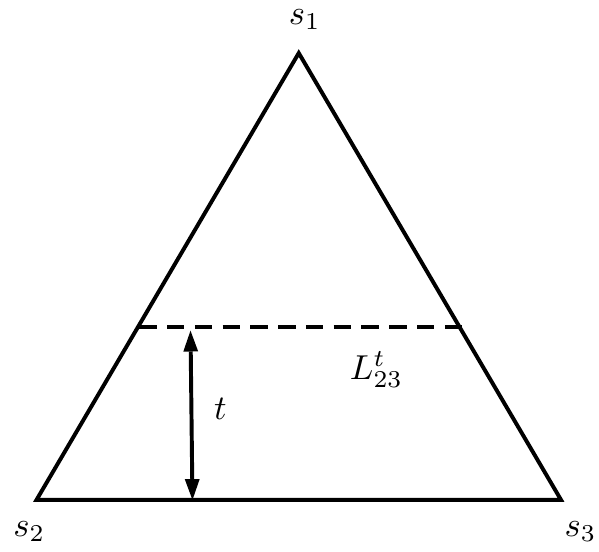}
\caption{The set of edges $L_{23}^t$.}
\label{fig:L_ij_t}
\end{figure}

\begin{enumerate}[leftmargin=*]
\item \textbf{Suppose $r=1$.}
We partition the set $\delta(B_1)$ of edges into three sets $X_i := \delta(B_1)\cap \delta(S_i)$ for $i \in [3]$ (see Figure \ref{fig:delta_B_1_partition}).

\begin{figure}[htb]
\centering
\includegraphics{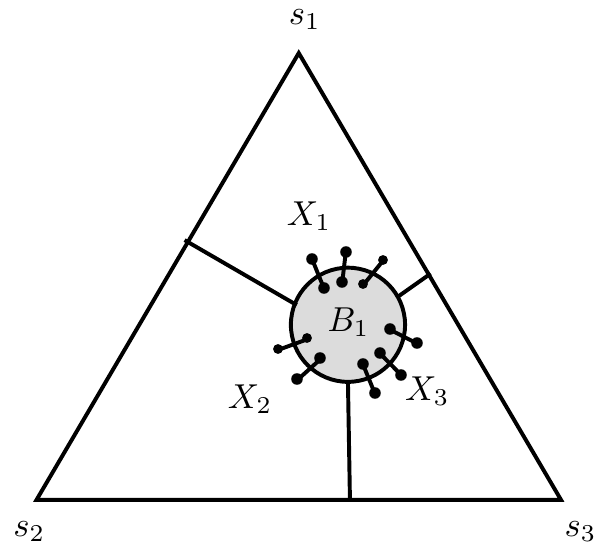}
\caption{Partition of $\delta(B_1)$ into $X_i$'s.}
\label{fig:delta_B_1_partition}
\end{figure}



By Proposition \ref{prop:B_i_properties} \ref{item:Bprop3}, we have that $(X_1,X_2,X_3)$ is a partition of $B_1$.
Let
\begin{align*}
E_1&:=\arg\max\{Cost_J(F):F\in \{X_1,X_2,X_3\}\} \text{ and }\\
E_1'&:=\emptyset.
\end{align*}
We now show the required properties for this choice of $E_1$ and $E_1'$.
\begin{enumerate}
\item Since $E_1'=\emptyset$, we need to show that $\delta(Q)\setminus E_1$ is a non-opposite cut-set. Let $\mathcal{G}'':=\mathcal{G}-(\delta(Q)\setminus E_1)$.
For each edge $e \in E_1$, the end node of $e$ in $\VFace\setminus B_1$ is reachable from a terminal $s_i$ in $\G$ iff it is reachable from $s_i$ in $\mathcal{G}''$. Therefore, for each node $v\in \VFace\setminus B_1$ and a terminal $s_i$ for $i\in [3]$, we have that $v$ is reachable from $s_i$ in $\mathcal{G}'$ iff $v$ is reachable from $s_i$ in $\mathcal{G}''$. Since $\delta(Q)$ is a non-opposite cut-set, it follows that $s_i$ cannot reach $V_{jk}$ in $\mathcal{G}'$ for $\{i,j,k\}=[3]$. Since $B_1\cap (V_{12}\cup V_{23}\cup V_{13})=\emptyset$, the terminal $s_i$ cannot reach $V_{jk}$ in $\mathcal{G}''$ for $\{i,j,k\}=[3]$. Hence, $\delta(Q)\setminus E_1$ is a non-opposite cut-set.

\item We note that none of the nodes in $B_1$ can reach $V_{12}$, $V_{23}$ and $V_{13}$ in $\G$. Therefore, if there exists a node from $B_1$ in $V_{ij}^t$ for some $t\in \{1,\ldots, n\}$, then at least two edges in $L_{ij}^t$ should be in $\delta(B_1)$ (see Figure \ref{fig:B_1_L_23t}). Therefore, if $V_{ij}^t\cap B_1\neq \emptyset$, then $|\delta(B_1)\cap L_{ij}^t|\ge 2$.

\begin{figure}[htb]
\centering
\includegraphics{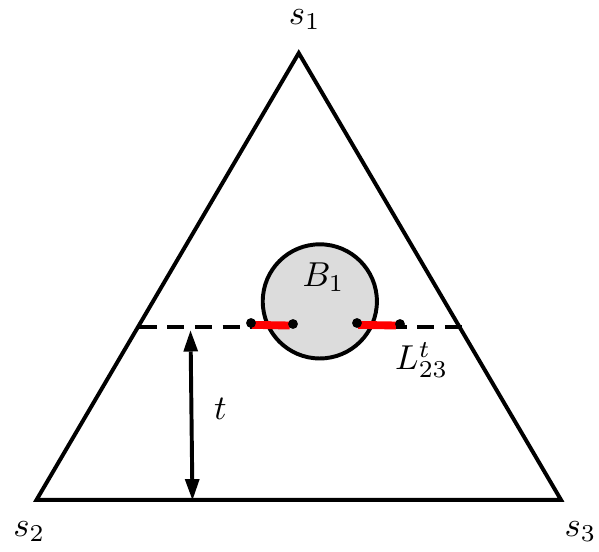}
\caption{$B_1 \cap V_{23}^t \neq \emptyset$ implies that $|\delta(B_1) \cap L_{23}^t|\geq 2$.}
\label{fig:B_1_L_23t}
\end{figure}
Every node $v\in B_1$ is in at least two lines among $V_{ij}^t$ for distinct $i,j\in [3]$ and $t\in \{1,\ldots, 2n/3\}$. Each line $V_{ij}^t$ for $t\in \{1,\ldots,2n/3\}$ has at most $n$ nodes. Hence, the number of lines with non-empty intersection with $B_1$ is at least $2|B_1|/n$. For each line that has a non-empty intersection with $B_1$, we have at least two edges in $\delta(B_1)$. Hence,
\[
\left|\delta(B_1) \cap \left(\cup_{i,j \in [3], t \in \{1,\ldots, 2n/3\}}  L_{ij}^t\right)\right| \geq 4 \cdot \frac{|B_1|}{n}.
\]
The cost of each edge in $\cup_{i,j \in [3], t \in \{1,\ldots, 2n/3\}} L_{ij}^t$ is $3/(5n)$. So,
\[
Cost_J(\delta(B_1)) \geq Cost_J\left(\delta(B_1) \cap \left(\cup_{i,j \in [3], t \in \{1,\ldots, 2n/3\}}  L_{ij}^t\right)\right)\ge \frac{12}{5} \frac{|B_1|}{n^2} = \frac{12}{5} \beta_1.
\]
Since we set $E_1$ to be the $X_i$ with maximum cost, we get that $Cost_J(E_1) \geq (4/5) \beta_1$. Moreover, $Cost_J(E_1') = 0$ as $E_1' = \emptyset$. Hence, $Cost_J(E_1) - Cost_J(E_1') \geq (4/5)\beta_1 \geq \ctwo\beta_1$.
\end{enumerate}

\item \textbf{Suppose $r=2$.}
We assume that $B_2\neq \emptyset$ as otherwise, the claim is trivial. Similar to the previous case, we partition the set $\delta(B_2)$ into three sets $X_i := \delta(B_2)\cap \delta(S_i)$ for $i \in [3]$ (see Figure \ref{fig:E_2_X_i}).
\begin{figure}[htb]
\centering
\includegraphics{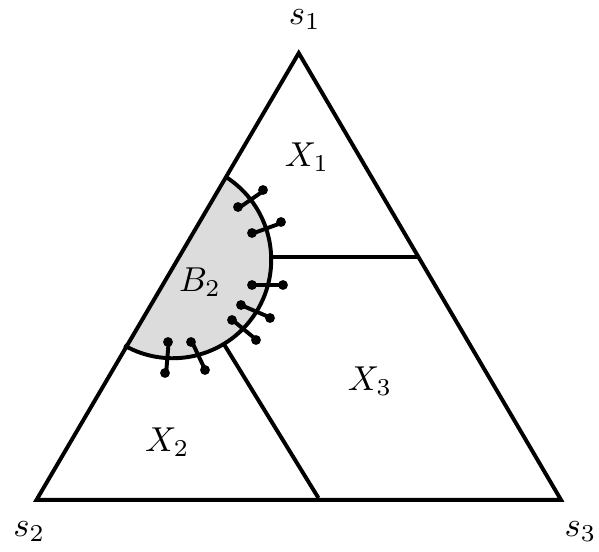}
\caption{Partition of $\delta(B_2)$ into $X_i$'s.}
\label{fig:E_2_X_i}
\end{figure}

We also define
\[
Z:=X_3\cap \delta(B_2\cap V_{12})
\]
and let
\begin{align*}
E_2 &:= X_1\text{ and } E_2':= \emptyset \text{ if $Cost_J(X_1)\ge \ctwo \beta_2$},\\
E_2 &:= X_2\text{ and } E_2':= \emptyset \text{ if $Cost_J(X_2)\ge \ctwo \beta_2$},\\
E_2 &:= X_3\setminus Z\text{ and } E_2':=\delta_\mathcal{G}(B_2\setminus V_{12},B_2\cap V_{12}) \text{ if $Cost_J(X_1), Cost_J(X_2)< \ctwo \beta_2$}.
\end{align*}
We emphasize that the last case is the only situation where we use a non-empty set for $E_2'$.
We now show the required properties for this choice of $E_2$ and $E_2'$.
\begin{enumerate}
\item Let $\mathcal{G}'':=\mathcal{G}-((\delta(Q)\setminus E_2)\cup E_2')$. For each edge $e\in E_2$, the end node of $e$ in $\VFace\setminus B_2$ is reachable from a terminal $s_i$ in $\mathcal{G}'$ iff it is reachable from $s_i$ in $\mathcal{G}''$. Therefore, for each node $v\in \VFace\setminus B_2$ and a terminal $s_i$ for $i\in [3]$, we have that $v$ is reachable from $s_i$ in $\mathcal{G}'$ iff $v$ is reachable from $s_i$ in $\mathcal{G}''$. Since $B_2\cap V_{13}=\emptyset$ and $s_2$ cannot reach $V_{13}$ in $\mathcal{G}'$, we have that $s_2$ cannot reach $V_{13}$ in $\mathcal{G}''$. Similarly, $s_1$ cannot reach $V_{23}$ in $\mathcal{G}''$. It remains to argue that $s_3$ cannot reach $V_{12}$ in $\mathcal{G}''$. We have two cases.

\begin{enumerate}
\item
Suppose $E_2=X_1$ or $E_2=X_2$. We note that $X_1$ and $X_2$ are the set of edges in $\delta(B_2)$ whose end nodes outside $B_2$ are reachable from $s_1$ (and $s_2$ respectively) in $\mathcal{G}'$. So, if $E_2=X_1$ or if $E_2=X_2$, then the set of nodes reachable by $s_3$ in $\mathcal{G}'$ and $\mathcal{G}''$ remains the same. Since $s_3$ cannot reach $V_{12}$ in $\mathcal{G}'$, we have that $s_3$ cannot reach $V_{12}$ in $\mathcal{G}''$.

\item Suppose $E_2=X_3$. We will show that $\delta(B_2\cap V_{12})\subseteq (\delta(Q)\setminus E_2)\cup E_2'$. Consequently, the nodes of $B_2\cap V_{12}$ are not reachable from $s_3$ in $\mathcal{G}''$. Since nodes of $V_{12}\setminus B_2$ are not reachable from $s_3$ in $\mathcal{G}'$, we have that $s_3$ cannot reach $V_{12}$.

We now show that $\delta(B_2\cap V_{12})\subseteq (\delta(Q)\setminus E_2)\cup E_2'$. Let $uv\in \delta(B_2\cap V_{12})$ with $u\in B_2\cap V_{12}$ and $v\not\in B_2\cap V_{12}$. If $v \in S_1 \cup S_2$, then $uv \in \delta(B_2)\subseteq \delta(Q)$ and $uv\not\in X_3\supseteq E_2$. Hence, $uv\in (\delta(Q)\setminus E_2)\cup E_2'$. If $v \in S_3$, then $uv\in Z$ and hence $uv\not\in E_2$. Moreover, $uv\in \delta(B_2)\subseteq \delta(Q)$, hence $uv\in (\delta(Q)\setminus E_2)\cup E_2'$. If $v\in B$, then $v\in B_2$ by Proposition \ref{prop:B_i_properties} \ref{item:Bprop4} and hence $e\in E_2'\subseteq (\delta(Q)\setminus E_2)\cup E_2'$.
\end{enumerate}
\item If $Cost_J(X_1)$ or $Cost_J(X_2)$ is at least $\ctwo \beta_2$, then we are done. So, let us assume that $Cost_J(X_1),Cost_J(X_2)\le \ctwo \beta_2$. Let $Y_1$, $Y_2$ and $Y_3$ be the set of edges in $\delta(B_2)$ that are parallel to $L_{12}, L_{13}$ and $L_{23}$ respectively (see Figure \ref{fig:E_2_Y_i}). Formally,
\begin{align*}
Y_1 &:= \delta(B_2)\cap \left(\cup_{t\in \{0,1,\ldots, n\}}L_{12}^t\right)\\
Y_2 &:= \delta(B_2)\cap \left(\cup_{t\in \{0,1,\ldots, n\}}L_{13}^t\right)\\
Y_3 &:= \delta(B_2)\cap \left(\cup_{t\in \{0,1,\ldots, n\}}L_{23}^t\right)
\end{align*}

\begin{figure}
\centering
\hspace{10mm}\includegraphics[width=0.9\textwidth]{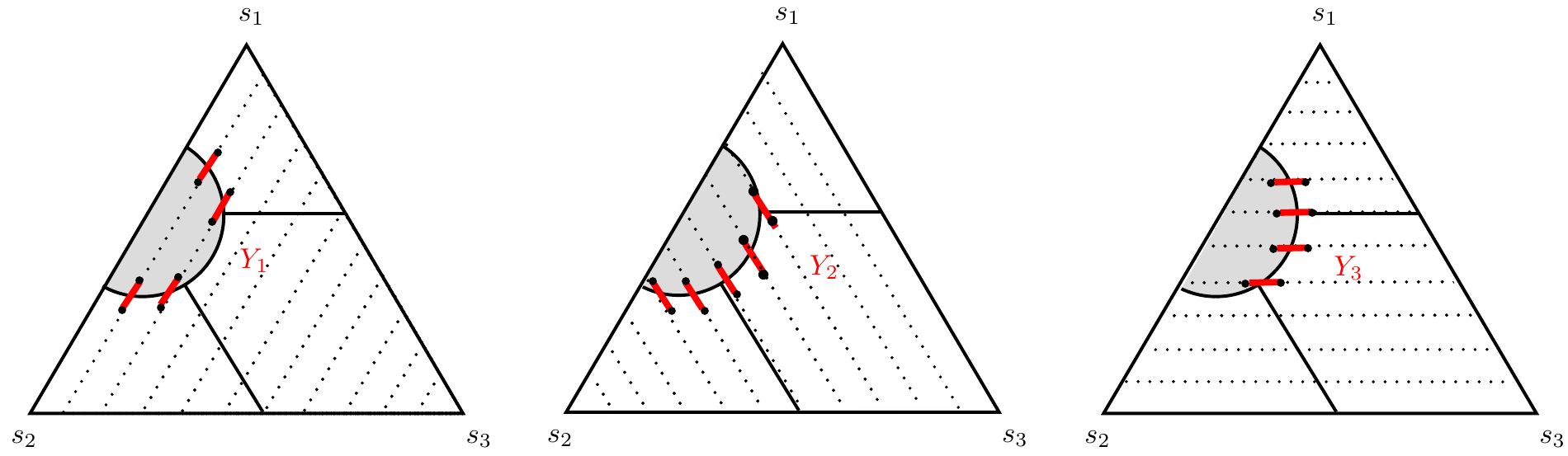}
\caption{Partition of $\delta(B_2)$ into $Y_i$'s. The shaded region is $B_2$.}
\label{fig:E_2_Y_i}
\end{figure}

Claims \ref{claim:Etwo-prime-cost} and \ref{claim:Yone-cost} will help us derive the required inequality on the cost.
\begin{claim}\label{claim:Etwo-prime-cost}
\[
Cost_J(E_2')\le Cost_J(Y_2)+Cost_J(Y_3)-Cost_J(Z).
\]
\end{claim}
\begin{proof}
We proceed in two steps: (1) we will show a one-to-one mapping $f$ from edges in $E_2'$ to edges in $(Y_2\cup Y_3)\setminus Z$ such that
the cost of every edge $e\in E_2'$ is the same as the cost of the mapped edge $f(e)$ in the instance $J$, i.e., $w(e)=w(f(e))$ for every $e\in E_2'$ and (2) we will show that that $Z\subseteq Y_2\cup Y_3$. Now, by observing that the sets $Y_2$ and $Y_3$ are disjoint, we get that $Cost_J(E_2')\le Cost_J(Y_2)+Cost_J(Y_3)-Cost_J(Z)$.

We now define the one-to-one mapping $f:E_2'\rightarrow (Y_2\cup Y_3)\setminus Z$. Let $e=uv\in E_2'$ such that $u\in B_2\cap V_{12}, v\in B_2\setminus V_{12}$.
Since $E_2'$ only contains edges between $B_2\cap V_{12}$ and $B_2\setminus V_{12}$, it does not contain an edge parallel to $L_{12}$.
Therefore, $e\in L_{13}^t$ or $e\in L_{23}^t$ for some $t\in \{1,\ldots, n\}$.
Suppose $e\in L_{13}^t$ for some $t\in \{1,\ldots, n\}$. Since the nodes of $B_2$ cannot reach $V_{23}$ in $\mathcal{G}'$, there exists an edge in $\delta(B_2)\cap L_{13}^t$. We map $e$ to an arbitrary edge in $\delta(B_2)\cap L_{13}^t\subseteq Y_2$ (see Figure \ref{fig:mapping_f}). We note that the set $Z$ contains the set of edges incident to $B_2\cap V_{12}$ whose other end node is in $S_3$. Since both $u$ and $v$ are in $B_2$, it follows that $L_{13}^t\cap Z=\emptyset$. So our mapping of $e$ is indeed to an edge in $Y_2\setminus Z$.
Similarly, if $e\in L_{23}^t$ for some $t\in \{1,\ldots, n\}$, then we map $e$ to an arbitrary edge in $\delta(B_2)\cap L_{23}^t\subseteq Y_3\setminus Z$. This mapping is a one-to-one mapping as $E_2'$ contains at most one edge from $L_{13}^t$ for each $t\in \{1,2,\ldots, n\}$ and at most one edge from $L_{23}^t$ for each $t\in \{1,2,\ldots, n\}$. Moreover, for each $t\in \{1,2,\ldots, n\}$, the cost of all edges in $L_{13}^t$ are identical and the cost of all edges in $L_{23}^t$ are identical.

\begin{figure}[htb]
\centering
\includegraphics{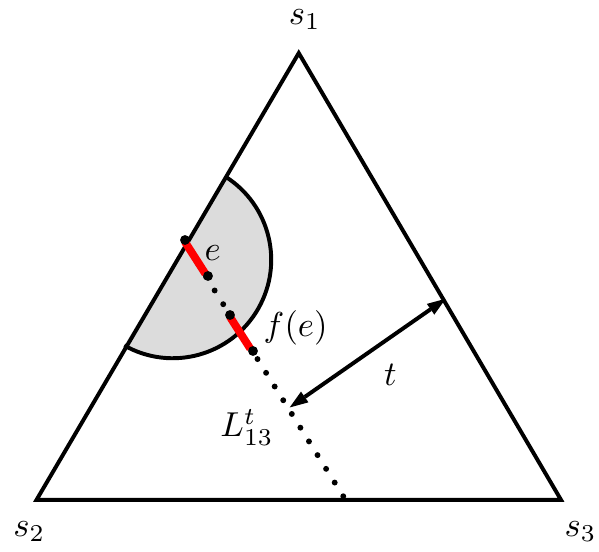}
\caption{Mapping from $E_2'$ to $(Y_2 \cup Y_3) \setminus Z$. The shaded region is $B_2$.}
\label{fig:mapping_f}
\end{figure}

We now show that $Z\subseteq Y_2\cup Y_3$. The set $Z$ contains all edges whose one end node is in $B_2\cap V_{12}$ and another end node is in $S_3$. Since $V_{12} \cap S_3=\emptyset$, the set $Z$ does not contain any edge between $B_2\cap V_{12}$ and $V_{12}\setminus B_2$. Hence, $Y_1\cap Z=\emptyset$. Since $Z$ is a subset of $X_3$ which is a subset of $Y_1\cup Y_2\cup Y_3$, it follows that $Z\subseteq Y_2\cup Y_3$.
\end{proof}

\begin{claim}\label{claim:Yone-cost}
\[
Cost_J(Y_1)\ge \frac{6}{5}\beta_2.
\]
\end{claim}
\begin{proof}
We first show a lower bound on the size of the set $W:=\{t\in \{0,1,\ldots, 2n/3\}:V_{12}^t\cap B_2\neq \emptyset\}$. If $B_2\cap V_{12}^t\neq \emptyset$ for some $t\in \{2n/3+1,\ldots, n\}$, then $B_2\cap V_{12}^t\neq \emptyset$ for all $t\in \{0,1,\ldots, 2n/3\}$ and hence, $|W|\ge 2n/3$. Otherwise, $B_2\cap V_{12}^t= \emptyset$ for all $t\in \{2n/3+1,\ldots, n\}$. In this case, $B_2\subseteq \cup_{t=0}^{2n/3}V_{12}^t$. For $t\ge 1$, each line $V_{12}^t$ has at most $n$ nodes. For $t=0$, the set $B_2$ can contain at most $n-1$ nodes from $V_{12}^0$ which are not $s_1$ or $s_2$. Hence, $|W|\ge |B_2|/n=\beta_2 n$. Thus, we have that $|W|\ge \min\{2n/3,\beta_2n\}=\beta_2n$ as $\beta_2\le 0.66$.

Since the nodes of $B_2$ cannot reach $V_{23}$ and $V_{13}$ in $\mathcal{G}'$, we have that $|\delta(B_2)\cap L_{12}^t|\ge 2$ if $B_2\cap V_{12}^t\neq \emptyset$. Hence,
\[
\left|\delta(B_2)\cap \left(\cup_{t=0}^{2n/3}L_{12}^t\right)\right|\ge 2|W|\ge 2\beta_2 n.
\]
Each edge in $\cup_{t=0}^{2n/3}L_{12}^t$ has cost at least $3/5n$. Hence,
\[
Cost_J\left(\delta(B_2)\cap \left(\cup_{t=0}^{2n/3}L_{12}^t\right)\right)\ge \frac{6}{5}\beta_2.
\]
Since $Y_1=\delta(B_2)\cap \left(\cup_{t=0}^{n}L_{12}^t\right)\supseteq \delta(B_2)\cap \left(\cup_{t=0}^{2n/3}L_{12}^t\right)$, we get that $Cost_J(Y_1)\ge (6/5)\beta_2$.
\end{proof}

We now derive the required inequality on the cost as follows:
\begin{align*}
&Cost_J(E_2) - Cost_J(E_2')  = Cost_J(X_3\setminus Z) - Cost(E_2')\\
&\quad \geq Cost_J(X_3) - Cost_J(Z) - Cost_J(Y_2) - Cost_J(Y_3) + Cost_J(Z) \quad \quad \text{(By Claim \ref{claim:Etwo-prime-cost})}\\
&\quad \ge Cost_J(X_3 \cap Y_1) + Cost_J(X_3 \cap Y_2) + Cost_J(X_3 \cap Y_3) - Cost_J(Y_2) - Cost_J(Y_3)\\
&\quad = Cost_J(X_3 \cap Y_1) - Cost_J((X_1\cup X_2) \cap Y_2) - Cost_J((X_1 \cup X_2) \cap Y_3)\\
&\quad = Cost_J(Y_1) - Cost_J( (X_1 \cup X_2)\cap Y_1) \\
&\quad\quad\quad\quad -Cost_J((X_1\cup X_2) \cap Y_2) - Cost_J((X_1 \cup X_2) \cap Y_3)\\
& \quad= Cost_J(Y_1) - Cost_J(X_1\cup X_2)\\
& \quad\geq \frac{6}{5}\beta_2 - \ctwo\beta_2 - \ctwo\beta_2 \quad \quad \quad \quad\quad\text{(By Claim \ref{claim:Yone-cost} and $Cost_J(X_1),Cost_J(X_2)\le \ctwo \beta_2$)}\\
& \quad = \ctwo \beta_2.
\end{align*}
\end{enumerate}

\item \textbf{Suppose $r=4$.}
We assume that $B_4\neq \emptyset$, as otherwise the claim is trivial. We partition $\delta(B_4)$ into
$X_1:=\delta(B_4) \cap \delta(S_2)$ and $X_2 := \delta(B_4)\setminus X_1$ (see Figure \ref{fig:B_4_X_1_X_2}),
and let $E_4:=X_1$ and $E_4':=\emptyset$.

We now show the required properties for this choice of $E_4$ and $E_4'$.
Let us fix a node $v\in B_4$ and a path $v, u_1,\ldots, u_t$ from $v$ to $L_{12}$ in $\mathcal{G}[B_4]$, and a path $v,w_1,\ldots, w_{t'}$ from $v$ to $L_{23}$ in $\mathcal{G}[B_4]$ (see Figure \ref{fig:B_4_X_1_X_2}). Let $S:=\{v,u_1,\ldots, u_t,w_1,\ldots, w_{t'}\}$. We note that $S\subseteq B_4$.

\begin{figure}[htb]
\centering
\includegraphics{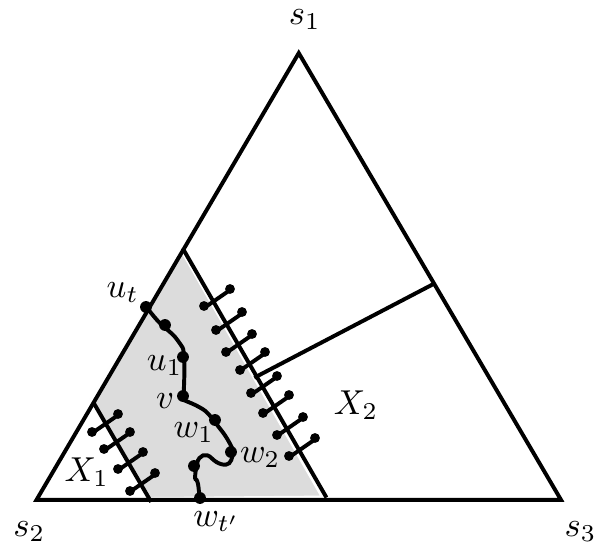}
\caption{Partition of $\delta(B_4)$ into $X_1$ and $X_2$. The shaded region is $B_4$.}
\label{fig:B_4_X_1_X_2}
\end{figure}


\begin{enumerate}
\item Since $E_4'=\emptyset$, we need to show that $\delta(Q)\setminus E_4$ is a non-opposite cut-set. Let $\mathcal{G}'':=\mathcal{G}-(\delta(Q)\setminus E_4)$.
We first observe that there are no paths between $S$ and $V_{13}$ in $\mathcal{G}-X_2$. Hence, there is no path from $s_1$ or $s_3$ to an end node of $E_4=X_1$ in $\mathcal{G}'$. Moreover, there is no path from $s_1$ to $V_{23}$ or from $s_3$ to $V_{12}$ in $\mathcal{G}'$. So, there is no path from $s_1$ to $V_{23}$ or from $s_3$ to $V_{12}$ in $\mathcal{G}''$. Also, since $X_2\subseteq \delta(Q)\setminus X_1$ and there is no path from $s_2$ to $V_{13}$ in $\mathcal{G}-X_2$, it follows that there is no path from $s_2$ to $V_{13}$ in $\mathcal{G}''$. Hence, $\delta(Q)\setminus E_4$ is a non-opposite cut-set.

\item We note that there are no paths between $s_2$ and $S$ in $\mathcal{G}-E_4$. Moreover, all paths in $\mathcal{G}$ between $s_2$ and $V_{13}$ go through $S$. Hence, there are no paths between $s_2$  and $V_{13}$ in $\mathcal{G}-E_4$. The cost of any such subset of nodes can be lower bounded by the Lemma \ref{lemma:si_V_jk_cut}. Thus, $Cost_J(E_4)-Cost_J(E_4')\ge 0.4-(\errorterm)/3\ge \ctwo\beta_4$. The last inequality is because $\beta_4\le 0.66$ by Proposition \ref{prop:B_i_properties} and $n\ge 10$.
\end{enumerate}
\end{enumerate}
\end{proof}


For $r\in [4]$, let $E_r$ and $E_r'$ be the sets given by Claim \ref{claim:changing_Q}. We will show that
\[
F:=\left(\delta(Q)\setminus \left(\cup_{r=1}^4 E_r\right)\right)\cup\left(\cup_{r=1}^4 E_r'\right)
\]
is a non-opposite cut-set and that $Cost_J(\delta(Q))\ge Cost_J(F)+\ctwo\alpha$. Then, we use Proposition \ref{prop:delta_Q_non_opposite} \ref{item:delta_Q_nonopp2}
to conclude that $Cost_J(\delta(Q))\ge 1.2-\errorterm + \ctwo\alpha$.


\begin{claim}
$F$ is a non-opposite cut-set.
\end{claim}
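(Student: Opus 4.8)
The plan is to analyze the connected components of $\mathcal{G}-F$ directly and to show that for each $i\in[3]$ the component of $\mathcal{G}-F$ containing $s_i$ misses $V_{jk}$, where $\{i,j,k\}=[3]$. I would first record some bookkeeping. The eight edge sets $E_1,E_2,E_3,E_4,E_1',E_2',E_3',E_4'$ are pairwise disjoint: each $E_r\subseteq\delta(B_r)$ and each $E_r'\subseteq\mathcal{G}[B_r]$, the sets $B_r$ are pairwise disjoint (Proposition~\ref{prop:B_i_properties}\ref{item:Bprop1}) with $\delta(B_r)\cap\delta(B_p)=\emptyset$ for $r\neq p$ (Proposition~\ref{prop:B_i_properties}\ref{item:Bprop4}), and an edge of $\delta(B_r)$ has exactly one end in $B_r$ whereas an edge of $\mathcal{G}[B_p]$ has both. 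Every edge of $\mathcal{G}[B_r]$ joins two nodes with label $4$, so it is not in $\delta(Q)$; hence $\mathcal{G}-F$ is exactly $\mathcal{G}'=\mathcal{G}-\delta(Q)$ with the edges $\bigcup_rE_r'$ deleted and the edges $\bigcup_rE_r$ added.

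Next I would extract the ``block structure''. Each $B_r$ is a union of connected components of $\mathcal{G}'$ (Proposition~\ref{prop:B_i_properties}\ref{item:Bprop2}), and $S_1,S_2,S_3$ are the remaining components of $\mathcal{G}'$, so $S_1,S_2,S_3,B_1,B_2,B_3,B_4$ partition $\VFace$ (using also Proposition~\ref{prop:B_i_properties}\ref{item:Bprop5}). For each $r$, property~1 of Claim~\ref{claim:changing_Q} gives $i(r)\in[3]$ with $E_r\subseteq\delta(S_{i(r)})$; together with $E_r\subseteq\delta(B_r)$ and Proposition~\ref{prop:B_i_properties}\ref{item:Bprop3}, every edge of $E_r$ joins $B_r$ to $S_{i(r)}$. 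Also there is no $\mathcal{G}$-edge between $B_r$ and $B_p$ for $r\neq p$ (it would join two label-$4$ nodes, hence lie in $\EFace\setminus\delta(Q)$, contradicting that $B_r$ and $B_p$ are unions of distinct $\mathcal{G}'$-components). Therefore, in $\mathcal{G}-F$ every edge either stays inside one block (the surviving $\mathcal{G}'$-edges do, since they stay inside a single $\mathcal{G}'$-component) or is an $E_r$-edge joining the blocks $B_r$ and $S_{i(r)}$. Tracing a path of $\mathcal{G}-F$ from $s_i$, it can only enter blocks $B_r$ with $i(r)=i$, and can only return to $S_i$ from such a block; hence the component of $s_i$ in $\mathcal{G}-F$ is contained in $S_i\cup\bigcup_{r:\,i(r)=i}C_r$, where $C_r$ denotes the set of nodes of $B_r$ lying in the same $\mathcal{G}-F$-component as some node of $S_{i(r)}$.

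It then remains to verify $\left(S_i\cup\bigcup_{r:\,i(r)=i}C_r\right)\cap V_{jk}=\emptyset$. Since $\delta(Q)$ is a non-opposite cut-set (Proposition~\ref{prop:delta_Q_non_opposite}\ref{item:delta_Q_nonopp1}), $s_i$ cannot reach $V_{jk}$ in $\mathcal{G}'$, so $S_i\cap V_{jk}=\emptyset$. For each $r$ with $i(r)=i$ one checks $C_r\cap V_{jk}=\emptyset$ by going through the cases of Claim~\ref{claim:changing_Q}: for $r=1$, $B_1$ meets none of $V_{12},V_{13},V_{23}$; for $r=4$, $i(4)=2$ and $V_{jk}=V_{13}$, and no node of $B$ lies in $V_{13}$; for $r\in\{2,3\}$ in the two subcases where $E_r'=\emptyset$, the face $V_{jk}$ is one that $B_r$ cannot reach, so $B_r\cap V_{jk}=\emptyset$; and in the remaining subcase for $r=2$ ($E_2=X_3\setminus Z$, $E_2'=\delta_\mathcal{G}(B_2\setminus V_{12},B_2\cap V_{12})$, $i(2)=3$, $V_{jk}=V_{12}$) every node $w\in B_2\cap V_{12}$ has, in $\mathcal{G}-F$, all its neighbours inside $B_2\cap V_{12}$ --- its edges to $B_2\setminus V_{12}$ are deleted in $E_2'$, and $E_2=X_3\setminus Z$ carries no edge at a node of $B_2\cap V_{12}$ since $Z=X_3\cap\delta(B_2\cap V_{12})$ --- so $B_2\cap V_{12}$ is a union of $\mathcal{G}-F$-components none of which meets $S_3$, giving $C_2\cap V_{12}=\emptyset$; the case $r=3$ is symmetric. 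Since $i$ was arbitrary, $F$ is a non-opposite cut-set.

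The step I expect to be the main obstacle is the last one: one must keep precise track, in every subcase of Claim~\ref{claim:changing_Q}, of which piece of $B_r$ gets reattached to which terminal component and confirm it never reaches the dangerous opposite face. The only genuinely delicate situations are the third subcases for $r=2$ and $r=3$, where $B_r$ does meet a face but the auxiliary deletion $E_r'$ together with the removal of $Z$ from $E_r$ is engineered exactly to sever the offending part. An alternative organization is to prove by induction on $\ell$ that $(\delta(Q)\setminus\bigcup_{r\le\ell}E_r)\cup\bigcup_{r\le\ell}E_r'$ is a non-opposite cut-set by re-running the argument of Claim~\ref{claim:changing_Q} on the already-modified graph, but justifying that the reachability structure used there is unaffected by the earlier modifications amounts to essentially the same bookkeeping.
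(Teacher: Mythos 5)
Your proof is correct. It shares with the paper's argument the same structural skeleton: both establish that in $\mathcal{G}-F$ the only edges leaving a block $B_r$ are the edges of $E_r$, all of which go to a single $S_{i(r)}$ (the paper's property $(\star)$), so that the component of $s_i$ is confined to $S_i$ together with the blocks attached to it. The difference is in how the argument is finished. The paper reroutes a hypothetical $s_i$--$V_{jk}$ path so that it uses edges of only one modification, i.e.\ lives in $\mathcal{G}-((\delta(Q)\setminus E_r)\cup E_r')$, and then invokes property~2 of Claim~\ref{claim:changing_Q} as a black box; this keeps the proof short and never revisits how $E_r$ and $E_r'$ were built. You instead re-open the case analysis of Claim~\ref{claim:changing_Q} and verify directly that each attached piece $C_r$ misses the dangerous face $V_{jk}$ --- including the one genuinely delicate subcase $E_2=X_3\setminus Z$, where you correctly observe that $E_2'$ together with the removal of $Z$ leaves every node of $B_2\cap V_{12}$ with all its $\mathcal{G}-F$ neighbours inside $B_2\cap V_{12}$, so that this set is a union of components disjoint from $S_3$. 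Your version is self-contained but duplicates reachability facts that the proof of Claim~\ref{claim:changing_Q} already established; the paper's reduction buys modularity, since it would survive any change to the construction of the $E_r,E_r'$ as long as properties~1 and~2 of Claim~\ref{claim:changing_Q} continue to hold.
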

\begin{proof}
 Let $\mathcal{G}''=\mathcal{G}-F$, and for $i \in [3]$, let $S_i'$ be the set of nodes reachable from $s_i$ in $\mathcal{G}''$. Since $E_r' \subseteq \mathcal{G}[B]$ for every $r$,
$S_i'$ is a superset of $S_i$, and $\mathcal{G}''[S_i]=\G[S_i]$, which is connected. By the first property of Claim \ref{claim:changing_Q}, for every $r \in [4]$ there exists $i\in [3]$ such that $E_r \subseteq \delta(B_r) \cap \delta(S_i)$. This implies, together with Proposition \ref{prop:B_i_properties} \ref{item:Bprop2}, that the sets $S_i'$ are disjoint. It also implies the following property:
\begin{itemize}
\item[($\star$)] For every $r \in [4]$, there exists $i\in [3]$ such that $\delta_{\mathcal{G}''}(B_r) \subseteq \delta(S_i)$.
\end{itemize}

Suppose for contradiction that for some distinct $i, j, k\in [3]$, there exists a path $P$ in $\mathcal{G}''$ from $s_i$ to some $v \in V_{jk}$. Since $\delta(Q)$ is a non-opposite cut,
the node $v$ is not in $S_i$. Also, since $v\in S_i'$ and we have seen above that $S_i'$ is disjoint from $S_j'$ and $S_k'$, it follows that $v\not\in S_j'\cup S_k'\supseteq S_j\cup S_k$. Hence, $v\not\in S_1\cup S_2\cup S_3$, and therefore $v\in B_r$ for some $r\in [4]$.

Let $u$ be the last node of $S_i$ on the path $P$. By property ($\star$), the end segment of $P$ starting at the node after $u$ is entirely in $\mathcal{G}[B_r]\setminus E_r'$. Since $\mathcal{G}''[S_i]$ is connected, we can replace the $s_i-u$ part of $P$ by a path in $\mathcal{G}''[S_i]$, and obtain an $s_i-v$ path in $\mathcal{G}''$ that uses only edges in $\G[S_i]\cup (\G[B_r]\setminus E_r')$ and a single edge in $E_r\subseteq\delta(S_i) \cap \delta(B_r)$. Hence, this is also a path in $E \setminus ((\delta(Q)\setminus E_r)\cup E_r')$. But we have already seen in Claim \ref{claim:changing_Q} that $(\delta(Q)\setminus E_r)\cup E_r'$ is a non-opposite cut-set, so $v \notin V_{jk}$, a contradiction.
\end{proof}

To show that $Cost_J(\delta(Q)) \geq Cost_J(F) + \ctwo\alpha$, we first observe that $E_i\subset \delta(B_i) \subset \delta(Q)$ for $i \in [4]$ and $E_i$'s are mutually disjoint since $\delta(B_i)$'s are mutually disjoint by Proposition \ref{prop:B_i_properties} \ref{item:Bprop4}. Therefore,
\begin{align*}
Cost_J(F) & \leq Cost_J(\delta(Q)\setminus (\cup_{i=1}^4 E_i)) + Cost_J(\cup_{i=1}^4 E_i')\\
& = Cost_J(\delta(Q)) - \sum_{i=1}^4 (Cost_J(E_i) - Cost_J(E_i')) \\
& \leq Cost_J(\delta(Q)) - \sum_{i=1}^4 \ctwo\beta_i \quad \quad \text{(By Claim \ref{claim:changing_Q})}\\
&\leq Cost_J(\delta(Q)) - \ctwo \alpha \quad \quad \text{(By Proposition \ref{prop:B_i_properties} \ref{item:Bprop5})}.
\end{align*}

\end{proof}

\subsection{Proof of Corollary \ref{coro:red-island-cut-cost}}
We restate and prove Corollary \ref{coro:red-island-cut-cost} now.
\coroRedIslandCutCost*
\begin{proof}[Proof of Corollary \ref{coro:red-island-cut-cost}]
Let $A:=A_1\cup A_2\cup A_3$.
We will show that $Cost_J(\delta(A))$ is at least $\ctwo \sum_{i=1}^3|A_i|/n^2-3/(2n)$ and that there exists a non-opposite cut $Q'$ satisfying $\delta(Q')=\delta(Q)\setminus\delta(A)$. 
By Lemma \ref{lemma:non_opposite_cut_cost},
$Cost_J(\delta(Q'))\geq 1.2-1/n$ and the corollary follows.

%
%
%

We first show a lower bound on the total cost of the edges in $\delta(A)$.

\begin{claim} \label{cl:S}
$Cost_J(\delta(A))\geq \ctwo \sum_{i=1}^3|A_i|/n^2-\frac{3}{2n}$.
\end{claim}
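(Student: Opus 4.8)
The plan is to reduce this to Lemma~\ref{lemma:3-way-instance-properties} applied separately near each terminal $s_i$, $i\in[3]$. Recall that $A_i\subseteq\closure(R_i)=\{x\in\Delta_{4,n}: x_4=0,\ x_i\ge 1-\c\}$, and that $A_i$ is nonempty only when $\delta(Q)\cap\Gamma_i=\emptyset$, i.e.\ the cut $Q$ does not separate any two adjacent red nodes in the red cycle $(R_i,\Gamma_i)$ bounding $\closure(R_i)$. The key geometric observation I would establish first is that, since $Q$ does not cut any edge of $\Gamma_i$ and every node of $R_i$ is labeled either $i$ or $4$ by $Q$ (it cannot be $j$ or $k$ because a non-opposite cut labels $x$ only within $\supp(x)\cup\{4\}$, and for $x\in R_i$ with $x_i\ge 1-c>1/2$ one has $j,k\notin$ the relevant label set once we check $\supp$), the boundary cycle is monochromatic; hence the nodes of $A_i$ inside $\closure(R_i)$ cannot reach $V_{jk}$ — indeed they cannot even leave $\closure(R_i)$ through an edge whose far endpoint has a different label unless that edge lies in $\delta(A_i)$. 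So the edges in $\delta(A_i)$ are confined to the interior of $\closure(R_i)$, and the sets $\delta(A_i)$ for distinct $i$ are pairwise disjoint because the regions $\closure(R_i)$ are pairwise disjoint for $c<1/2$.

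Next, for a fixed $i$ with $A_i\ne\emptyset$, I would mimic the counting argument used in the $r=1$ case of Claim~\ref{claim:changing_Q} (the ``$B_1$'' case): since no node of $A_i$ can reach $V_{jk}$ for any pair $\{j,k\}$ (they are trapped inside $\closure(R_i)$ by the monochromatic red boundary), every line $V_{jk}^t$ that meets $A_i$ must contain at least two edges of $\delta(A_i)$, these being cost-$\rho$ edges with $\rho=3/(5n)$ — here I need $\closure(R_i)$ to lie strictly inside the corner triangle $T_i$, so that all relevant parallel-to-$L_{jk}$ edges are the non-boundary interior edges of cost $\rho$ rather than the zero-cost ones; this holds because $1-c>2/3$ when $c<1/3$, and for the intended optimal $c\approx 0.074$ this is fine (for the general statement one may simply restrict to lines at distance $>2n/3$ from each side, or note the boundary-$L_{jk}$ edges only help). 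Each node of $A_i$ lies on at least two such lines (one parallel to $L_{ij}$, one parallel to $L_{ik}$), each line has at most $n$ nodes, so at least $2|A_i|/n$ lines meet $A_i$, giving at least $4|A_i|/n$ edges of $\delta(A_i)$, hence $Cost_J(\delta(A_i))\ge \frac{3}{5n}\cdot\frac{4|A_i|}{n}=\frac{12}{5}\cdot\frac{|A_i|}{n^2}$.

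Summing over $i\in[3]$ and using disjointness of the $\delta(A_i)$ gives $Cost_J(\delta(A))=\sum_{i=1}^3 Cost_J(\delta(A_i))\ge \frac{12}{5}\sum_{i=1}^3|A_i|/n^2$, which is already stronger than the claimed bound $\ctwo\sum_{i=1}^3|A_i|/n^2-\frac{3}{2n}$ (the $-\frac{3}{2n}$ slack absorbs any off-by-$O(1/n)$ boundary effects from the three triangles if one prefers a cruder line count, e.g.\ losing one line per triangle). The main obstacle I anticipate is the careful bookkeeping to guarantee that every edge counted in $\delta(A_i)$ genuinely has cost $\rho$ and not $0$ — i.e.\ that $A_i$ really sits in the region of $T_i$ where the parallel edges are charged — and that no node of $A_i$ escapes to $V_{jk}$; both follow from $1-c>2/3$ together with the monochromaticity of the red boundary cycle, but they must be stated cleanly. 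Once Claim~\ref{cl:S} is in hand, the construction of $Q'$ with $\delta(Q')=\delta(Q)\setminus\delta(A)$ proceeds exactly as in the proof of Lemma~\ref{lemma:3-way-instance-properties}: relabel every node of $A$ from $4$ back to the label of the component of $\G[\closure(R_i)\setminus A_i]$ it attaches to (equivalently, $i$), check this keeps the cut non-opposite since $A_i\subseteq\closure(R_i)$ is far from the opposite side $V_{jk}$, and apply Lemma~\ref{lemma:non_opposite_cut_cost} to $Q'$.
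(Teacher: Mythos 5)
Your argument is correct, but it proves Claim~\ref{cl:S} by a genuinely different route from the paper. The paper's proof never counts edges of $\delta(A)$ directly: it introduces the explicit reference cut $Q_0$ (cost at most $1.2+\frac{1}{2n}$, non-opposite and non-fragmenting), relabels the nodes of $A$ as $4$ to get $Q_0'$, observes $\delta(Q_0')=\delta(A)\cup\delta(Q_0)$, applies Lemma~\ref{lemma:3-way-instance-properties} to $Q_0'$ to get $Cost_J(\delta(Q_0'))\ge 1.2-\errorterm+\ctwo|A|/n^2$, and subtracts; the $-\frac{3}{2n}$ in the statement is exactly the slack $\errorterm+\frac{1}{2n}$ from this subtraction. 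You instead run the line-counting argument from the $r=1$ case of Claim~\ref{claim:changing_Q} directly on each $A_i$: since $\Gamma_i$ is uncut, the cycle $R_i$ is monochromatic with label $i$ (because $s_i\in R_i$ --- note your parenthetical claim that non-oppositeness alone forbids labels $j,k$ on $R_i$ is false, e.g.\ $x=(1-c,c,0)$ has $j\in\supp(x)$, but it is also unnecessary), so $A_i$ sits in the strict interior of $\closure(R_i)$, every lattice line parallel to $L_{ij}$ or $L_{ik}$ meeting $A_i$ carries at least two edges of $\delta(A_i)$, and for $c<1/3$ these are non-boundary, non-$L_{jk}$-parallel edges of $\mathcal{G}[T_i]$ of cost $\rho=3/(5n)$ each; the $\delta(A_i)$ are pairwise disjoint since the regions $\closure(R_i)$ are non-adjacent for $c<1/2$. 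This yields $Cost_J(\delta(A))\ge \frac{12}{5}\sum_{i=1}^3|A_i|/n^2$ with no error term, which is strictly stronger than the claim (and, despite your opening sentence, never actually invokes Lemma~\ref{lemma:3-way-instance-properties}). What each approach buys: the paper's reduction is shorter and reuses machinery already proved, at the price of the $-\frac{3}{2n}$ loss and the worse constant $\ctwo$; your direct count is self-contained, gives the better constant $12/5$, and avoids having to verify that $Q_0'$ is non-opposite and non-fragmenting, at the price of redoing the geometric bookkeeping (which you handle adequately, including the caveat that $\closure(R_i)\subseteq T_i$ requires $c<1/3$, satisfied by the optimized $c\approx 0.074$).
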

\begin{proof}

We will consider a specific non-opposite non-fragmenting cut to give a lower bound on the cost of $\delta(A)$ on $J$. Let $Q_0$ be defined as follows (see Figure \ref{fig:tight}):
\begin{equation*}
Q_0(x):=
\begin{cases}
1 & \text{if $x_1\geq 1/2$,}\\
2 & \text{if $x_1<1/2$, $x_2\geq 1/2$,}\\
3 & \text{otherwise.}
\end{cases}
\end{equation*}

\begin{figure}[ht]
\centering
\includegraphics[width=0.4\textwidth]{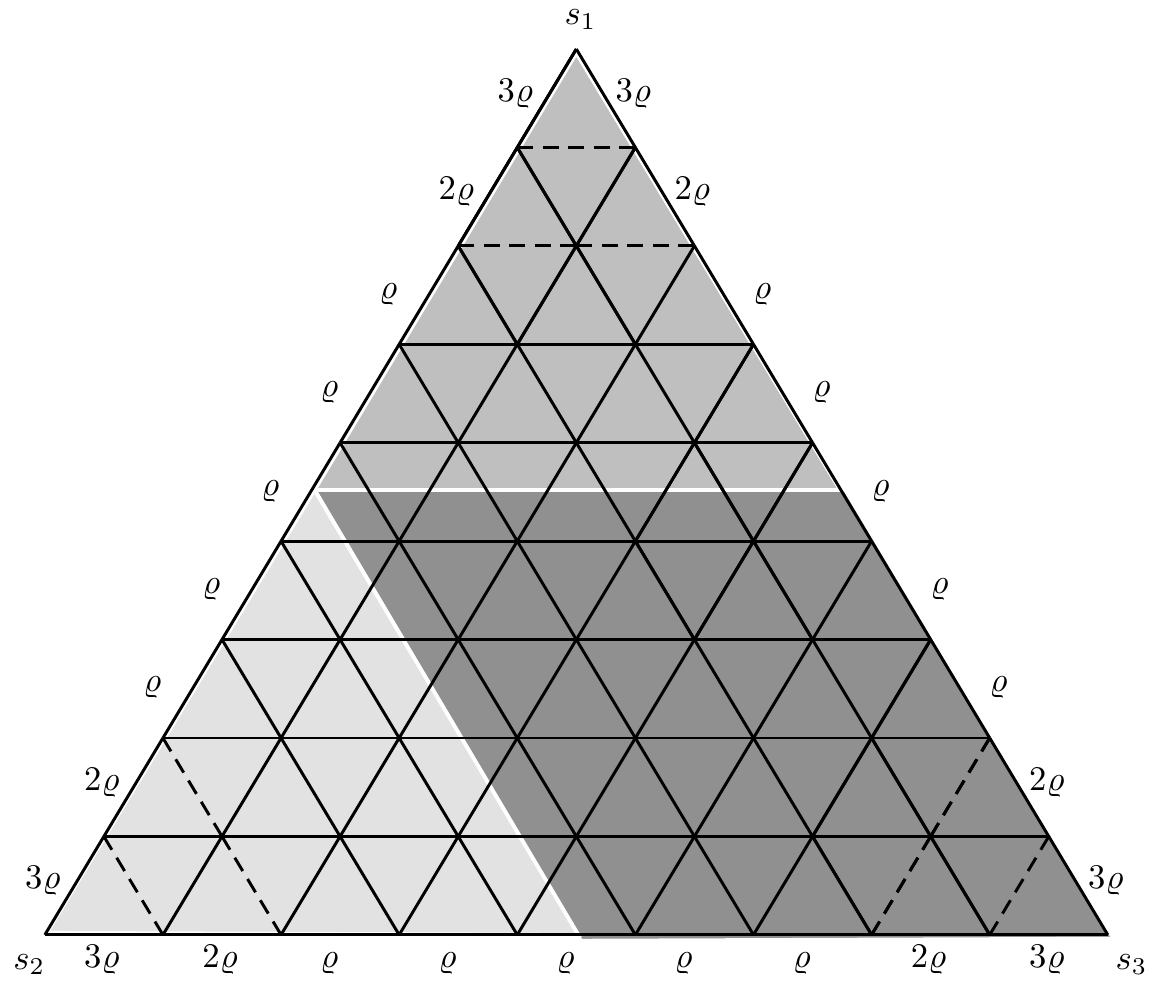}
\caption{The labeling $Q_0$.}
\label{fig:tight}
\end{figure}

Then, each edge in $\delta(Q_0)$ has a cost of $1.2/(2n)$ and the number of edges in $\delta(Q_0)$ is $2n+1$. Hence, $Cost_J(\delta(Q_0))\le 1.2+1/(2n)$. 
Moreover, $Q_0$ is a non-opposite non-fragmenting cut.  We now combine $\delta(A)$ and $\delta(Q_0)$ into a single cut by defining
\begin{equation*}
Q'_0(x):=
\begin{cases}
Q_0(x) & \text{if $x\not\in A$,}\\
4 & \text{otherwise.}
\end{cases}
\end{equation*}

We observe that $Q'_0$ is a non-opposite cut as it is obtained from a non-opposite cut by relabeling a subset of nodes that lie in the strict interior of $\closure(R_i)$ as $4$. As $A_i\neq\emptyset$ implies $\delta(A_i)\cap \Gamma_i=\emptyset$, we have that $\delta(Q'_0)$ intersects each side of the triangle the same number of times as $\delta(Q_0)$. That is, $Q'_0$ is also a non-fragmenting cut.
Therefore, we can apply Lemma~\ref{lemma:3-way-instance-properties} for $Q_0'$. The number of nodes labeled by $Q_0'$ as $4$ is exactly $|A|$. Hence,
\[
Cost_J(\delta(Q_0'))\ge 1.2-\errorterm+\ctwo\frac{|A|}{n^2}.
\]
By $Q_0(v)=i$ for each $v\in\closure(R_i)$ and by $\delta(A_i)\cap \Gamma_i=\emptyset$ for $i\in[3]$, we have $\delta(Q'_0)=\delta(A)\cup\delta(Q_0)$, implying
\[
Cost_J(\delta(A))+Cost_J(\delta(Q_0))\ge Cost_J(\delta(A)\cup\delta(Q_0))=Cost_J(\delta(Q_0'))\ge 1.2-\errorterm+\ctwo\frac{|A|}{n^2}.
\]
We recall that $Cost_J(\delta(Q_0))\leq 1.2+1/(2n)$. Hence, $Cost_J(\delta(A))\ge \ctwo|A|/n^2-3/(2n)$.
\end{proof}

Let $K\subseteq [3]$ denote the set of indices $i$ for which $\delta(Q)\cap \Gamma_i=\emptyset$ and let $Q'$ be a labeling obtained from $Q$ by setting
\[
Q'(v):=
\begin{cases}
i & \text{ if $v\in \closure(R_i)$ for some $i\in K$}, \\
Q(v) & \text{ otherwise}.
\end{cases}
\]

\begin{claim} \label{cl:relabel}
$Q'$ is a non-opposite cut with $Cost_J(\delta(Q'))\leq Cost_J(\delta(Q))-Cost_J(\delta(A))$.
\end{claim}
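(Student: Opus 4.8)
\textbf{Proof plan for Claim~\ref{cl:relabel}.} The plan is to deduce both assertions from two set inclusions:
\[
\delta(A)\subseteq\delta(Q)\qquad\text{and}\qquad \delta(Q')\subseteq\delta(Q)\setminus\delta(A).
\]
Granting these, $Cost_J(\delta(Q'))\le Cost_J(\delta(Q)\setminus\delta(A)) = Cost_J(\delta(Q))-Cost_J(\delta(A))$, where the last step uses $\delta(A)\subseteq\delta(Q)$. Non-oppositeness of $Q'$ is then immediate: $Q'$ agrees with $Q$ except on nodes $v\in\closure(R_i)$ with $i\in K$, which receive label $i$; since such $v$ has $v_i\ge 1-c>0$, we get $i\in\supp(v)$, so $Q'(v)\in\supp(v)$, and $Q'(s_j)=j$ still holds because $s_j=e^j$ lies in $\closure(R_i)$ only when $i=j$, and $Q(s_j)=j$ already.

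The geometric backbone is an ``insulation'' observation for each $i\in K$. Since $i\in K$ means $\delta(Q)\cap\Gamma_i=\emptyset$, and $(R_i,\Gamma_i)$ is a cycle (Claim~\ref{claim:nonempty-red-two-red}) containing $s_i=e^i$, every node of $R_i$ carries label $i$ under $Q$. Hence $A_i\subseteq\closure(R_i)\setminus R_i$, i.e.\ $A_i$ lies in the strict interior of the small triangle $\closure(R_i)$; because $cn$ is an integer, every $v\in A_i$ has $v_i\ge 1-c+1/n$. Therefore every neighbour $y$ of such a $v$ in $\EFace$ satisfies $y_i\ge 1-c$ and $y_4=0$, so $y\in\closure(R_i)$. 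In other words, every edge of $\delta(A_i)$ has both endpoints in $\closure(R_i)$. Moreover, since $c<1/2$, the triangles $\closure(R_1),\closure(R_2),\closure(R_3)$ are pairwise disjoint and non-adjacent, so $\delta(A)=\delta(A_1)\cup\delta(A_2)\cup\delta(A_3)$ and no relabeling ambiguity arises.

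The three inclusions now follow. For $\delta(A)\subseteq\delta(Q)$: take $uv$ with $u\in A_i$ and $v\notin A$; if $Q(v)=4$ then the insulation property gives $v\in\closure(R_i)$, hence $v\in A_i\subseteq A$, a contradiction, so $Q(u)=4\ne Q(v)$ and $uv\in\delta(Q)$. For $\delta(Q')\cap\delta(A)=\emptyset$: any edge of $\delta(A_i)$ has both endpoints in $\closure(R_i)$, both of which $Q'$ sends to $i$, so it is not in $\delta(Q')$. For $\delta(Q')\subseteq\delta(Q)$: let $uv\in\delta(Q')$ and suppose some endpoint, say $u$, was relabeled, so $u\in\closure(R_i)$ with $i\in K$ and $Q(u)\ne i$; then $u\notin R_i$ (nodes of $R_i$ have $Q$-label $i$), so $u$ is in the strict interior, whence $v\in\closure(R_i)$ by insulation, giving $Q'(u)=Q'(v)=i$ and contradicting $uv\in\delta(Q')$. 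Thus no endpoint of a $Q'$-cut edge is relabeled, so $Q'(u)=Q(u)$ and $Q'(v)=Q(v)$, i.e.\ $uv\in\delta(Q)$. Combining this with $\delta(Q')\cap\delta(A)=\emptyset$ yields $\delta(Q')\subseteq\delta(Q)\setminus\delta(A)$.

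The step requiring the most care is the insulation claim: one must verify both that $A_i$ avoids the boundary cycle $R_i$ (so that its nodes sit strictly inside $\closure(R_i)$) and, using integrality of $cn$, that such interior nodes have \emph{all} their neighbours inside $\closure(R_i)$. This single fact is what lets us simultaneously conclude that relabeling erases every edge of $\delta(A_i)$ and creates no new cut edge crossing out of $\closure(R_i)$; and it is precisely the hypothesis $\delta(Q)\cap\Gamma_i=\emptyset$ that makes it work, since that is what fixes the label of the whole boundary cycle to be $i$.
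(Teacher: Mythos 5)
Your proposal is correct and follows essentially the same route as the paper's (much terser) proof: non-oppositeness because relabeled nodes receive a label in their support, and the cost bound via $\delta(Q')\subseteq\delta(Q)\setminus\delta(A)$, which rests on the observation that $\delta(Q)\cap\Gamma_i=\emptyset$ forces $Q(v)=i$ on all of the cycle $R_i$. Your ``insulation'' argument simply makes explicit the details (that $A_i$ lies strictly inside $\closure(R_i)$, that $\delta(A)\subseteq\delta(Q)$, and that relabeling creates no new cut edges) which the paper leaves implicit.
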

\begin{proof}
The cut $Q$ is non-opposite and $Q'(v)\in \supp(v)$ for each relabeled node $v$, hence $Q'$ is also a non-opposite cut. For any index $i\in K$, $\delta(Q)\cap \Gamma_i=\emptyset$ implies $Q(v)=i$ for $v\in R_i$. Thus, we have $\delta(Q')\subseteq \delta(Q)\setminus \delta(A)$ and the claim follows.
\end{proof}

As $Q'$ is a non-opposite cut, Lemma \ref{lemma:non_opposite_cut_cost} implies $Cost_J(\delta(Q'))\geq 1.2-1/n$. By Claim~\ref{cl:S}, $Cost_J(\delta(A))\geq \ctwo \sum_{i=1}^3|A_i|/n^2-3/(2n)$. These together with Claim~\ref{cl:relabel} imply that $Cost_J(\delta(Q))\geq 1.2-5/(2n)+\ctwo \sum_{i=1}^3|A_i|/n^2$, finishing the proof of the corollary.
\end{proof}

\section{Limitations of our instances}\label{sec:limitations}
In this section, we will show two results. In Section \ref{sec:insufficiency-of-3-instances}, we will show that instances $I_1$, $I_2$ and $I_4$ are insufficient to obtain a gap better than $1.2$. This result also motivates our choice of instance $I_3$. 
In Section \ref{sec:four-instances-limitations}, we will show that instances $I_1$, $I_2$, $I_3$ and $I_4$ are insufficient to obtain a gap better than $1.20067$, thus exhibiting a limitation of our choice of instance $I_3$. 
\subsection{Insufficiency of instances $I_1$, $I_2$ and $I_4$ to beat $1.2$}\label{sec:insufficiency-of-3-instances}
In this section, we will show that convex combinations of instances $I_1$, $I_2$ and $I_4$ are insufficient to obtain an instance which has gap larger than $1.2$ against non-opposite cuts. For this, we will exhibit two non-opposite cuts $P$ and $P'$ such that at least one of them will have cost at most $1.2$ in every convex combination of instances $I_1$, $I_2$ and $I_4$. 
\begin{enumerate}
\item Consider the cut $Q_0$ in $\Delta_{3,n}$ defined in the proof of Claim \ref{cl:S}. 
Extend it to a cut $P$ in $\Delta_{4,n}$ as follows:
\begin{align*}
P(x):=
\begin{cases}
Q_0(x) & \text{ if $x_4=0$},\\
4 &\text{ if $x_4>0$}.
\end{cases}
\end{align*}
Then, $P$ is a non-opposite cut with the cost of $P$ on $I_1$, $I_2$ and $I_4$ being $1.2+O(1/n)$, $1$ and $1.5+O(1/n^2)$ respectively. Hence, the cost of $P$ on the convex combination $\lambda_1 I_1+\lambda_2 I_2 + \lambda_4 I_4$ is at most $1.2\lambda_1 + \lambda_2 + 1.5\lambda_4+O(1/n)$. 
\item Consider the cut $P'$ in $\Delta_{4,n}$ defined as follows:
\begin{align*}
P'(x):=
\begin{cases}
i &\text{ if $x=e^i$},\\
5 &\text{ otherwise}.
\end{cases}
\end{align*}
Then, $P'$ is a non-opposite cut with the costs of $P'$ on $I_1$, $I_2$ and $I_4$ being $1.2$, $2$ and $O(1/n^2)$ respectively. Hence, the cost of $P'$ on the convex combination $\lambda_1 I_1+\lambda_2 I_2 + \lambda_4 I_4$ is at most $1.2\lambda_1 + 2\lambda_2+O(1/n^2)$. 
\end{enumerate}
Consequently, for every convex combination defined by $\lambda_1, \lambda_2, \lambda_4$, there exists a non-opposite cut whose cost on the convex combination instance $\lambda_1 I_1 + \lambda_2 I_2 + \lambda_4 I_4$ is at most 
\[
\min \left\{1.2\lambda_1 + \lambda_2 + 1.5\lambda_4, 1.2\lambda_1 + 2\lambda_2\right\}+O(1/n).
\]
The following claim shows that the above expression is at most $1.2+O(1/n^2)$ for every convex combination. 
\begin{claim}
For every $\lambda_1, \lambda_2, \lambda_4\ge 0$ with $\lambda_1+\lambda_2+\lambda_4=1$, we have 
\[
\min \left\{1.2\lambda_1 + \lambda_2 + 1.5\lambda_4, 1.2\lambda_1 + 2\lambda_2\right\} \le 1.2.
\]
\end{claim}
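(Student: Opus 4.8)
The plan is to dominate the minimum of the two affine functions by a fixed weighted average of them, using the elementary fact that $\min\{a,b\}\le \theta a+(1-\theta)b$ for every $\theta\in[0,1]$ and all reals $a,b$. Thus it suffices to exhibit one $\theta\in[0,1]$ for which, on the affine set $\lambda_1+\lambda_2+\lambda_4=1$ with $\lambda_i\ge 0$,
\[
\theta\,(1.2\lambda_1+\lambda_2+1.5\lambda_4)+(1-\theta)\,(1.2\lambda_1+2\lambda_2)\le 1.2 .
\]

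To pin down $\theta$, I would expand the left-hand side as $1.2\lambda_1+(2-\theta)\lambda_2+1.5\theta\,\lambda_4$ and, substituting $1.2=1.2\lambda_1+1.2\lambda_2+1.2\lambda_4$, rewrite the target inequality as
\[
(0.8-\theta)\lambda_2+(1.5\theta-1.2)\lambda_4\le 0 .
\]
Since $\lambda_2,\lambda_4\ge 0$, it is enough to make both coefficients nonpositive; the $\lambda_2$-coefficient forces $\theta\ge 0.8$ and the $\lambda_4$-coefficient forces $\theta\le 0.8$, so the choice is pinned to $\theta=4/5$. With $\theta=4/5$ both coefficients vanish, so the weighted average equals $1.2\lambda_1+1.2\lambda_2+1.2\lambda_4=1.2$ identically on the simplex, which yields $\min\{1.2\lambda_1+\lambda_2+1.5\lambda_4,\ 1.2\lambda_1+2\lambda_2\}\le 1.2$, as claimed.

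There is essentially no obstacle here: the whole content is the observation that $\theta=4/5$ turns the convex combination of the two cut-cost expressions into the constant $1.2$, and the short computation above explains why that is the only $\theta$ that can work. An alternative would be a direct case split according to which of the two terms attains the minimum, but the averaging argument is shorter and coordinate-free. It also makes transparent why instances $I_1,I_2,I_4$ cannot beat $1.2$: the cuts $P$ and $P'$ from the two items above realize the two terms, and the identity $\tfrac45 a+\tfrac15 b\equiv 1.2$ (up to $O(1/n)$) holds on every convex combination $\lambda_1 I_1+\lambda_2 I_2+\lambda_4 I_4$.
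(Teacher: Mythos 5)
Your proof is correct. The paper proves the claim by contradiction (assuming both terms exceed $1.2$ yields $3\lambda_4>2\lambda_2$ from the first and $2\lambda_2>3\lambda_4$ from the second), which is precisely the dual phrasing of your convex-combination certificate with $\theta=4/5$, so the mathematical content is the same; if anything, your averaging argument is the more uniform choice, since the paper itself uses exactly that technique (``minimum of a set of values is at most the convex combination of the values'') for the analogous claim about all four instances.
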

\begin{proof}
Say not. Then, both expressions are greater than $1.2$. 
\begin{enumerate}
\item We have $1.2\lambda_1 + \lambda_2 + 1.5\lambda_4>1.2$ which implies that $\lambda_2 + 1.5\lambda_4>1.2(1-\lambda_1)=1.2\lambda_2 + 1.2\lambda_4$. Hence, $3\lambda_4>2\lambda_2$. 
\item We have $1.2\lambda_1 + 2\lambda_2>1.2$, which implies that $2\lambda_2>1.2(1-\lambda_1)=1.2\lambda_2 + 1.2\lambda_4$. Hence, $2\lambda_2>3\lambda_4$, a contradiction. 
\end{enumerate}
\end{proof}

We emphasize that instance $I_3$ is constructed specifically to boost the cost against the non-opposite cut $P'$. 
\subsection{Limitation of instances $I_1,I_2,I_3,$ and $I_4$}\label{sec:four-instances-limitations}

In this section, we will show that convex combinations of instances $I_1,I_2,I_3,$ and $I_4$ are insufficient to obtain an instance which has gap larger than $1.20067$ against non-opposite cuts. 

\begin{proof}[Proof of Theorem \ref{thm:limitation}]
To show this, we exhibit three non-opposite cuts $P_1,P_2$ and $P_3$ such that at least one of them will have cost at most $1.20067 + O(1/n)$ in every convex combination of instances $I_1,I_2,I_3$, and $I_4$.  
\begin{enumerate}
\item $P_1$ is same as the non-opposite cut $P$ defined in Section \ref{sec:insufficiency-of-3-instances}. The cost of $P_1$ on instances $I_1, I_2,I_3$, and $I_4$ are $1.2+O(1/n), 1, 0,$ and $1.5+O(1/n^2)$ respectively. Consequently, the cost of $P_1$ on the convex combination $\lambda_1I_1 + \lambda_2 I_2 + \lambda_3 I_3 + \lambda_4 I_4$ is at most $1.2\lambda_1 + \lambda_2 + 1.5 \lambda_4 + O(1/n)$. 
\item $P_2$ is same as the non-opposite cut $P'$ defined in Section \ref{sec:insufficiency-of-3-instances}. The cost of $P_2$ on $I_1,I_2,I_3$, and $I_4$ are $1.2, 2, 6/9c$, and $O(1/n^2)$ respectively. Consequently, the cost of $P_2$ on the convex combination $\lambda_1 I_1 + \lambda_2 I_2 + \lambda_3 I_3 + \lambda_4 I_4$ is at most $1.2 \lambda_1 + 2\lambda_2 + \frac{6}{9c} \lambda_3 + O(1/n^2)$. 
\item Let $P_3$ be defined as follows:
\begin{align*}
P_3(x):=
\begin{cases}
4 & \text{ if $x = e^4$,}\\
i &\text{ if $x_4 = 0, x_i\geq 1-c, i \in\{1,2,3\}$},\\
5 &\text{ otherwise}.
\end{cases}
\end{align*}
Then, $P_3$ is a non-opposite cut with the cost of $P_3$ on $I_2,I_3$, and $I_4$ being $2,0,$ and $9c^2/2 + O(1/n)$ respectively. Moreover, if $c< 1/9$, then cost of $P_3$ on $I_1$ is $1.2$. Hence, if $c<1/9$, then the cost of $P_3$ on the convex combination $\lambda_1 I_1 + \lambda_2 I_2 + \lambda_3 I_3 + \lambda_4 I_4$ is at most $1.2 \lambda_1 + 2 \lambda_2 + (9c^2/2) \lambda_4 + O(1/n)$. 
\end{enumerate}
Consequently, there exists a non-opposite cut on instance $\lambda_1 I_1 + \lambda_2 I_2 + \lambda_3 I_3 + \lambda_4 I_4$ whose cost is 
\begin{align*}
&\text{at most } \min \left\{1.2 \lambda_1 + \lambda_2 + 1.5 \lambda_4, 1.2 \lambda_1 + 2\lambda_2 + \frac{6}{9c} \lambda_3\right\} + O(1/n)\ \text{if $c \ge 1/9$ and} \\ 
&\text{at most } \min \left\{1.2 \lambda_1 + \lambda_2 + 1.5 \lambda_4, 1.2 \lambda_1 + 2\lambda_2 + \frac{6}{9c} \lambda_3, 1.2 \lambda_1 + 2\lambda_2 + \frac{9c^2}{2} \lambda_4\right\} + O(1/n) \ \text{if $c < 1/9$}.
\end{align*}
The following claim shows that the above terms are at most $1.20067 + O(1/n)$ for every convex combination, thus completing the proof of the theorem.
\end{proof}

\begin{claim}
For every $\lambda_1,\lambda_2,\lambda_3,\lambda_4 \geq 0$ with $\lambda_1 + \lambda_2 + \lambda_3 + \lambda_4 = 1$, we have 
\begin{enumerate}
\item $\min \left\{1.2 \lambda_1 + \lambda_2 + 1.5 \lambda_4, 1.2 \lambda_1 + 2\lambda_2 + \frac{6}{9c} \lambda_3\right\} \leq 1.2$ if $c\geq 1/9$, and 
\item $\min \left\{1.2 \lambda_1 + \lambda_2 + 1.5 \lambda_4, 1.2 \lambda_1 + 2\lambda_2 + \frac{6}{9c} \lambda_3, 1.2 \lambda_1 + 2\lambda_2 + \frac{9c^2}{2} \lambda_4\right\}\leq 1.20067$ if $c<1/9$.
\end{enumerate}
\end{claim}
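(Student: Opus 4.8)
The plan is to prove both inequalities by the elementary device of bounding a minimum by a convex combination: if $\nu_1,\nu_2,\nu_3\ge 0$ sum to $1$ then $\min\{A,B,C\}\le \nu_1 A+\nu_2 B+\nu_3 C$, where I abbreviate
$A:=1.2\lambda_1+\lambda_2+1.5\lambda_4$, $B:=1.2\lambda_1+2\lambda_2+\tfrac{6}{9c}\lambda_3$, and $C:=1.2\lambda_1+2\lambda_2+\tfrac{9c^2}{2}\lambda_4$. Collecting the coefficient of each $\lambda_i$ in $\nu_1 A+\nu_2 B+\nu_3 C$ gives $1.2$ for $\lambda_1$, $2-\nu_1$ for $\lambda_2$, $\tfrac{2\nu_2}{3c}$ for $\lambda_3$, and $1.5\nu_1+\tfrac{9c^2}{2}\nu_3$ for $\lambda_4$; since $\lambda\ge 0$ and $\lambda_1+\lambda_2+\lambda_3+\lambda_4=1$, the combination is at most the largest of these four numbers. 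So in each part it suffices to pick $\nu$ making all four numbers at most the target value.

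For part $1$ (the case $c\ge 1/9$, where $C$ does not occur, so $\nu_3=0$) I would take $\nu=\bigl(\tfrac45,\tfrac15,0\bigr)$. The four coefficients are then $1.2$, $\ 2-\tfrac45=1.2$, $\ \tfrac{2/5\cdot 2}{3c}=\tfrac{2}{15c}$, and $1.5\cdot\tfrac45=1.2$; and $\tfrac{2}{15c}\le 1.2$ is precisely the hypothesis $c\ge 1/9$. Hence $\min\{A,B\}\le\tfrac45A+\tfrac15B\le 1.2$. (Equivalently, by contradiction: assuming $A,B>1.2$ and eliminating $\lambda_1$ via $\lambda_1=1-\lambda_2-\lambda_3-\lambda_4$, the inequality $A>1.2$ gives $3\lambda_4>2\lambda_2+12\lambda_3$ and $B>1.2$ gives $0.8\lambda_2+\bigl(\tfrac{2}{3c}-1.2\bigr)\lambda_3>1.2\lambda_4$; substituting the first bound into the second forces $\tfrac{2}{3c}-1.2>4.8$, i.e.\ $c<1/9$, a contradiction; the sub-case $\lambda_3=0$ is immediate.)

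For part $2$ (the case $c<1/9$) I would set $\eta:=0.00067$ and choose $\nu_1:=0.8-\eta$, $\ \nu_2:=\tfrac32(1.2+\eta)\,c$, $\ \nu_3:=1-\nu_1-\nu_2$. First one checks this is a legitimate convex combination on $(0,1/9)$: clearly $\nu_1,\nu_2\ge 0$, and $\nu_3=0.2+\eta-\tfrac32(1.2+\eta)c>0$ because the root of the affine function $0.2+\eta-\tfrac32(1.2+\eta)c$ is at $c=\tfrac{2(0.2+\eta)}{3(1.2+\eta)}\approx 0.1114>1/9$. With this choice the $\lambda_2$- and $\lambda_3$-coefficients are both exactly $2-\nu_1=1.2+\eta$ and $\tfrac{2\nu_2}{3c}=1.2+\eta$, the $\lambda_1$-coefficient is $1.2<1.2+\eta$, and the only remaining task is to verify that the $\lambda_4$-coefficient $1.5\nu_1+\tfrac{9c^2}{2}\nu_3$ is also at most $1.2+\eta$. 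Substituting $\nu_1$ and $\nu_3$, this reduces to the single-variable inequality
\[ 4.5\,(0.2+\eta)\,c^{2}-6.75\,(1.2+\eta)\,c^{3}\ \le\ 2.5\,\eta \qquad\text{for all } 0<c<1/9. \]
The left-hand side is a cubic vanishing at $c=0$ with a unique interior critical point at $c^\star=\tfrac{9(0.2+\eta)}{20.25(1.2+\eta)}\approx 0.0743$ (obtained by differentiating), where its value is $\approx 0.001662$, below $2.5\eta=0.001675$. Hence all four coefficients are at most $1.2+\eta=1.20067$, so $\min\{A,B,C\}\le\nu_1A+\nu_2B+\nu_3C\le 1.20067$.

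The delicate point, and the main obstacle, is that last verification: the slack is only about $1.3\times 10^{-5}$, so $1.20067$ is essentially the exact value of $\max_{\lambda}\min\{A,B,C\}$ and the arithmetic must be handled carefully. As a consistency check one can solve that maximization directly: eliminating $\lambda_1$, the optimum occurs where $A=B=C$ and $\lambda_1=0$, which forces $\lambda_2=\bigl(1.5-\tfrac{9c^2}{2}\bigr)\lambda_4$ and $\lambda_3=\tfrac{27c^3}{4}\lambda_4$, yielding common value $1.2+\dfrac{0.9\,c^2(1-9c)}{\,2.5-\tfrac92c^2+\tfrac{27}{4}c^3\,}$, whose maximum over $c\in(0,1/9)$ is attained near $c=2/27$ and equals $\approx 1.200664<1.20067$. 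The convex-combination certificate above is shorter and sidesteps the need to justify optimality of this vertex.
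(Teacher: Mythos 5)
Your proof is correct and follows essentially the same route as the paper: part 1 uses the identical convex combination $(0.8,0.2)$, and in part 2 your multipliers $(\nu_1,\nu_2,\nu_3)$ coincide exactly with the paper's $(2-\beta,\ 3c\beta/2,\ \beta-3c\beta/2-1)$ upon setting $\beta=1.2+\eta=1.20067$, and your reduced inequality $4.5(0.2+\eta)c^2-6.75(1.2+\eta)c^3\le 2.5\eta$ is algebraically the same as the paper's inequality (\ref{eq:beta-defn}) defining $\beta$. The only difference is presentational: the paper defines $\beta$ as the maximum of $\frac{3-9c^2/2}{2.5-9c^2/2+27c^3/4}$ and asserts $\beta\le 1.20067$ without details, whereas you carry out the single-variable optimization explicitly, which is a welcome filling-in of the ``straightforward to verify'' step.
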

\begin{proof}
Minimum of a set of values is at most the convex combination of the values. 
\begin{enumerate}
\item Suppose $c \geq 1/9$. Then, 
\begin{align*}
&\min \left\{1.2 \lambda_1 + \lambda_2 + 1.5 \lambda_4, 1.2 \lambda_1 + 2\lambda_2 + \frac{6}{9c} \lambda_3\right\}\\
&\quad \le 0.8 (1.2 \lambda_1 + \lambda_2 +1.5 \lambda_4) + 0.2 \left(1.2 \lambda_1 + 2\lambda_2 + \frac{6}{9c} \lambda_3\right)\\
&\quad = 1.2 \lambda_1 + 1.2 \lambda_2 + 1.2 \lambda_4 + \frac{1.2}{9c} \lambda_3 \\
&\quad \leq 1.2 \lambda_1 + 1.2 \lambda _2 + 1.2 \lambda_4 + 1.2 \lambda_3 & \qquad (\text{since } c\geq 1/9)\\
&\quad = 1.2. 
\end{align*}
The last equality above is because $\lambda_1 + \lambda_2 + \lambda_3 + \lambda_4 = 1$.

\item Suppose $c<1/9$. 
Let 
\[
\beta := \max\left\{\frac{3-9c^2/2}{2.5 - 9c^2/2 + 27c^3/4}:\ 0 \leq c <\frac{1}{9}\right\}.
\]
Then, it is straightforward to verify that $1.2\le \beta \leq 1.20067$.  For $c<1/9$, it follows that $\beta (1-3c/2) \geq 1$. Hence, the multipliers $(2-\beta), 3c\beta/2,\beta-3c\beta/2-1$ are non-negative and sum to one. Therefore, 
\begin{align}
&\min \left\{1.2 \lambda_1 + \lambda_2 + 1.5 \lambda_4, 1.2 \lambda_1 + 2\lambda_2 + \frac{6}{9c} \lambda_3, 1.2 \lambda_1 + 2\lambda_2 + \frac{9c^2}{2} \lambda_4\right\} \notag\\
&\quad \le (2-\beta)(1.2 \lambda_1 + \lambda_2 + 1.5 \lambda_4) + \left(\frac{3c\beta}{2}\right) \left(1.2 \lambda_1 + 2\lambda_2 + \frac{6}{9c} \lambda_3\right) \notag\\
&\quad \qquad \qquad+ \left(\beta-\frac{3c\beta}{2}- 1\right) \left(1.2 \lambda_1 + 2\lambda_2 + \frac{9c^2}{2} \lambda_4\right)\notag\\
&\quad =1.2 \lambda_1 + \beta \lambda_2 + \beta \lambda_3 + \left(3 - \frac{9c^2}{2} - \beta \left(1.5-\frac{9c^2}{2} +\frac{27c^3}{4}\right) \right)\lambda_4\notag\\
&\quad \leq 1.2 \lambda_1 + \beta\lambda_2 + \beta \lambda_3 + \beta \lambda_4\label{eq:beta-defn}\\
&\quad \leq \beta \label{eq:beta-LB}\\
&\quad \leq 1.20067. \notag
\end{align}
Inequality (\ref{eq:beta-defn}) follows from the definition of $\beta$ and inequality (\ref{eq:beta-LB}) follows from the fact that $1.2 \leq \beta$ and $\lambda_1 + \lambda_2 + \lambda_3 + \lambda_4 = 1$.
\end{enumerate}
\end{proof}

\section*{Acknowledgements}

Tam\'{a}s was supported by the Hungarian National Research, Development and Innovation Office -- NKFIH grant K120254. Krist\'of was supported by the \'UNKP-18-4 New National Excellence Program of the Ministry of Human Capacities.

\bibliographystyle{alpha}
\bibliography{references}

\newcommand{\etalchar}[1]{$^{#1}$}
\begin{thebibliography}{MNRS08}

\bibitem[AMM17]{AMM17}
H.~Angelidakis, Y.~Makarychev, and P.~Manurangsi.
\newblock An improved integrality gap for the
  {C}{\u{a}}linescu-{K}arloff-{R}abani relaxation for multiway cut.
\newblock In {\em Integer Programming and Combinatorial Optimization}, IPCO
  '17, pages 39--50, 2017.

\bibitem[BNS13]{BNS13}
N.~Buchbinder, J.~Naor, and R.~Schwartz.
\newblock Simplex partitioning via exponential clocks and the multiway cut
  problem.
\newblock In {\em Proceedings of the Forty-fifth Annual ACM Symposium on Theory
  of Computing}, STOC '13, pages 535--544, 2013.

\bibitem[BSW17]{BSW17}
N.~Buchbinder, R.~Schwartz, and B.~Weizman.
\newblock Simplex transformations and the multiway cut problem.
\newblock In {\em Proceedings of the Twenty-Eighth Annual ACM-SIAM Symposium on
  Discrete Algorithms}, SODA '17, pages 2400--2410, 2017.

\bibitem[CCT06]{CCT06}
K.~Cheung, W.~Cunningham, and L.~Tang.
\newblock Optimal 3-terminal cuts and linear programming.
\newblock {\em Mathematical Programming}, 106(1):1--23, Mar 2006.

\bibitem[CKR00]{CKR00}
G.~C{\u{a}}linescu, H.~Karloff, and Y.~Rabani.
\newblock An improved approximation algorithm for multiway cut.
\newblock {\em Journal of Computer and System Sciences}, 60(3):564 -- 574,
  2000.

\bibitem[DJP{\etalchar{+}}94]{DJPSY94}
E.~Dahlhaus, D.~Johnson, C.~Papadimitriou, P.~Seymour, and M.~Yannakakis.
\newblock The complexity of multiterminal cuts.
\newblock {\em SIAM Journal on Computing}, 23(4):864--894, 1994.

\bibitem[FK00]{FK00}
A.~Freund and H.~Karloff.
\newblock {A lower bound of 8/(7+1/(k-1)) on the integrality ratio of the
  {C}{\u{a}}linescu-{K}arloff-{R}abani relaxation for multiway cut}.
\newblock {\em Information Processing Letters}, 75(1):43 -- 50, 2000.

\bibitem[KKS{\etalchar{+}}04]{KKSTY04}
D.~Karger, P.~Klein, C.~Stein, M.~Thorup, and N.~Young.
\newblock Rounding algorithms for a geometric embedding of minimum multiway
  cut.
\newblock {\em Mathematics of Operations Research}, 29(3):436--461, 2004.

\bibitem[MNRS08]{MNRS08}
R.~Manokaran, J.~Naor, P.~Raghavendra, and R.~Schwartz.
\newblock {SDP} gaps and {UGC} hardness for multiway cut, 0-extension, and
  metric labeling.
\newblock In {\em Proceedings of the Fortieth Annual ACM Symposium on Theory of
  Computing}, STOC '08, pages 11--20, 2008.

\bibitem[MV15]{MV15}
M.~Mirzakhani and J.~Vondr\'{a}k.
\newblock Sperner's colorings, hypergraph labeling problems and fair division.
\newblock In {\em Proceedings of the Twenty-sixth Annual ACM-SIAM Symposium on
  Discrete Algorithms}, SODA '15, pages 873--886, 2015.

\bibitem[SV14]{SV14}
A.~Sharma and J.~Vondr\'{a}k.
\newblock Multiway cut, pairwise realizable distributions, and descending
  thresholds.
\newblock In {\em Proceedings of the Forty-sixth Annual ACM Symposium on Theory
  of Computing}, STOC '14, pages 724--733, 2014.

\end{thebibliography}

\end{document}